\title[An infinite-dimensional Schr\"{o}dinger-Lohe hierarchy]{On the Schr\"{o}dinger-Lohe hierarchy for aggregation  and its emergent dynamics}
\author[Ha]{Seung-Yeal Ha}
\address[Seung-Yeal Ha]{\newline Department of Mathematical Sciences\newline Seoul National University, Seoul 08826 and \newline
Korea Institue for Advanced Study, Hoegiro 85, Seoul 02455, Republic of Korea}
\email{syha@snu.ac.kr}
\author[Park]{Hansol Park}
\address[Hansol Park]{\newline Department of Mathematical Sciences\newline Seoul National University, Seoul 08826, Republic of Korea}
\email{hansol960612@snu.ac.kr}
\newtheorem{theorem}{Theorem}[section]
\newtheorem{lemma}{Lemma}[section]
\newtheorem{proposition}{Proposition}[section]
\newtheorem{remark}{Remark}[section]
\newcommand{\bbr}{\mathbb R}
\newcommand{\bbz}{\mathbb Z}
\newcommand{\bbt} {\mathbb T}
\newcommand{\bbc}{\mathbb C}
\begin{document}

\date{\today}

\subjclass{82C10 82C22 35B37} \keywords{Aggregation, emergence, Kuramoto model, Lohe tensor model, quantum synchronization,tensors}

\thanks{\textbf{Acknowledgment.} The work of S.-Y.Ha is supported by NRF-2020R1A2C3A01003881}

\begin{abstract}
The Lohe hierarchy is a hierarchy of finite-dimensional aggregation models consisting of the Kuramoto model, the complex Lohe sphere model, the Lohe matrix model and the Lohe tensor model. In contrast, the Schr\"{o}dinger-Lohe model is the only known infinite-dimensional Lohe aggregation model in literature. In this paper, we provide an explicit connection between the Schr\"{o}dinger-Lohe model and the complex Lohe sphere model, and then by exploiting this explicit relation, we construct infinite-dimensional liftings of the Lohe matrix and the Lohe tensor models. In this way, we establish the Schr\"{o}dinger-Lohe hierarchy which corresponds to the infinite-dimensional extensions of the Lohe hierarchy. For the proposed hierarchy, we provide sufficient frameworks leading to the complete aggregation in terms of coupling strengths and initial configurations.
\end{abstract}

\maketitle \centerline{\date}


\section{Introduction} \label{sec:1}
\setcounter{equation}{0}
Collective behaviors often appear in classical and quantum many-body systems, e.g., aggregation of bacteria, herding of sheep, schooling of fish, synchronous firing of fireflies and array of Josephson junctions in semiconductors \cite{A-B, A-B-F, D-B1, H-K-P-Z, P-R, St, VZ} etc. Despite of their ubiquity in our nature, model-based studies on the collective dynamics were first begun only in a half century ago by two pioneers, Arthur Winfree and Yoshiki Kuramoto \cite{Ku1, Ku2, Wi2, Wi1}.Recently,  due to applications in the control of drones, self-driving cars and sensor networks, research on the collective dynamics has received lots of attentions from diverse scientific disciplines such as applied mathematics, biology, control theory and statistical physics, etc.

In this paper, we are interested in the Kuramoto model \cite{C-H-J-K, C-S, H-L-X} and its high-dimensional extensions such as  the Lohe sphere model \cite{C-C-H,C-H5, H-K-R, M-T-G}, the Lohe matrix model \cite{H-R, Lo-0, Lo-1, Lo-2}. See \cite{B-C-S,D-F-M-T,D-F-M, De, M-T-G, T-M, Zhu} for other related models. Aforementioned aggregation models were further extended to the ensemble of Lohe tensors by the authors in \cite{H-P3} in which we call it as {\it the Lohe tensor model} which completes the Lohe hierarchy (LH) comprising of finite-dimensional aggregation models such as the Kuramoto model, the Lohe sphere(LS) model, the Lohe matrix(LM) model and the Lohe tensor(LT) models, whereas in an infinite-dimensional setting, the Schr\"{o}dinger-Lohe(SL) model is the only known Lohe type aggregation model so far. In what follows, we address the following two questions: \newline
\begin{itemize}
\item
(Q1):~What is the connection between the SL model and the finite-dimensional aggregation models in the LH? 

\vspace{0.2cm}

\item
(Q2):~If such connection exists, can we establish a Schr\"{o}dinger-Lohe hierarchy(SLH) consisting of the infinite-dimensional analogs of the aggregation models in LH?
\end{itemize}
\vspace{0.2cm}

The main results of this paper are positive answers for the above two questions. First, we provide an explicit connection between the SL model and the LS model. If  the solution to the SL model is expanded in terms of a basis consisting of suitable standing wave solutions, we show that coefficients satisfy the Lohe sphere model on $(L^2 \cap L^{\infty})(\bbz_+)$. 
Second, we employ the idea of connecting the SL model and LS model to introduce infinite-dimensional analogs of the LM and LT models which will be coined as the SL matrix and SL tensor models. Since the details are rather messy, we will not go into the details here and we instead leave the detailed results in Section \ref{sec:4} and Section \ref{sec:5}. In this manner, we establish the Schr\"{o}dinger-Lohe hierarchy. \newline

The rest of this paper is organized as follows. In Section \ref{sec:2}, we review minimal materials on tensors which are enough to understand the rest of paper, and then briefly review the Lohe hierarchy consisting of the Kuramoto model, complex Lohe sphere model, the generalized Lohe matrix model and the Lohe tensor model, and study basic properties such as conservation law and solution splitting property of each model. In Section \ref{sec:3}, we study a priori estimates on the SL model and propose a new SL type model with rotational couplings and then discuss its connection with the Kuramoto mdoel.
In Section \ref{sec:4}, we present an explit bridge between the Schr\"{o}dinger Lohe model and the complex Lohe sphere model on $(\ell^2 \cap \ell^{\infty})(\bbz_+)$, and then using this idea of an explicit bridge, we provide an extension of the generalized Lohe matrix model to a Schr\"{o}dinger setting. In Section \ref{sec:5}, we further propose a Schr\"{o}dinger type extension of the Lohe tensor model and finally establish the Schr\"{o}dinger-Lohe hierarchy. In some sense, unlike to the standard approaches such as classical and quantum BBGKY hierarchies in classical and quantum physics which is a top-down approach, whereas our approach is a bottom-up approach to go from low-rank models to high-rank models. Finally, Section \ref{sec:6} is devoted to a brief summary of main results and some remaining issues to be discussed in a future work.

\vspace{0.5cm}

\noindent {\bf Notation}: For complex-valued functions $\psi$ and $\varphi$ in $L^2(\bbt^d)$, we define the inner product and its associated norm as follows:
\[ \langle \psi | \varphi \rangle  = \int_{\mathbb{T}^d} \overline{\psi(x)} \varphi(x) dx, \quad \| \psi \|_{2} :=  \sqrt{\langle\psi|\psi\rangle}. \] 

\section{Preliminaries} \label{sec:2}
\setcounter{equation}{0}
In this section, we briefly review basics material of tensors, tensor space and tensor contraction, and then we introduce the Lohe hierarchy and we review their basic properties such as conservation laws and solution splitting property.

\subsection{Tensors and tensor contraction} \label{sec:2.1}
A {\it tensor} denotes a multi-dimensional array of complex numbers with several indices. Thus, it can be viewed as a generalization of vector and matrix, and the rank of a tensor is the number of indices, i.e., a rank-$m$ tensor of dimensions $d_1\times \cdots\times  d_m$ is an element of ${\mathbb C}^{d_1 \times \cdots \times d_m}$. Hence, a rank-$m$ tensor $T \in \bbc^{d_1 \times \cdots \times d_m}$  can also be identified as a multilinear map from $\bbc^{d_1} \times \cdots \bbc^{d_m}$ to $\bbc$. Complex numbers, complex vectors and complex matrices correspond to rank-0, 1 and 2 tensors, respectively.  

For a rank-$m$ tensor $T$ and a multi-index $\alpha_* = (\alpha_1, \cdots, \alpha_m) \in \{1, \cdots, d_1 \} \times \cdots \times \{1, \cdots, d_m \}$, we denote the $\alpha_*$-th component of $T$ by $[T]_{\alpha_*} = [T]_{\alpha_1 \cdots \alpha_m}$, and we also denote ${\bar T}$ by the rank-$m$ tensor whose components are the complex conjugate of the elements of $T$:
\[ [{\bar T}]_{\alpha_1 \cdots \alpha_m} =\overline{[T]_{\alpha_1 \cdots \alpha_m}}. \]
Finally, we set ${\mathcal T}_m(\bbc): = {\mathcal T}_m(d_1, \cdots, d_m;{\mathbb C})$ to be the set of all rank-$m$ tensors with complex entries and the size $d_1 \times \cdots \times d_m$. One of key basic operations in ${\mathcal T}_m({\mathbb C})$ is a tensor contraction which yields a low-rank tensor by contracting repeated variables in the expressions. Note that the inner product between rank-1 tensors  and matrix product between rank-2 tensors can be defined as special cases of tensor contractions: for $v, w \in {\mathcal T}_1(d_1; \bbc)$ and $A, B  \in {\mathcal T}_2(d_1, d_1;\bbc)$, 
\[ \langle v, w \rangle := [{\bar v}]_\alpha [w]_\alpha,  \qquad  [AB]_{\alpha \beta} :=  [A]_{\alpha \gamma} [B]_{\gamma \beta}. \] 
where we used Einstein summation rule for repeated indices. \newline

\noindent For a rank-$m$ tensor $T \in {\mathcal T}_m(\bbc)$, we also set 
\begin{align*}
\begin{aligned}
& [T]_{\alpha_{*}}:=[T]_{\alpha_{1}\alpha_{2}\cdots\alpha_{m}},\quad [T]_{\alpha_{*0}}:=[T]_{\alpha_{10}\alpha_{20}\cdots\alpha_{m0}}, \quad [T]_{\alpha_{*1}}:=[T]_{\alpha_{11}\alpha_{21}\cdots\alpha_{m1}}, \\
&[T]_{\alpha_{*i_*}}:=[T]_{\alpha_{1i_1}\alpha_{2i_2}\cdots\alpha_{mi_m}}, \qquad  [T]_{\alpha_{*(1-i_*)}}:=[T]_{\alpha_{1(1-i_1)}\alpha_{2(1-i_2)}\cdots\alpha_{m(1-i_m)}}.
\end{aligned}
\end{align*}
Moreover, for a special rank-$2m$ tensor $S \in {\mathcal T}_{2m}(d_1, \cdots, d_m, d_1, \cdots, d_m; \bbc)$, one has 
\[ [S]_{\alpha_*\beta_*}:=[S]_{\alpha_{1}\alpha_{2}\cdots\alpha_{m}\beta_1\beta_2\cdots\beta_{m}}. \]
Next, we define Frobenius inner product, corresponding norm on ${\mathcal T}_m(\bbc)$, ensemble diameter as follows: for a tensor ensemble $\{T_i \} \subset {\mathcal T}_m(\bbc)$, 
\[  \langle T_i, T_j \rangle_F:= [\bar{T}_i]_{\alpha_{*0}} [T_j]_{\alpha_{*0}}, \quad \| T_i \|_F^2 :=  \langle T_i, T_i \rangle_F  \quad \mbox{and} \quad {\mathcal D}(T) := \max_{1 \leq i,j \leq N} \|T_i - T_j \|_F . \]
For an elementary introduction to tensors and elementary tensor operations, we refer to introductory articles \cite{B-C, Or}.

\subsection{The Lohe hierarchy} \label{sec:2.2}
In this subsection, we review the Lohe hierarchy consisting of finite-dimensional Lohe type aggregation models and their basic properties such as conservation laws and solution splitting property:
\[ \mbox{Lohe tensor model}~~\Longrightarrow~~\mbox{Lohe matrix model}~~\Longrightarrow~~\mbox{Lohe sphere model}. \]
\subsubsection{The Lohe tensor model} \label{sec:2.2.1}
Let $\{T_j \}$ be a homogeneous Lohe tensor flock whose dynamics is governed by the following Cauchy problem: 
\begin{align}
\begin{aligned}  \label{B-1}
& \frac{d}{dt}[T_j]_{\alpha_{*0}} =[F]_{\alpha_{*0}\beta_*}[T_j]_{\beta_*} \\
& \hspace{1.5cm} +\sum_{i_* \in \{0,1\}^m}{\kappa_{i_*}}([T_c]_{\alpha_{*i_*}}[\bar{T}_j]_{\alpha_{*1}}[T_j]_{\alpha_{*(1-i_*)}}-[T_j]_{\alpha_{*i_*}}[\bar{T}_c]_{\alpha_{*1}}[T_j]_{\alpha_{*(1-i_*)}}),  \\
& T_j \Big|_{t = 0} = T_j^{in}, \quad j = 1, \cdots, N,
\end{aligned}
\end{align}
where $\kappa_{i_*}$ is a nonnegative coupling strength and $F$ is a skew-hermitian rank-$2m$ tensor in ${\mathcal T}_{2m}(d_1, \cdots, d_m, d_1, \cdots, d_m; \bbc)$ with the following three properties: for $n \in {\mathbb Z}_+$,
\begin{align}
\begin{aligned}   \label{B-2}
&  [F]_{\alpha_*\beta_*}=-[\bar{F_j}]_{\beta_*\alpha_*}, \quad  [F^0]_{\alpha_*\beta_*}=\delta_{\alpha_*\beta_*}, \\
& [F^n]_{\alpha_*\beta_*}=[F]_{\alpha_*\gamma_{1*}}[F]_{\gamma_{1*}\gamma_{2*}}\cdots[F]_{\gamma_{(n-1)*}\beta_{*}}, \quad   [FT]_{\alpha_*}=[F]_{\alpha_*\beta_*}[T]_{\beta_*}, \\
& \delta_{\alpha_{*0}\gamma_{*0}}\delta_{\gamma_{*1}\alpha_{*1}}=[e^{-Ft}]_{\alpha_{*0}\beta_{*0}}[e^{Ft}]_{\beta_{*i_*}\gamma_{*i_*}}[e^{-{F}t}]_{\alpha_{*1}\beta_{*1}}[e^{Ft}]_{\beta_{*(1-i_*)}\gamma_{*(1-i_*)}}, 
\end{aligned}
\end{align}
for all $i_* \in \{0, 1\}^m$ with $\kappa_{i_*} \neq 0$ and $\delta_{\alpha_* \beta_*}$ is defined as follows.
\[   \delta_{\alpha_* \beta_*} = \begin{cases}
1, \quad & \alpha_k = \beta_k\quad\forall k=1, 2, \cdots, m, \\
0, \quad & \mbox{otherwise.}
\end{cases} \]
Although \eqref{B-1} and \eqref{B-2} look so complicate, it admits a conservation law and solution splitting property. For this, we consider the Cauchy problem to the subsystem of \eqref{B-1} with zero free flow $F \equiv 0$ and the same initial data:
\begin{align}
\begin{aligned}  \label{B-3}
& \frac{d}{dt} [S_j]_{\alpha_{*0}}
= \sum_{i_*}\kappa_{i_*} \Big([S_c]_{\alpha_{*i_*}} [\bar{S}_j]_{\alpha_{*1}}[S_j]_{\alpha_{*(1-i_*)}}
-[S_j]_{\alpha_{*i_*}} [\bar{S}_c]_{\alpha_{*1}}[S_j]_{\alpha_{*(1-i_*)}}\Big),\\
& S_j \Big|_{t = 0} = T_j^{in}, \quad j=1, 2, \cdots, N.
\end{aligned}
\end{align}

\vspace{0.2cm}

\begin{proposition} \label{P2.1}
\emph{\cite{H-P0, H-P3}}
Let $\{T_j \}$ and $\{S_j \}$ be solutions to \eqref{B-1} - \eqref{B-2} and \eqref{B-3}  with the same initial data $\{T_j^{in} \}$, respectively. Then, the following assertions hold.
\begin{enumerate}
\item
$\|T_j \|_F$ is a conserved quantity:
\[ \|T_j(t) \|_{F} = \|T_j^{in} \|_F, \quad t \geq 0,~~j = 1, \cdots, N. \]
\item
The Lohe tensor flow \eqref{B-2} can be represented as a composition of free flow and nonlinear flow.
\[  T_j(t) = e^{tF} S_j(t), \quad t \geq 0, \quad j = 1, \cdots, N,  \]
where $e^{tF}$ is a matrix exponential defined by the following relation:
\[  [e^{tF}]_{\alpha_*\beta_*}=\sum_{n=0}^{\infty} \frac{t^n}{n!} [F^n]_{\alpha_*\beta_*}. \]
\end{enumerate}
\end{proposition}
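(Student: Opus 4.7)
The strategy is to establish (1) by direct differentiation of $\|T_j\|_F^2$ using skew-hermiticity of $F$ and the sign structure of the coupling, and to establish (2) by constructing a candidate $\tilde T_j(t) := e^{tF} S_j(t)$ and verifying it satisfies \eqref{B-1}, so that uniqueness for the Cauchy problem gives $\tilde T_j = T_j$.

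For (1), I would compute
\[
\frac{d}{dt}\|T_j\|_F^2 \;=\; 2\,\mathrm{Re}\,\Bigl([\bar T_j]_{\alpha_{*0}}\, \tfrac{d}{dt}[T_j]_{\alpha_{*0}}\Bigr)
\]
and substitute \eqref{B-1}. The free-flow contribution $[\bar T_j]_{\alpha_{*0}}[F]_{\alpha_{*0}\beta_*}[T_j]_{\beta_*}$ is skew-hermitian by the first identity in \eqref{B-2}, so its real part vanishes. For each fixed $i_*$, contracting $[\bar T_j]_{\alpha_{*0}}$ against the two-term bracket yields a difference $A_{i_*} - B_{i_*}$ of two scalars; after relabeling the dummy summation indices $\alpha_{k0} \leftrightarrow \alpha_{k1}$ for those $k$ with $i_k = 0$, one obtains $B_{i_*} = \overline{A_{i_*}}$, so $A_{i_*} - B_{i_*}$ is purely imaginary and its real part vanishes. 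Summing over $i_*$ gives $\frac{d}{dt}\|T_j\|_F^2 = 0$.

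For (2), define $[\tilde T_j]_{\alpha_{*0}}(t) := [e^{tF}]_{\alpha_{*0}\beta_{*0}}[S_j]_{\beta_{*0}}(t)$, so that $\tilde T_j(0) = T_j^{in}$. Differentiating and using \eqref{B-3} gives
\[
\frac{d}{dt}[\tilde T_j]_{\alpha_{*0}} \;=\; [F]_{\alpha_{*0}\gamma_*}[\tilde T_j]_{\gamma_*} \;+\; [e^{tF}]_{\alpha_{*0}\beta_{*0}}\cdot(\text{nonlinear right-hand side of \eqref{B-3} in } S_j, S_c).
\]
The free-flow piece matches the $F$-term in \eqref{B-1} exactly. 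For the nonlinear piece, I would substitute $[S_j]_{\beta_*} = [e^{-tF}]_{\beta_*\delta_*}[\tilde T_j]_{\delta_*}$ and its analogues for $S_c$ and $\bar S_j$ into the three tensor factors in the bracket; the outer $[e^{tF}]$ together with the three (one conjugated) inner exponentials then forms exactly the four-factor string appearing in the third identity of \eqref{B-2}. That identity collapses the string to $\delta_{\alpha_{*0}\gamma_{*0}}\delta_{\gamma_{*1}\alpha_{*1}}$, leaving precisely the $i_*$-coupling written in $\tilde T_j$ and $\tilde T_c$. Hence $\tilde T_j$ solves \eqref{B-1}--\eqref{B-2} with initial datum $T_j^{in}$, and uniqueness of ODE solutions gives $\tilde T_j = T_j$.

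\textbf{Main obstacle.} The substantive step is the index bookkeeping in (2): one must correctly align the four tensor slots of the $i_*$-coupling (carrying indices $\alpha_{*i_*}$, $\alpha_{*1}$, $\alpha_{*(1-i_*)}$, and the outer $\alpha_{*0}$) with the four exponential factors in the third identity of \eqref{B-2}, being careful which two factors carry the minus sign (corresponding to the conjugated slot and the outer transport) and which slot is contracted with what. The third condition in \eqref{B-2} is engineered precisely so that this alignment works verbatim for every $i_* \in \{0,1\}^m$ with $\kappa_{i_*}\neq 0$, so once the indices are placed correctly the identity applies directly. Part (1) and the uniqueness step are then essentially routine symmetry and ODE arguments.
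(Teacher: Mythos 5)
The paper does not actually prove Proposition \ref{P2.1}; it is quoted from the authors' earlier works \cite{H-P0, H-P3}, so there is no in-paper argument to compare against. That said, your proposal is sound and follows the scheme the paper itself uses for the analogous infinite-dimensional statements (Theorem \ref{T3.2}, Propositions \ref{P4.2} and \ref{P5.2}): conservation by taking the real part and exploiting skew-Hermiticity plus a dummy-index symmetry, and splitting by conjugating with the free flow and invoking the intertwining identity. I checked your relabeling in part (1): swapping $\alpha_{k0}\leftrightarrow\alpha_{k1}$ exactly for those $k$ with $i_k=0$ sends $\alpha_{*0}\to\alpha_{*(1-i_*)}$, $\alpha_{*1}\to\alpha_{*i_*}$, $\alpha_{*i_*}\to\alpha_{*1}$, $\alpha_{*(1-i_*)}\to\alpha_{*0}$, which indeed gives $B_{i_*}=\overline{A_{i_*}}$, so each coupling contribution is purely imaginary. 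In part (2), the one detail you gloss over is that the conjugated slot produces $[\,\overline{e^{\mp tF}}\,]_{\beta_{*1}\gamma_{*1}}$, and you must use the first condition in \eqref{B-2} (skew-Hermiticity of $F$) to rewrite this as $[e^{\pm tF}]_{\gamma_{*1}\beta_{*1}}$ before the four-factor string matches the third identity of \eqref{B-2}; given the way that identity is written (with $[e^{-Ft}]_{\alpha_{*0}\beta_{*0}}$ as the leading factor), the cleanest direction is to set $S_j:=e^{-tF}T_j$ and verify it solves \eqref{B-3}, rather than pushing $e^{tF}$ through \eqref{B-3}, though the two are equivalent since $e^{tF}$ is invertible. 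With that adjustment, and the routine local-Lipschitz uniqueness argument (made global by the norm conservation of part (1)), your plan goes through.
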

\subsubsection{The Lohe matrix model} \label{sec:2.2.2}
 Let $\{A_j \}$ be a collection of complex $d_1 \times d_2$ Lohe matrices whose dynamics is governed by the following Cauchy problem \cite{H-P1}:
\begin{align}
\begin{aligned}\label{B-4}
& {\dot A}_j= B A_j+\kappa_0(A_c A_j^\dagger A_j-A_j A_c^\dagger A_j)+\kappa_1(A_jA_j^\dagger A_c-A_jA_c^\dagger A_j),\quad t>0,\\
& A_j \Big|_{t = 0} =A_j^{in}, \quad j=1, 2, \cdots, N,
\end{aligned}
\end{align}
where $\kappa_0$ and $\kappa_1$ are nonnegative coupling strengths, and $\dagger$ denotes the Hermitian conjugate,  $A_c := \frac{1}{N} \sum_{k=1}^{N} A_k$ and the rank-4 tensor $B \in {\mathcal T}_4(d_1, d_2, d_1, d_2; \bbc)$ satisfies
\begin{align}
\begin{aligned} \label{B-5}
&  [{\bar B}]_{\alpha \beta \gamma \delta} = -[B]_{\gamma \delta \alpha \beta}, \quad 1 \leq \alpha, \gamma \leq d_1,~~1 \leq \beta, \delta \leq d_2, \quad j = 1, \cdots, N, \\
& [e^{-Bt}]_{\alpha\beta\gamma\delta}[e^{Bt}]_{\gamma\epsilon\alpha_1\beta_1}[e^{-Bt}]_{\alpha_2\beta_2\psi\epsilon}[e^{Bt}]_{\psi\delta\alpha_3\beta_3}=\delta_{\alpha_1\alpha}\delta_{\beta_3\beta}\delta_{\beta_1\beta_2}\delta_{\alpha_2\alpha_3}. 
\end{aligned}
\end{align}
Consider the corresponding nonlinear sub-system associated with \eqref{B-4}:
\begin{align}
\begin{aligned} \label{B-6}
& {\dot N}_j= \kappa_1(N_c N_j^\dagger N_j - N_j N_c^\dagger N_j)+\kappa_2( N_j N_j^\dagger N_c- N_j N_c^\dagger N_j),  \quad t > 0, \\
& N_j \Big|_{t = 0} = A_j^{in}.
\end{aligned}
\end{align}
Similar to Proposition \ref{P2.1}, we have a conservation law and solution splitting property. 
\begin{proposition} \label{P2.2}
\emph{\cite{H-P2}}
Let $\{A_j \}$ and $\{N_j \}$ be solutions to \eqref{B-4} - \eqref{B-5} and \eqref{B-6}, respectively. Then, the following assertions hold.
\begin{enumerate}
\item
$\|A_j \|_F$ is a conserved quantity:
\[ \|A_j(t) \|_{F} = \|A_j^{in} \|_F, \quad t \geq 0,~~j = 1, \cdots, N. \]
\item
The Lohe matrix flow can be represented as a composition of free flow and nonlinear flow.
\[  A_j(t) = e^{tB} N_j(t), \quad t \geq 0, \quad j = 1, \cdots, N,   \]
where $e^{tB}$ is given as follows.
\[  [e^{tB}]_{\alpha\beta\gamma\delta} :=\sum_{n=0}^\infty \frac{t^n}{n!}[B^n]_{\alpha\beta\gamma\delta}.   \]
\end{enumerate}
\end{proposition}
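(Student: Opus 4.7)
The plan is to mirror the argument sketched for the Lohe tensor model in Proposition \ref{P2.1}, exploiting the fact that the two structural conditions in \eqref{B-5} are the rank-$2$ specialization of the three tensor conditions in \eqref{B-2}. Both the conservation law and the splitting identity should drop out by checking that $B$ plays the role of a skew-Hermitian free-flow generator with respect to the Frobenius structure, and that the cubic coupling is ``equivariant'' under conjugation by $e^{tB}$.

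For assertion (1), I would differentiate $\|A_j\|_F^2 = \mathrm{tr}(A_j^\dagger A_j)$ along \eqref{B-4}:
\begin{align*}
\tfrac{d}{dt}\|A_j\|_F^2 &= 2\,\mathrm{Re}\,\langle A_j, BA_j\rangle_F \\
&\quad+ 2\kappa_0\,\mathrm{Re}\,\mathrm{tr}\bigl(A_j^\dagger A_c A_j^\dagger A_j - A_j^\dagger A_j A_c^\dagger A_j\bigr) \\
&\quad+ 2\kappa_1\,\mathrm{Re}\,\mathrm{tr}\bigl(A_j^\dagger A_j A_j^\dagger A_c - A_j^\dagger A_j A_c^\dagger A_j\bigr).
\end{align*}
The first term vanishes because the skew-Hermiticity relation $[\bar B]_{\alpha\beta\gamma\delta}=-[B]_{\gamma\delta\alpha\beta}$ says precisely that $B$ is skew-Hermitian for the Frobenius inner product, so $\langle A_j,BA_j\rangle_F$ is purely imaginary. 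For each cubic bracket, the cyclic property of $\mathrm{tr}$ together with $\mathrm{tr}(M)^* = \mathrm{tr}(M^\dagger)$ identifies the two trace expressions as complex conjugates of one another, so their difference is purely imaginary and its real part vanishes.

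For assertion (2), set $N_j(t) := e^{-tB}A_j(t)$; then $N_j(0)=A_j^{\mathrm{in}}$. Differentiating and using that $B$ commutes with $e^{\pm tB}$ in the power-series sense, the free-flow contribution cancels and we are left with $\dot N_j$ equal to $e^{-tB}$ applied to the two cubic expressions in \eqref{B-4}. The key step is to check the ``propagation identities''
\[
 e^{-tB}\bigl(A_c A_j^\dagger A_j\bigr)=N_c N_j^\dagger N_j,\qquad e^{-tB}\bigl(A_j A_c^\dagger A_j\bigr)=N_j N_c^\dagger N_j,\qquad e^{-tB}\bigl(A_j A_j^\dagger A_c\bigr)=N_j N_j^\dagger N_c,
\]
after which $N_j$ satisfies \eqref{B-6} with the correct initial datum and the splitting $A_j(t)=e^{tB}N_j(t)$ follows by uniqueness for the Cauchy problem \eqref{B-6}. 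These three identities are exactly what the second relation in \eqref{B-5} encodes: expanding $A_c=e^{tB}N_c$ and $A_j=e^{tB}N_j$ in components and using the skew-Hermiticity of $B$ to rewrite $\overline{[e^{tB}]_{\alpha\beta\gamma\delta}}=[e^{-tB}]_{\gamma\delta\alpha\beta}$ for the conjugated factor produces a four-fold contraction of the form $[e^{-Bt}]\,[e^{Bt}]\,[e^{-Bt}]\,[e^{Bt}]$ which, by \eqref{B-5}, collapses to the Kronecker pattern $\delta_{\alpha_1\alpha}\delta_{\beta_3\beta}\delta_{\beta_1\beta_2}\delta_{\alpha_2\alpha_3}$ that literally reproduces the matrix multiplication of $N_c,N_j^\dagger,N_j$ (and similarly for the permuted terms).

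The main obstacle is purely bookkeeping: keeping track of row versus column indices through the Hermitian conjugate $A_j^\dagger$ so that the complex conjugate of an $e^{tB}$ factor and the transposed role of its index pair line up correctly with the specific ordering $(\alpha\beta\gamma\delta)\mapsto(\gamma\delta\alpha\beta)$ dictated by the skew-Hermiticity of $B$. Once that match is verified for one of the three cubic terms, the other two are handled verbatim by permuting the roles of $A_c$, $A_j$, $A_j^\dagger$, with the same structural identity in \eqref{B-5} doing the work each time.
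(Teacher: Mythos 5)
The paper does not actually prove Proposition \ref{P2.2}; it is quoted from the reference [H-P2] with no argument given, so there is nothing in-text to compare against. Your proof is correct and is the standard one implicit in the hierarchy: the skew-Hermiticity $[\bar B]_{\alpha\beta\gamma\delta}=-[B]_{\gamma\delta\alpha\beta}$ plus the conjugate-pair trace cancellation gives (1), and writing $N_j:=e^{-tB}A_j$, using $\overline{[e^{tB}]_{\alpha\beta\gamma\delta}}=[e^{-tB}]_{\gamma\delta\alpha\beta}$ and the four-fold contraction identity in \eqref{B-5} to collapse $e^{-tB}(A_cA_j^\dagger A_j)$ to $N_cN_j^\dagger N_j$ (and likewise for the permuted terms) gives (2) — this is exactly the rank-$2$ specialization of the mechanism behind Proposition \ref{P2.1}, so no gaps.
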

\subsubsection{The complex Lohe sphere model} \label{sec:2.2.3}
Let $\{v_j \}$ be a collection of the complex vectors in $\bbc^d$ whose dynamics is governed by the following Cauchy problem:
\begin{align}
\begin{aligned}\label{B-7}
& \dot{v}_j=\Omega v_j+\kappa_0\big(v_c \langle v_j, v_j \rangle -\langle{v_c, v_j}\rangle v_j\big)+\kappa_1\big(\langle{v_j, v_c}\rangle - \langle{v_c, v_j}\rangle\big)v_j,\qquad t\geq0,\\
& v_j \Big|_{t = 0} =v_j^{in}, \quad j=1, 2, \cdots, N,
\end{aligned}
\end{align}
where $\kappa_0$ and $\kappa_1$ are nonnegative coupling strengths, $v_c := \frac{1}{N} \sum_{i=1}^{N} v_i$ and $\Omega$ is $d \times d$ skew-Hermitian with the property $\Omega^\dagger = -\Omega.$ \newline

\noindent Note that for a real vector $v_j = x_j \in \bbr^d$, the third term  in the R.H.S. of \eqref{B-7} vanishes, and system \eqref{B-7} reduces to the complex Lohe sphere model in \cite{C-C-H}:
\[ 
\dot{x}_j= \Omega x_j +\kappa_{0} \Big(x_c\langle x_j, x_j \rangle-x_j \langle x_c, x_j \rangle \Big).
\]
Next, we consider the corresponding nonlinear subsystem associated with \eqref{B-7}:
\begin{align}
\begin{aligned} \label{B-8}
& \dot{w}_j=  \kappa_{0} (w_c\langle w_j, w_j \rangle-w_j \langle w_c, w_j \rangle)+\kappa_1 (\langle w_j, w_c \rangle- \langle w_c,  w_j \rangle) w_j, \quad t > 0, \\
& w_j \Big|_{t = 0} = v_j^{in}, \quad j = 1, \cdots, N.
\end{aligned}
\end{align}
Similar to Proposition \ref{P2.1}, we have a conservation law and  solution splitting property. 
\begin{proposition} \label{P2.3}
\emph{\cite{H-P2}}
Let $\{v_j \}$ and $\{w_j \}$ be solutions to \eqref{B-7} and \eqref{B-8} with the same initial data $\{v_j^{in} \}$, respectively. Then, the following assertions hold.
\begin{enumerate}
\item
$\|v_j\|$ is a conserved quanity:
\[ \|v_j(t)\| = \|v_j^{in} \|, \quad t \geq 0,~~j = 1, \cdots, N. \]
\item
The Lohe sphere flow can be represented as a composition of free flow and nonlinear flow:
\[ v_j=e^{Bt} w_j, \quad j = 1, \cdots, N.    \]
\end{enumerate}
\end{proposition}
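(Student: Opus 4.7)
The plan is to prove the two assertions by direct calculation, mirroring the template already used for the Lohe tensor and Lohe matrix models in Propositions \ref{P2.1}--\ref{P2.2}: differentiation of the squared norm for the conservation law, and an ansatz that unwinds the free flow for the splitting property. The skew-Hermiticity of $\Omega$ (so that $e^{t\Omega}$ is unitary) together with the conjugate-linearity of $\langle \cdot, \cdot \rangle$ in its first slot are the only structural ingredients required.

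For assertion (1), I would differentiate $\|v_j\|^2 = \langle v_j, v_j \rangle$ along \eqref{B-7} and write
\[
\frac{d}{dt}\|v_j\|^2 = \langle \dot v_j, v_j\rangle + \langle v_j, \dot v_j\rangle.
\]
The $\Omega$-contribution $\langle \Omega v_j, v_j\rangle + \langle v_j, \Omega v_j\rangle$ vanishes because $\Omega^\dagger = -\Omega$. For the $\kappa_0$-block, conjugate-linearity in the first argument and the identity $\overline{\langle v_c, v_j\rangle} = \langle v_j, v_c\rangle$ reduce the four resulting scalar contributions to a pairwise cancellation, each accompanied by the real factor $\langle v_j, v_j \rangle$. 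For the $\kappa_1$-block, the prefactor $\langle v_j, v_c\rangle - \langle v_c, v_j\rangle$ is purely imaginary, so adding it to its own complex conjugate returns zero. Altogether $\tfrac{d}{dt}\|v_j\|^2 \equiv 0$.

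For assertion (2), set $w_j(t) := e^{-\Omega t} v_j(t)$ (noting that the proposition's $e^{Bt}$ must be read as $e^{\Omega t}$, the free-flow propagator appearing in \eqref{B-7}). Differentiating and using $[\Omega, e^{-\Omega t}] = 0$ eliminates the free-flow term, so
\[
\dot w_j = e^{-\Omega t}\bigl[\kappa_0\bigl(v_c \langle v_j, v_j\rangle - \langle v_c, v_j\rangle v_j\bigr) + \kappa_1\bigl(\langle v_j, v_c\rangle - \langle v_c, v_j\rangle\bigr)v_j\bigr].
\]
Because $\Omega$ is skew-Hermitian, $U(t) := e^{-\Omega t}$ is unitary, hence preserves the inner product and commutes with averaging, so $U(t) v_c = w_c$ and $\langle v_i, v_j\rangle = \langle w_i, w_j\rangle$ for every pair. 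Substituting these identities on the right-hand side rewrites it as exactly the nonlinear field in \eqref{B-8} evaluated at $\{w_k\}$. Since $w_j(0) = v_j^{in}$, uniqueness for \eqref{B-8} yields $w_j = $ the solution of the subsystem, which on rearrangement gives $v_j(t) = e^{\Omega t} w_j(t)$.

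The main delicacy is bookkeeping with the conjugate-linearity of $\langle \cdot, \cdot\rangle$ in its first slot: one must carefully track which scalar factors become complex-conjugated when pulled out of a bra in the differentiation of $\|v_j\|^2$, and one must verify that the \emph{three} inner-product factors appearing in the $\kappa_0$ and $\kappa_1$ terms are each invariant under the unitary $e^{-\Omega t}$ so that the coupling terms transport cleanly from $v$-coordinates to $w$-coordinates. No genuine obstruction is anticipated: once this scalar algebra is in order, skew-Hermiticity of $\Omega$ and unitarity of $e^{\Omega t}$ close both parts of the proposition in a few lines.
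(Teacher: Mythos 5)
The paper does not actually prove Proposition \ref{P2.3}; it is quoted from \cite{H-P2} without argument, so there is no in-paper proof to compare against. Your direct verification is correct and is the standard argument for this family of results (it is exactly the template behind Propositions \ref{P2.1} and \ref{P2.2}): skew-Hermiticity of $\Omega$ kills the free-flow contribution to $\tfrac{d}{dt}\|v_j\|^2$, the $\kappa_0$ and $\kappa_1$ blocks cancel against their conjugates, and unitarity of $e^{-\Omega t}$ together with its commuting with the average transports the coupling terms from $v$- to $w$-coordinates. You are also right that the $e^{Bt}$ in the statement is a typo for $e^{\Omega t}$, the free-flow propagator of \eqref{B-7}.
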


\section{Schr\"{o}dinger-Lohe type models} \label{sec:3}
\setcounter{equation}{0} 
In this section, we first  review the Schr\"{o}dinger-Lohe model and its basic properties, and then  we introduce a variant of the Schr\"{o}dinger-Lohe model with rotational couplings and study its connection with the Kuramoto model. 
\subsection{The Schr\"{o}dinger-Lohe model}  \label{sec:3.1}
Let $\{ \psi_j \}$ be a collection of $N$ complex-valued functions in ${\mathcal C}(\bbr_+; L^2(\bbt^d))$ whose dynamics is governed by the following Cauchy problem:
\begin{align}
\begin{aligned} \label{C-1}
& \mathrm{i}\partial_t \psi_j= {\mathcal H} \psi_j + \frac{\mathrm{i}\kappa}{N}\sum_{k=1}^N (\psi_k \langle \psi_j | \psi_j \rangle -\langle{\psi_k| \psi_j}\rangle \psi_j),\qquad (t, x)\in \bbr_+ \times  \bbt^d, \\
& \psi_j \Big|_{t = 0} =\psi_j^{in},\quad j = 1, \cdots, N,
\end{aligned}
\end{align}
where ${\mathcal H}= -\frac{1}{2} \Delta_{x} + V(x)$ is a one-body Hermitian Hamiltonian. The global existence of strong and smooth solutions to \eqref{C-1} was studied in \cite{H-H} using the standard energy method and asymptotic dynamics of \eqref{C-1} has been extensively discussed in \cite{H-H-K1, H-H-K2}.  Now, we consider the corresponding nonlinear flow for \eqref{C-1} with the same initial data:
\begin{align}
\begin{aligned} \label{C-2}
& \mathrm{i}\partial_t \varphi_j= \frac{\mathrm{i}\kappa}{N}\sum_{k=1}^N (\varphi_k \langle \varphi_j | \varphi_j \rangle -\langle{\varphi_k| \varphi_j}\rangle \varphi_j),\quad (t, x)\in \bbr_+ \times  \bbt^d, \\
& \varphi_j \Big|_{t = 0} =\psi_j^{in},\quad j = 1, \cdots, N,
\end{aligned}
\end{align}
Next, we present the conservation of $L^2$-norm and the solution splitting property. 
\begin{proposition} \label{P3.1}
\emph{\cite{C-H4}}
Let $\{\psi_j \}$ and $\{\varphi_j \}$ be global smooth solutions to \eqref{C-1} and \eqref{C-2}, respectively. Then, the following assertions hold.
\begin{enumerate}
\item
$\| \psi_j \|_{2}$ is a conserved quantity:
\[  \| \psi_j(t) \|_2 = \|\psi_j^{in} \|_2, \quad t \geq 0,~~j = 1, \cdots, N. \]
\item
The Schr\"{o}dinger-Lohe flow can be represented as a composition of free flow and nonlinear flow:
\[  \psi_j = e^{-\mathrm{i}{\mathcal H}t} \varphi_j, \quad j = 1, \cdots, N. \]
\end{enumerate}
\end{proposition}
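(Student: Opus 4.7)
The plan is to treat the two assertions in tandem: the $L^2$ conservation law will fall out of a direct differentiation that uses only the skew-adjointness of $-\mathrm{i}\mathcal{H}$ and the algebraic structure of the coupling, while the splitting identity will follow from an ansatz and a uniqueness argument once conservation is in hand.

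For part (1), I would compute $\frac{d}{dt}\|\psi_j\|_2^2 = \frac{d}{dt}\langle \psi_j | \psi_j \rangle$ by inserting the equation \eqref{C-1}. The contribution of the Hamiltonian term $-\mathrm{i}\mathcal{H}\psi_j$ vanishes because $\mathcal{H}$ is Hermitian on $L^2(\bbt^d)$, so $\langle \mathcal{H}\psi_j | \psi_j \rangle = \langle \psi_j | \mathcal{H}\psi_j \rangle$. For the coupling contribution, pairing $\frac{\kappa}{N}\sum_k (\psi_k \langle \psi_j|\psi_j\rangle - \langle \psi_k|\psi_j\rangle \psi_j)$ against $\psi_j$ in both slots and adding the complex conjugate produces
\[
\frac{\kappa}{N}\sum_k \Big(\langle \psi_j|\psi_k\rangle - \langle \psi_k|\psi_j\rangle + \langle \psi_k|\psi_j\rangle - \langle \psi_j|\psi_k\rangle\Big)\langle \psi_j|\psi_j\rangle = 0,
\]
giving $\frac{d}{dt}\|\psi_j\|_2 = 0$ as desired.

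For part (2), the natural ansatz is $\tilde{\varphi}_j(t) := e^{\mathrm{i}\mathcal{H}t}\psi_j(t)$, and the goal is to show that $\tilde{\varphi}_j$ solves \eqref{C-2} with the same initial datum $\psi_j^{in}$; the claimed identity $\psi_j = e^{-\mathrm{i}\mathcal{H}t}\varphi_j$ then follows by uniqueness of smooth solutions to \eqref{C-2}. Differentiating $\tilde{\varphi}_j$ in $t$ and substituting \eqref{C-1} yields
\[
\mathrm{i}\partial_t \tilde{\varphi}_j = e^{\mathrm{i}\mathcal{H}t}\Big(-\mathcal{H}\psi_j + \mathrm{i}\partial_t\psi_j\Big) = \frac{\mathrm{i}\kappa}{N}\sum_k e^{\mathrm{i}\mathcal{H}t}\Big(\psi_k \langle \psi_j|\psi_j\rangle - \langle \psi_k|\psi_j\rangle \psi_j\Big).
\]
The key observation is that $e^{\mathrm{i}\mathcal{H}t}$ is unitary on $L^2(\bbt^d)$, so inner products are preserved: $\langle \psi_a|\psi_b\rangle = \langle \tilde{\varphi}_a|\tilde{\varphi}_b\rangle$. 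Substituting $\psi_k = e^{-\mathrm{i}\mathcal{H}t}\tilde{\varphi}_k$ inside the sum and pulling $e^{\mathrm{i}\mathcal{H}t}$ through then gives exactly the right-hand side of \eqref{C-2} written in $\tilde{\varphi}_j$'s. Since $\tilde{\varphi}_j(0) = \psi_j^{in} = \varphi_j(0)$, uniqueness for \eqref{C-2} forces $\tilde{\varphi}_j \equiv \varphi_j$, proving the splitting.

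There is no genuine obstacle here: the only technical point to watch is the legitimacy of exchanging $e^{\mathrm{i}\mathcal{H}t}$ with scalar inner products under the sum, which is immediate from unitarity once we know solutions exist in the smooth class where differentiation in $t$ and $\mathcal{H}$ act consistently; global-in-time smooth well-posedness of \eqref{C-1} and \eqref{C-2} is already established in the references cited before the statement, so conservation and splitting hold on $[0,\infty)$ without further regularity issues.
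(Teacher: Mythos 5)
Your proof is correct; the paper itself does not prove Proposition \ref{P3.1} (it is quoted from \cite{C-H4}), but your two arguments --- differentiating $\langle\psi_j|\psi_j\rangle$, cancelling the Hamiltonian term by Hermitian symmetry and the coupling terms by conjugate pairing, and then conjugating by the unitary group $e^{\mathrm{i}\mathcal{H}t}$ plus uniqueness for the splitting --- are exactly the arguments the paper uses for the analogous statements (Lemma \ref{L3.1}, Theorem \ref{T3.2}, Propositions \ref{P4.1} and \ref{P4.2}). No gaps.
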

\subsection{The  Schr\"{o}dinger-Lohe model with rotational couplings} \label{sec:3.2} Below, we introduce a variant of the SL model motivated. Recall that two coupling terms in  the complex Lohe sphere model:
\begin{equation} \label{C-5-1}
 \kappa_0\big(v_c-\langle{v_c, v_j}\rangle v_j\big)+\kappa_1\big(\langle{v_j, v_c}\rangle - \langle{v_c, v_j}\rangle\big)v_j.
 \end{equation}
The coupling term involving with $\kappa_0$ has the same structure as that of the SL model discussed in the previous subsection. Next, we propose a SL type model motivated by the coupling term involving with $\kappa_1$ responsible for rotational motion. \newline

Consider the Cauchy problem to the Schr\"{o}dinger-Lohe model with rotational couplings:
\begin{align}
\begin{aligned} \label{C-6}
& \mathrm{i} \partial_t \psi_j ={\mathcal H} \psi_j +\frac{\mathrm{i}\kappa}{N}\sum_{k=1}^N (\langle \psi_j | \psi_k \rangle -\langle \psi_k| \psi_j \rangle ) \psi_j, \quad (t,x)\in \bbr_+ \times \bbt^d, \\
& \psi_j \Big|_{t = 0} =\psi_j^{in}, \quad j = 1, \cdots, N.
\end{aligned}
\end{align}
Note that the coupling term can be rewritten using the average wave function $\psi_c := \frac{1}{N} \sum_{j=1}^{N} \psi_j$:
\[ \frac{\mathrm{i}\kappa}{N}\sum_{k=1}^N (\langle \psi_j | \psi_k \rangle -\langle \psi_k| \psi_j \rangle ) \psi_j =  \langle \psi_j| \psi_c \rangle - \langle \psi_c | \psi_j \rangle.    \]
This exactly coincides with the second term in \eqref{C-5-1}. The global well-posedness of \eqref{C-6} can be treated similarly as in \cite{H-H}. Thus, we focus on the a priori asymptotic dynamics for \eqref{C-6}. \newline

Next, we study a connection between \eqref{C-6} and the Kuramoto model. For this, we consider the following setting:
\begin{equation} \label{C-7}
{\mathcal H} \equiv 0, \qquad \psi_j^{in}(x)=e^{\mathrm{i}\theta^{in}_j}\psi(x),
\end{equation}
where $\psi$ is an $L^2$-function with $\| \psi \|_{2} = 1$.  
\begin{proposition} \label{P3.2}
Suppose that the setting \eqref{C-7} holds, and let $\{\psi_j \}$ be a global smooth solution to \eqref{C-6}. Then, one has
\begin{equation*}
\begin{cases} \label{C-8}
\displaystyle \psi_j(t,x) =e^{\mathrm{i}\theta_j(t)}\psi(x), \quad (t, x) \in \bbr_+ \times \bbt^d, \\
\displaystyle {\dot \theta}_j =\frac{2\kappa}{N}\sum_{k=1}^N\sin(\theta_k-\theta_j), \quad j = 1, \cdots, N, \\
 \theta_j \Big|_{t = 0} =\theta_j^{in}.
 \end{cases}
\end{equation*}
\end{proposition}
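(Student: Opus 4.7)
The plan is to reduce Proposition \ref{P3.2} to an ansatz-verification argument leveraging uniqueness of global smooth solutions to \eqref{C-6}. First I would introduce the candidate $\tilde\psi_j(t,x) := e^{\mathrm{i}\theta_j(t)} \psi(x)$ where $\theta_j(t)$ is defined as the unique solution of the Kuramoto ODE
\[
\dot\theta_j = \frac{2\kappa}{N}\sum_{k=1}^{N}\sin(\theta_k-\theta_j), \qquad \theta_j(0)=\theta_j^{in}.
\]
Global existence for this ODE is standard (right-hand side is Lipschitz and bounded). Once $\theta_j$ is in hand, $\tilde\psi_j$ is a well-defined smooth $L^2(\bbt^d)$-valued function with $\tilde\psi_j(0,x) = \psi_j^{in}(x)$, so the initial condition is matched automatically.

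Next I would substitute $\tilde\psi_j$ into \eqref{C-6} with $\mathcal{H}\equiv 0$ and verify the equation pointwise. The left-hand side becomes $\mathrm{i}\partial_t \tilde\psi_j = -\dot\theta_j\, e^{\mathrm{i}\theta_j}\psi(x)$. For the inner products, since $\|\psi\|_2=1$,
\[
\langle \tilde\psi_j \mid \tilde\psi_k\rangle = \int_{\bbt^d}\overline{e^{\mathrm{i}\theta_j}\psi(x)}\, e^{\mathrm{i}\theta_k}\psi(x)\,dx = e^{\mathrm{i}(\theta_k-\theta_j)},
\]
so $\langle \tilde\psi_j\mid\tilde\psi_k\rangle-\langle\tilde\psi_k\mid\tilde\psi_j\rangle = 2\mathrm{i}\sin(\theta_k-\theta_j)$. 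Hence the coupling term on the right-hand side equals
\[
\frac{\mathrm{i}\kappa}{N}\sum_{k=1}^{N} 2\mathrm{i}\sin(\theta_k-\theta_j)\,\tilde\psi_j = -\frac{2\kappa}{N}\sum_{k=1}^{N}\sin(\theta_k-\theta_j)\,e^{\mathrm{i}\theta_j}\psi(x).
\]
Matching the two sides gives precisely the Kuramoto ODE, which is the very equation defining $\theta_j$. Thus $\tilde\psi_j$ solves \eqref{C-6} with the prescribed initial data.

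To conclude, I would invoke uniqueness of global smooth solutions to \eqref{C-6} (which holds for the same reasons as for \eqref{C-1}, treated in \cite{H-H}, since the cubic nonlinearity in $\psi_j$ is locally Lipschitz on bounded sets of $L^2$ and the $L^2$-norm is \emph{a priori} conserved as in Proposition \ref{P3.1}). Since $\psi_j$ and $\tilde\psi_j$ are both global smooth solutions with the same initial data $e^{\mathrm{i}\theta_j^{in}}\psi$, they coincide for all $t\geq 0$, yielding the claim.

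The only nontrivial point is making sure uniqueness is available for the rotational-coupling variant \eqref{C-6}; this is the main obstacle but a minor one, since the structural proof from \cite{H-H} transfers verbatim: the nonlinearity is a polynomial in $\psi_j$ and $\langle\cdot|\cdot\rangle$ expressions, and $\|\psi_j\|_2$-conservation (which follows by taking $\langle\psi_j|\cdot\rangle$ with \eqref{C-6} and observing the coupling is purely imaginary-valued scalar times $\psi_j$) keeps the a priori estimates controlled. Beyond that, the computation is routine, and I do not anticipate any hidden difficulty.
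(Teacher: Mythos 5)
Your proof is correct, but it closes the argument differently from the paper. You postulate the ansatz $\tilde\psi_j = e^{\mathrm{i}\theta_j(t)}\psi(x)$ with $\theta_j$ defined by the Kuramoto ODE, verify by direct substitution that it solves \eqref{C-6} with the given data, and then appeal to uniqueness of global smooth solutions to identify $\tilde\psi_j$ with $\psi_j$. The paper instead \emph{derives} the phase-factor form from the equation: with ${\mathcal H}\equiv 0$ the coupling term is a purely imaginary, $x$-independent scalar times $\psi_j$, namely $\partial_t\psi_j = 2\kappa\mathrm{i}\,\mathrm{Im}(\langle\psi_j|\psi_c\rangle)\psi_j$, so for each fixed $x$ one integrates this linear scalar ODE to get $\psi_j(t,x) = e^{2\kappa\mathrm{i}\int_0^t \mathrm{Im}(\langle\psi_j|\psi_c\rangle)\,ds}\,e^{\mathrm{i}\theta_j^{in}}\psi(x)$, which forces the form $e^{\mathrm{i}\theta_j(t)}\psi(x)$ without invoking uniqueness of the nonlinear system; substituting back then yields the Kuramoto equation exactly as in your computation. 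What your route buys is a standard verify-plus-uniqueness structure; what it costs is the uniqueness hypothesis for \eqref{C-6}, which the paper never states explicitly (global well-posedness is only said to be ``treated similarly as in [H-H]''). You correctly flag this as the one nontrivial point, and in the specific setting \eqref{C-7} it is genuinely harmless: with ${\mathcal H}\equiv 0$ the system is an ODE in $(L^2(\bbt^d))^N$ with a cubic, locally Lipschitz nonlinearity and conserved $L^2$-norms, so Picard--Lindel\"{o}f in a Banach space gives global uniqueness. Still, the paper's integrating-factor observation is slightly more economical, since it makes the ansatz a consequence of the equation rather than an input to be matched against an independently constructed solution.
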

\begin{proof}
Note that $\psi_j$ satisfies 
\begin{align}
\begin{aligned} \label{C-8-1}
 \partial_t\psi_j &=\frac{\kappa}{N}\sum_{k=1}^N (\langle{\psi_j| \psi_k}\rangle -\langle{\psi_k| \psi_j}\rangle )\psi_j = \kappa(\langle{\psi_j|\psi_c}\rangle-\langle{\psi_c|\psi_j }\rangle)\psi_j  \\
 &= 2\kappa\mathrm{i}\cdot\mathrm{Im}(\langle{\psi_j|\psi_c}\rangle)\psi_j,
\end{aligned}
\end{align}
where we used $\overline{\langle \psi_c|\psi_j \rangle} =   \langle \psi_j|\psi_c \rangle$. \newline

\noindent This yields
\begin{equation*} \label{C-9}
\psi_j(x, t)=\psi_j^{in}(x) e^{2\kappa\mathrm{i} \int_0^t \mathrm{Im}(\langle{\psi_j |\psi_c}\rangle)(s)ds}= e^{2\kappa\mathrm{i} \int_0^t \mathrm{Im}(\langle{\psi_j |\psi_c}\rangle)(s)ds} e^{\mathrm{i}\theta_j^{in}} \psi(x).
\end{equation*}
Thus, it is reasonable to set the ansatz for $\psi_j$ as follows.
\begin{equation} \label{C-9-1}
\psi_j(t,x) =e^{\mathrm{i}\theta_j(t)}\psi(x).   
\end{equation}
This implies 
\[ \partial_t \psi_j = {\mathrm i} {\dot \theta}_j  \psi_j,  \quad \langle \psi_j | \psi_k \rangle = e^{{\mathrm i} (\theta_k - \theta_j)} \| \psi \|_{L^2}^2 = e^{{\mathrm i} (\theta_k - \theta_j)}. \]
We use the above calculation and the ansatz \eqref{C-9-1} into \eqref{C-8-1} to see
\[
{\mathrm i} {\dot \theta}_j  \psi_j = \frac{\kappa}{N}\sum_{k=1}^N\left(e^{\mathrm{i}(\theta_k-\theta_j)}-e^{\mathrm{i}(\theta_j-\theta_k)}\right)\psi_j
\]
which yields
\[ {\dot \theta}_j =\frac{2\kappa}{N}\sum_{k=1}^N\sin(\theta_k-\theta_j).
\]
\end{proof}
Next, we show that system \eqref{C-7} admits a conservation law as the original S-L model. 
\begin{lemma}\label{L3.1}
Let $\{\psi_j \}$ be a global smooth solution to \eqref{C-6}. Then, for $t > 0$ and $i, j = 1, \cdots, N$, one has
\[ \frac{d}{dt}\langle{\psi_i|\psi_j}\rangle= \kappa (\langle{\psi_c|\psi_i-\psi_j}\rangle-\langle{\psi_i-\psi_j|\psi_c}\rangle )\langle{\psi_i|\psi_j}\rangle, \qquad \|\psi_j(t) \|_{2} = \| \psi_j^{in} \|_{2}. \]
\end{lemma}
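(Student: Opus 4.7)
The plan is to work directly with the evolution equation \eqref{C-6} and differentiate the quantities of interest under the integral sign. First I would rewrite the governing equation by dividing through by $\mathrm{i}$ and observing that the coupling factor
\[
\frac{1}{N}\sum_{k=1}^{N}\bigl(\langle \psi_j|\psi_k\rangle-\langle \psi_k|\psi_j\rangle\bigr)=\langle\psi_j|\psi_c\rangle-\langle\psi_c|\psi_j\rangle=2\mathrm{i}\,\mathrm{Im}\langle\psi_j|\psi_c\rangle
\]
is purely imaginary. Setting $\beta_j:=\langle\psi_j|\psi_c\rangle-\langle\psi_c|\psi_j\rangle$, the equation reads compactly as $\partial_t\psi_j=-\mathrm{i}\mathcal{H}\psi_j+\kappa\beta_j\psi_j$, with $\overline{\beta_j}=-\beta_j$.

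For the $L^{2}$-conservation I would differentiate $\|\psi_j\|_2^{2}=\langle\psi_j|\psi_j\rangle$ and use the sesquilinearity of $\langle\cdot|\cdot\rangle$:
\[
\frac{d}{dt}\|\psi_j\|_2^{2}=\langle\partial_t\psi_j|\psi_j\rangle+\langle\psi_j|\partial_t\psi_j\rangle
=\mathrm{i}\bigl(\langle\mathcal{H}\psi_j|\psi_j\rangle-\langle\psi_j|\mathcal{H}\psi_j\rangle\bigr)+(\overline{\kappa\beta_j}+\kappa\beta_j)\|\psi_j\|_2^{2}.
\]
The first bracket vanishes since $\mathcal H$ is Hermitian, and the second vanishes since $\beta_j$ is purely imaginary, giving the claim.

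For the cross term $\langle\psi_i|\psi_j\rangle$, I would expand similarly:
\[
\frac{d}{dt}\langle\psi_i|\psi_j\rangle=\langle\partial_t\psi_i|\psi_j\rangle+\langle\psi_i|\partial_t\psi_j\rangle
=\mathrm{i}\bigl(\langle\mathcal{H}\psi_i|\psi_j\rangle-\langle\psi_i|\mathcal{H}\psi_j\rangle\bigr)+\kappa(\overline{\beta_i}+\beta_j)\langle\psi_i|\psi_j\rangle.
\]
The Hermitian contribution again cancels by self-adjointness of $\mathcal H$. Using $\overline{\beta_i}=-\beta_i$, the remaining coefficient becomes
\[
\kappa(\beta_j-\beta_i)=\kappa\bigl[(\langle\psi_j|\psi_c\rangle-\langle\psi_c|\psi_j\rangle)-(\langle\psi_i|\psi_c\rangle-\langle\psi_c|\psi_i\rangle)\bigr]
=\kappa\bigl(\langle\psi_c|\psi_i-\psi_j\rangle-\langle\psi_i-\psi_j|\psi_c\rangle\bigr),
\]
which is precisely the asserted form.

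There is no real obstacle here: the argument is a one-line book-keeping exercise once the coupling factor is recognized as purely imaginary and the Hermiticity of $\mathcal H$ is invoked. The only step requiring mild care is the sign manipulation from the sesquilinearity convention (conjugate-linearity in the first slot) when differentiating, which could become a source of error if written carelessly. Otherwise, the proof is a clean application of the product rule in the inner product.
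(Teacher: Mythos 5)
Your proof is correct and follows essentially the same route as the paper: differentiate $\langle\psi_i|\psi_j\rangle$, cancel the free-flow contribution by Hermiticity of $\mathcal{H}$, and collect the purely imaginary coupling coefficients into $\kappa(\langle\psi_c|\psi_i-\psi_j\rangle-\langle\psi_i-\psi_j|\psi_c\rangle)$. The only cosmetic difference is that the paper obtains the norm conservation by setting $i=j$ in the cross-term identity rather than computing it separately.
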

\begin{proof} (i)~We use \eqref{C-6} to get 
\begin{align}
\begin{aligned} \label{C-10}
\frac{d}{dt}\langle{\psi_i|\psi_j}\rangle &=\langle{\partial_t\psi_i|\psi_j}\rangle+\langle{\psi_i|\partial_t\psi_j}\rangle =\langle{-\mathrm{i}{\mathcal H} \psi_i|\psi_j}\rangle+\langle{\psi_i|-\mathrm{i} {\mathcal H} \psi_j}\rangle \\
&+\frac{\kappa}{N}\sum_{k=1}^N \Big(\langle{\psi_k|\psi_i}\rangle-\langle{\psi_i|\psi_k}\rangle+\langle{\psi_j| \psi_k}\rangle -\langle{\psi_k| \psi_j}\rangle \Big)\langle{\psi_i|\psi_j}\rangle\\
&=\frac{\kappa}{N}\sum_{k=1}^N(\langle{\psi_k|\psi_i-\psi_j}\rangle-\langle{\psi_i-\psi_j|\psi_k}\rangle )\langle{\psi_i|\psi_j}\rangle.
\end{aligned}
\end{align}
(ii)~We set $i = j$ in \eqref{C-10} to see
\[ \frac{d}{dt}\langle{\psi_j |\psi_j}\rangle = \frac{d}{dt} \| \psi_j \|_{2}^2 = 0. \]
\end{proof}
\noindent For a global smooth solution $\{\psi_j \}$ and $i_1, i_2, \cdots, i_m\in\{1, 2, \cdots, N\}$, we introduce a functional $ {\mathcal J}_{i_1 \cdots i_m} (\Psi)$:
\begin{equation} \label{C-11}
 {\mathcal J}_{i_1 \cdots i_m} (\Psi) := \langle{\psi_{i_1}|\psi_{i_2}}\rangle\cdot\langle{\psi_{i_2}|\psi_{i_3}}\rangle\cdot\cdots\cdot\langle{\psi_{i_m}|\psi_{i_1}}\rangle.  
\end{equation} 
\begin{proposition} \label{P3.3}
Let $\{\psi_i\}$ be be a global smooth solution to \eqref{C-6}. Then, we have two conservation laws:
\[
\frac{d}{dt}{\mathcal J}_{i_1 \cdots i_m} (\Psi) =0 \quad \mbox{and} \quad  \frac{d}{dt}|\langle{\psi_i|\psi_j}\rangle|^2=0, \quad  t > 0.
\]
\end{proposition}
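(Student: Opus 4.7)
The plan is to derive both identities from Lemma \ref{L3.1} and the cyclic structure of $\mathcal{J}_{i_1\cdots i_m}$. First I would introduce the shorthand
\[
f_i := \langle \psi_c | \psi_i \rangle - \langle \psi_i | \psi_c \rangle,
\]
so that the evolution equation in Lemma \ref{L3.1} rewrites as
\[
\frac{d}{dt}\langle \psi_i | \psi_j \rangle = \kappa (f_i - f_j)\,\langle \psi_i | \psi_j \rangle,
\]
since $\langle \psi_c | \psi_i - \psi_j\rangle - \langle \psi_i - \psi_j | \psi_c\rangle = f_i - f_j$ by linearity. Note that each $f_i$ is purely imaginary, which is consistent with the $L^2$-norm conservation already recorded in Lemma \ref{L3.1}, but for the present proof only the algebraic identity above is used.

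Next, I would apply the product rule to $\mathcal{J}_{i_1 \cdots i_m}(\Psi) = \prod_{k=1}^{m}\langle \psi_{i_k} | \psi_{i_{k+1}}\rangle$ (with the cyclic convention $i_{m+1} = i_1$) and substitute the evolution equation for each factor:
\[
\frac{d}{dt}\mathcal{J}_{i_1\cdots i_m}(\Psi) = \kappa \Big(\sum_{k=1}^{m}(f_{i_k} - f_{i_{k+1}})\Big)\,\mathcal{J}_{i_1\cdots i_m}(\Psi).
\]
The key observation is that the parenthesized sum telescopes cyclically: every $f_{i_k}$ appears once with a plus sign (from the factor starting at $i_k$) and once with a minus sign (from the factor ending at $i_k$), so the sum is identically zero. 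Hence $\frac{d}{dt}\mathcal{J}_{i_1\cdots i_m}(\Psi) = 0$.

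For the second identity, I would observe that $|\langle \psi_i | \psi_j\rangle|^{2} = \langle \psi_i | \psi_j\rangle \langle \psi_j | \psi_i\rangle$ is exactly $\mathcal{J}_{i_1 i_2}(\Psi)$ with $m = 2$, $i_1 = i$, $i_2 = j$, so it is a special case of the first conservation law. Alternatively one can verify directly: by Lemma \ref{L3.1} and $\overline{\langle \psi_i | \psi_j\rangle} = \langle \psi_j | \psi_i\rangle$, the two derivative contributions carry coefficients $\kappa(f_i - f_j)$ and $\kappa(f_j - f_i)$, which cancel.

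There is no real obstacle here; the only point requiring care is bookkeeping of the cyclic indexing so that the telescoping is manifest. A clean way to present it is to display the product rule expansion and then exhibit the sum $\sum_{k}(f_{i_k} - f_{i_{k+1}})$ reorganized as $\sum_k f_{i_k} - \sum_k f_{i_{k+1}} = 0$, which is transparent once the cyclic convention $i_{m+1}=i_1$ is fixed.
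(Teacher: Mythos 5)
Your proof is correct and follows essentially the same route as the paper: both apply the evolution equation from Lemma \ref{L3.1} to each factor of the cyclic product and observe that the telescoping sum of differences vanishes, then obtain the second identity as the $m=2$ special case. The only cosmetic difference is your shorthand $f_i$, which makes the cancellation slightly more explicit than the paper's grouping of $(\psi_{i_1}-\psi_{i_2})+\cdots+(\psi_{i_m}-\psi_{i_1})$.
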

\begin{proof} (i)~We use \eqref{C-10} and \eqref{C-11} to obtain
\begin{align*}
\begin{aligned}
&\frac{d}{dt}\big(\langle{\psi_{i_1}|\psi_{i_2}}\rangle\cdot\langle{\psi_{i_2}|\psi_{i_3}}\rangle\cdot\cdots\cdot\langle{\psi_{i_m}|\psi_{i_1}}\rangle\big) =\langle{\psi_{i_1}|\psi_{i_2}}\rangle\cdot\langle{\psi_{i_2}|\psi_{i_3}}\rangle\cdot\cdots\cdot\langle{\psi_{i_m}|\psi_{i_1}}\rangle \\
& \hspace{1cm} \times \sum_{k=1}^N \Big(\langle \psi_k|(\psi_{i_1}-\psi_{i_2})+(\psi_{i_2}-\psi_{i_3})+\cdots+(\psi_{i_m}-\psi_{i_1})\rangle\\
& \hspace{1.2cm}-\langle(\psi_{i_1}-\psi_{i_2})+(\psi_{i_2}-\psi_{i_3})+\cdots+(\psi_{i_m}-\psi_{i_1})|\psi_k\rangle \Big)=0.
\end{aligned}
\end{align*}
(ii)~We set
\[ m=2, \quad i_1=i,\quad i_2=j \]
to get the desired estimate. 
\end{proof}
Next, we recall Barbalat's lemma to be used crucially in the following sections. 
\begin{lemma}  \label{L3.2}
\emph{\cite{Ba}}
\emph{(i)}~Suppose that  a real-valued function $f: [0, \infty) \to \bbr$ is uniformly continuous and it satisfies
\[ \lim_{t \to \infty} \int_0^t f(s)d s \quad \textup{exists}. \]
Then, $f$ tends to zero as $t \to \infty$:
\[ \lim_{t \to \infty} f(t) = 0. \]
\emph{(ii)}~Suppose that a real-valued function $f: [0, \infty) \to \bbr$ is continuously differentiable, and $\lim_{t \to \infty} f(t) = f_\infty \in \bbr$. If $f^{\prime}$ is uniformly continuous, then 
\[ \lim_{t \to \infty} f^{\prime}(t)  = 0. \]
\end{lemma}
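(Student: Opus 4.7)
The statement is the classical Barbalat lemma, so the plan is to prove (i) directly by a contradiction argument that exploits uniform continuity, and then to deduce (ii) as an almost immediate corollary of (i) applied to $f'$.

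For part (i), I would argue by contradiction. Suppose $f(t)$ does not tend to $0$ as $t\to\infty$. Then there exist $\varepsilon_0>0$ and an increasing sequence $t_n\to\infty$ with $|f(t_n)|\geq\varepsilon_0$ for every $n$. By uniform continuity, choose $\delta>0$ so that $|f(t)-f(s)|<\varepsilon_0/2$ whenever $|t-s|<\delta$. Then for $t\in[t_n,t_n+\delta]$ the value $f(t)$ has the same sign as $f(t_n)$ and satisfies $|f(t)|\geq\varepsilon_0/2$; consequently
\[
\left|\int_{t_n}^{t_n+\delta} f(s)\,ds\right| \geq \frac{\varepsilon_0\,\delta}{2}.
\]
After passing to a subsequence we may take the intervals $[t_n,t_n+\delta]$ disjoint, so the Cauchy criterion for the limit $\lim_{t\to\infty}\int_0^t f(s)\,ds$ is violated. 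This contradiction forces $f(t)\to 0$.

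For part (ii), I would reduce to part (i) applied to $f'$. Since $f$ is continuously differentiable, the fundamental theorem of calculus gives
\[
\int_0^t f'(s)\,ds = f(t)-f(0),
\]
and the hypothesis $f(t)\to f_\infty$ shows that the left-hand side converges to $f_\infty-f(0)$ as $t\to\infty$. Because $f'$ is uniformly continuous on $[0,\infty)$ by assumption, part (i) applies to $f'$ and yields $f'(t)\to 0$.

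The argument is essentially bookkeeping; there is no real obstacle beyond choosing the subsequence $\{t_n\}$ carefully so that the intervals $[t_n,t_n+\delta]$ are disjoint (which is easily arranged by thinning) and keeping track of signs so that the lower bound $\varepsilon_0\delta/2$ on the integral survives without cancellation, both of which are guaranteed by the uniform continuity modulus. Since the statement is cited to the reference \cite{Ba}, I expect the authors simply omit the proof, and the sketch above is the standard one.
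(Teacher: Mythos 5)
Your proof is correct; the paper gives no proof of this lemma at all, citing it directly to Barb\u{a}lat's original paper, and your argument (contradiction via uniform continuity for (i), then applying (i) to $f'$ via the fundamental theorem of calculus for (ii)) is the standard one. The only cosmetic point is the endpoint of the interval $[t_n,t_n+\delta]$, where the strict inequality $|t-s|<\delta$ should be adjusted (e.g.\ work on $[t_n,t_n+\delta/2]$), but this does not affect the argument.
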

\begin{theorem} \label{T3.1}
Let $\{\psi_i\}$ be a global smooth solution to \eqref{C-6} with $\kappa > 0$.  Then, we have 
\begin{eqnarray*}
&& (i)~\frac{d}{dt} \sum_{i, j=1}^N \|\psi_i - \psi_j \|_{2}^2 =-2N\kappa\sum_{i=1}^N|\langle{\psi_c|\psi_i}\rangle-\langle{\psi_i|\psi_c}\rangle|^2\leq0. \\
&& (ii)~\lim_{t \to \infty} \sum_{i=1}^N|\langle{\psi_c|\psi_i}\rangle-\langle{\psi_i|\psi_c}\rangle|^2 = 0.
\end{eqnarray*}
\end{theorem}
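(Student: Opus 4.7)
The proof naturally splits into the energy identity (i) and the asymptotic vanishing (ii). The unifying strategy is: derive (i) as an exact algebraic identity by plugging the two-point formula of Lemma \ref{L3.1} into the expansion of $\|\psi_i-\psi_j\|_2^2$, and then feed (i) into Barbalat's lemma to obtain (ii).

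\textbf{Step 1 (Setup for (i)).} I start from the polarization identity
\[
\|\psi_i - \psi_j\|_2^2 = \|\psi_i\|_2^2 + \|\psi_j\|_2^2 - \langle \psi_i | \psi_j \rangle - \langle \psi_j | \psi_i \rangle,
\]
and use Lemma \ref{L3.1} to drop the first two terms (they are conserved in time). Differentiating the remaining two cross terms and invoking the identity for $\tfrac{d}{dt}\langle \psi_i|\psi_j\rangle$ from Lemma \ref{L3.1} gives
\[
\frac{d}{dt}\|\psi_i-\psi_j\|_2^2 = -\kappa\bigl(\langle\psi_c|\psi_i-\psi_j\rangle - \langle\psi_i-\psi_j|\psi_c\rangle\bigr)\bigl(\langle\psi_i|\psi_j\rangle - \langle\psi_j|\psi_i\rangle\bigr),
\]
after cancelling the two summands using that swapping $i\leftrightarrow j$ only flips the sign of the coupling factor. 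The point is that both factors are purely imaginary (each is $2\mathrm{i}$ times an imaginary part), so the product is real.

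\textbf{Step 2 (Collapsing the double sum).} Write $b_i := \operatorname{Im}\langle\psi_c|\psi_i\rangle$ and $c_{ij} := \operatorname{Im}\langle\psi_i|\psi_j\rangle$, so that $c_{ij}=-c_{ji}$ and, by expanding $\psi_c=\tfrac1N\sum_k\psi_k$, one has the key identity $\sum_j c_{ij}=-N b_i$. Step 1 then reads
\[
\frac{d}{dt}\|\psi_i-\psi_j\|_2^2 = 4\kappa\,(b_i-b_j)\,c_{ij}.
\]
Summing over $i,j$ and using the two identities above, the cross sum $\sum_{i,j}(b_i-b_j)c_{ij}$ collapses to $-2N\sum_i b_i^2$. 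Since $|\langle\psi_c|\psi_i\rangle - \langle\psi_i|\psi_c\rangle|^2 = 4b_i^2$, this yields exactly
\[
\frac{d}{dt}\sum_{i,j=1}^N \|\psi_i-\psi_j\|_2^2 = -2N\kappa \sum_{i=1}^N \bigl|\langle\psi_c|\psi_i\rangle - \langle\psi_i|\psi_c\rangle\bigr|^2,
\]
which is manifestly nonpositive. This proves (i).

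\textbf{Step 3 (Asymptotic vanishing via Barbalat).} Integrating (i) from $0$ to $t$ and using $\sum_{i,j}\|\psi_i(t)-\psi_j(t)\|_2^2 \geq 0$ together with conservation of each $\|\psi_j\|_2$ (which bounds the initial value of the left-hand side) shows
\[
\int_0^\infty \sum_{i=1}^N \bigl|\langle\psi_c|\psi_i\rangle - \langle\psi_i|\psi_c\rangle\bigr|^2\,ds < \infty.
\]
To apply Lemma \ref{L3.2}(i), I must verify uniform continuity of $f(t) := \sum_i |\langle\psi_c|\psi_i\rangle - \langle\psi_i|\psi_c\rangle|^2$. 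For this I differentiate $f$ in $t$, expand each $\partial_t\langle\psi_c|\psi_i\rangle$ via \eqref{C-6}, and bound every resulting inner product by the conserved $L^2$-norms $\|\psi_j^{in}\|_2$ through Cauchy--Schwarz; since $\mathcal H$ is Hermitian, the $\mathcal H$-contribution cancels in the same way as in the proof of Lemma \ref{L3.1}. This gives a uniform bound on $f'$, hence uniform continuity of $f$, and Barbalat yields $f(t)\to 0$, which is (ii).

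\textbf{Main obstacle.} The only non-mechanical point is Step 2: one must recognize that the two purely imaginary factors combine to a real expression with the correct sign, and that the double sum $\sum_{i,j}(b_i-b_j)c_{ij}$ telescopes cleanly via the mean-field identity $\sum_j c_{ij}=-Nb_i$. Once this algebraic reduction is in hand, (ii) is a routine Barbalat argument with the usual caveat that uniform continuity must be checked via the boundedness of $f'$, which is immediate from norm conservation.
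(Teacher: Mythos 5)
Your proposal is correct and follows essentially the same route as the paper: part (i) is the same computation (Lemma \ref{L3.1} plus norm conservation, then collapsing the sum over $k$ and $j$ into $\psi_c$), merely rewritten in terms of the imaginary parts $b_i$, $c_{ij}$ rather than the complex differences. For (ii) you invoke Barbalat's Lemma \ref{L3.2}(i) on the dissipation term directly, while the paper applies version (ii) to the monotone quantity $\sum_{i,j}\|\psi_i-\psi_j\|_2^2$; these are equivalent, and your explicit check that $f'$ is bounded because the $\mathcal{H}$-contributions cancel in the inner products is, if anything, slightly more careful than the paper's appeal to boundedness of $d\psi_i/dt$.
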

\begin{proof}
\noindent (i)~We use the conservation of $\|\psi_i \|_{2} = 1$ to get 
\begin{align}
\begin{aligned} \label{NNN-1}
&\frac{d}{dt} \sum_{i, j=1}^N \|\psi_i-\psi_j \|^2_{2}=-\frac{d}{dt}\sum_{i, j=1}^N\left(\langle{\psi_i|\psi_j}\rangle+\langle{\psi_j|\psi_i}\rangle\right)\\
& \hspace{1cm} =-\frac{\kappa}{N}\sum_{i, j, k=1}^N(\langle{\psi_k|\psi_i-\psi_j}\rangle-\langle{\psi_i-\psi_j|\psi_k}\rangle)(\langle{\psi_i|\psi_j}\rangle-\langle{\psi_j|\psi_i}\rangle)\\
& \hspace{1cm}  =-\frac{2\kappa}{N}\sum_{i, j, k=1}^N(\langle{\psi_k|\psi_i}\rangle-\langle{\psi_i|\psi_k}\rangle)(\langle{\psi_i|\psi_j}\rangle-\langle{\psi_j|\psi_i}\rangle)\\
& \hspace{1cm}  =-2N\kappa\sum_{i=1}^N(\langle{\psi_c|\psi_i}\rangle-\langle{\psi_i|\psi_c}\rangle)(\langle{\psi_i|\psi_c}\rangle-\langle{\psi_c|\psi_i}\rangle)\\
& \hspace{1cm}  =-2N\kappa\sum_{i=1}^N|\langle{\psi_c|\psi_i}\rangle-\langle{\psi_i|\psi_c}\rangle|^2\leq0.
\end{aligned}
\end{align}
\noindent (ii)~It follows from the result of (i) and boundedness that  $\sum_{i, j=1}^N \|\psi_i-\psi_j \|^2_{2}$
converges as time goes infinity. Then, it follows from the boundedness of $\frac{d \psi_i}{dt}$ that  
\[
\frac{d^2}{dt^2}\sum_{i, j=1}^N \|\psi_i-\psi_j \|^2_{2} =-2N\kappa\frac{d}{dt}\left(\sum_{i=1}^N(\langle{\psi_c|\psi_i}\rangle-\langle{\psi_i|\psi_c}\rangle)(\langle{\psi_i|\psi_c}\rangle-\langle{\psi_c|\psi_i}\rangle)\right).
\]
By Barbalat's lemma, one has
\[
\lim_{t\rightarrow\infty}\frac{d}{dt} \sum_{i, j=1}^N \|\psi_i-\psi_j \|^2_{2}=0
\]
This and \eqref{NNN-1} yield
\[
\lim_{t \to \infty} \sum_{i=1}^N|\langle{\psi_c|\psi_i}\rangle-\langle{\psi_i|\psi_c}\rangle|^2 = 0.
\]
\end{proof}
\begin{remark} By the non-increasing property of the relative $L^2$-distances between wave functions, we can see that system \eqref{C-6} does not admit a periodic solution except equilibrium solutions.
\end{remark}
Next, we study the solution splitting property of \eqref{C-6}.  Consider the corresponding nonlinear system:
\begin{align}
\begin{aligned} \label{C-12}
& \mathrm{i} \partial_t \varphi_j = \frac{\mathrm{i}\kappa}{N}\sum_{k=1}^N (\langle \varphi_j | \varphi_k \rangle -\langle \varphi_k| \varphi_j \rangle ) \varphi_j, \quad (t,x)\in \bbr_+ \times \bbt^d, \\
& \varphi_j \Big|_{t = 0} =\psi_j^{in}, \quad j = 1, \cdots, N.
\end{aligned}
\end{align}
\begin{theorem} \label{T3.2}
Let $\{\psi_j\}$ and $\{\varphi_j \}$ be global smooth solutions to \eqref{C-6} and \eqref{C-12}, respectively. Then, one has
\[  \psi_j(t, x) = e^{-\mathrm{i}{\mathcal H}t} \varphi_j(t,x), \quad j = 1, \cdots, N. \]
\end{theorem}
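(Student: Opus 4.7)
The plan is to run the standard Lie-Trotter style splitting argument: define the candidate $\tilde{\psi}_j(t,x) := e^{-\mathrm{i}\mathcal{H}t}\varphi_j(t,x)$, verify directly that $\tilde{\psi}_j$ solves the same initial value problem as $\psi_j$ in \eqref{C-6}, and then invoke the uniqueness of global smooth solutions to \eqref{C-6} (whose well-posedness has already been asserted, in analogy with \cite{H-H}) to conclude $\psi_j = \tilde{\psi}_j$.

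The initial data match trivially since $e^{-\mathrm{i}\mathcal{H}\cdot 0} = I$ and $\varphi_j(0,\cdot) = \psi_j^{in}$. For the equation, I would differentiate $\tilde{\psi}_j$ in time using the product rule and the fact that $\mathcal{H}$ commutes with its own propagator, obtaining
\[
\mathrm{i}\partial_t \tilde{\psi}_j = \mathcal{H} e^{-\mathrm{i}\mathcal{H}t}\varphi_j + e^{-\mathrm{i}\mathcal{H}t}\bigl(\mathrm{i}\partial_t\varphi_j\bigr) = \mathcal{H}\tilde{\psi}_j + \frac{\mathrm{i}\kappa}{N}\sum_{k=1}^N\bigl(\langle \varphi_j|\varphi_k\rangle - \langle \varphi_k|\varphi_j\rangle\bigr) e^{-\mathrm{i}\mathcal{H}t}\varphi_j,
\]
after substituting \eqref{C-12}. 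The remaining task is to replace the $\varphi$-inner products by $\tilde{\psi}$-inner products, which is the key algebraic identity: since $\mathcal{H}$ is Hermitian, $e^{-\mathrm{i}\mathcal{H}t}$ is unitary on $L^2(\mathbb{T}^d)$, hence
\[
\langle \tilde{\psi}_j|\tilde{\psi}_k\rangle = \langle e^{-\mathrm{i}\mathcal{H}t}\varphi_j\,|\,e^{-\mathrm{i}\mathcal{H}t}\varphi_k\rangle = \langle \varphi_j|\varphi_k\rangle.
\]
Combining these two computations yields exactly \eqref{C-6} for $\tilde{\psi}_j$, and uniqueness closes the argument.

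The only genuine content is the unitarity of $e^{-\mathrm{i}\mathcal{H}t}$ on $L^2(\mathbb{T}^d)$, which I would justify by noting that $\mathcal{H} = -\tfrac{1}{2}\Delta_x + V$ is (essentially) self-adjoint with real $V$, so Stone's theorem supplies a strongly continuous unitary group. The main obstacle is therefore not analytical depth but a bookkeeping subtlety: since the coupling in \eqref{C-6} is nonlinear in $\psi_j$ through inner products, one must be careful that the \emph{nonlinear} inner-product functional is preserved by the free flow, which is exactly where unitarity enters. Once that identity is in hand, everything else is formal differentiation plus an appeal to uniqueness, paralleling Proposition \ref{P3.1} for the original Schr\"odinger-Lohe model.
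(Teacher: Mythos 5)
Your proposal is correct and follows essentially the same route as the paper: both arguments conjugate by the free propagator, use the unitarity of $e^{\pm\mathrm{i}\mathcal{H}t}$ to preserve the inner products in the coupling term, and then identify the two flows (the paper shows $e^{\mathrm{i}\mathcal{H}t}\psi_j$ solves \eqref{C-12}, while you show $e^{-\mathrm{i}\mathcal{H}t}\varphi_j$ solves \eqref{C-6}, which is the mirror-image of the same computation). Your explicit appeal to uniqueness of the Cauchy problem is a point the paper leaves implicit, but the substance is identical.
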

\begin{proof}
It follows from \eqref{C-6} that 
\[
\mathrm{i}\partial_t\psi_i={\mathcal H} \psi_i+\frac{\mathrm{i}\kappa}{N}\sum_{k=1}^N (\langle{\psi_i| \psi_k}\rangle -\langle{\psi_k| \psi_i}\rangle )\psi_i.
\]
Then, we have
\[
\mathrm{i}\partial_t(e^{\mathrm{i}{\mathcal H} t}\psi_i)=\frac{\mathrm{i}\kappa}{N}\sum_{k=1}^N (\langle{e^{\mathrm{i}{\mathcal H} t}\psi_i| e^{\mathrm{i}{\mathcal H} t}\psi_k}\rangle -\langle{e^{\mathrm{i}{\mathcal H} t}\psi_k| e^{\mathrm{i}{\mathcal H} t}\psi_i}\rangle )e^{\mathrm{i}{\mathcal H} t}\psi_i.
\]
This yields the desired result.
\end{proof}

In the following two sections, we introduce two new models in the Schr\"{o}dinger-Lohe hierarchy.

\section{The Schr\"{o}dinger-Lohe matrix model} \label{sec:4}
\setcounter{equation}{0}
In this section, we briefly discuss basic properties to the SL model and present the infinite-dimensional analog of the complex LS model.

\subsection{A bridge between the SL and LS models} \label{sec:4.1}
For a given ${\mathcal H} = -\frac{1}{2} \Delta_x + V(x)$,  let $\{\phi_{\alpha_1}\}$ and $\{ E_{\alpha_1} \}$ be an orthonormal basis consisting of  eigenfunctions and their corresponding eigenvalues for ${\mathcal H}$: 
\[
{\mathcal H} \phi_{\alpha_1} =E_{\alpha_1} \phi_{\alpha_1}, \quad \alpha_1 = 1, 2, \cdots.
\]
Then the standing wave solution $\Phi_{\alpha_1}(t,x) :=e^{-\mathrm{i}E_{\alpha_1} t}\phi_{\alpha_1}(x)$ satisfies the linear Schr\"{o}dinger equation:
\[
\mathrm{i}\partial_t \Phi_{\alpha_1} = {\mathcal H} \Phi_{\alpha_1}, \quad \alpha_1  = 1, 2,\cdots,
\]
and we set $\psi_j$ to be a linear combination of $\{\Phi_{\alpha_1}\}_{\alpha_1}$ as follows:
\begin{equation} \label{D-0}
\psi_j(t,x) =\sum_{\alpha_1}[v_{j}(t)]_{\alpha_1} \Phi_{\alpha_1}(t,x), \quad j = 1, \cdots, N.
\end{equation}
Suppose that $\psi_j$ satisfies the SL model with $\|\psi_j \|_2 = 1$:
\begin{equation} \label{D-0-1}
\mathrm{i}\partial_t \psi_j= {\mathcal H} \psi_j +\frac{\mathrm{i}\kappa}{N}\sum_{k=1}^N (\psi_k-\langle{\psi_k| \psi_j}\rangle \psi_j).
\end{equation}
We use \eqref{D-0} to rewrite the L.H.S. of \eqref{D-0-1} to see 
\begin{align}
\begin{aligned} \label{D-1}
\mathrm{i}\partial_t\psi_j &=\sum_{\alpha_1} \left( [v_{j}]_{\alpha_1} \mathrm{i}\partial_t\Phi_{\alpha_1} +[\dot{v}_{j}]_{\alpha_1} \mathrm{i}\Phi_{\alpha_1} \right) =\sum_{\alpha_1} \left( [v_{j}]_{\alpha_1} {\mathcal H} \Phi_{\alpha_1} + [\dot{v}_{j}]_{\alpha_1} \mathrm{i}\Phi_{\alpha_1} \right)  \\
&={\mathcal H} \psi_j+ {\mathrm i} \sum_{\alpha_1} [\dot{v}_{j}]_{\alpha_1} \Phi_{\alpha_1}.
\end{aligned}
\end{align}
Now, we equate \eqref{D-0-1} and \eqref{D-1} to get 
\begin{align*}
\begin{aligned}
&{\mathcal H} \psi_j+\mathrm{i}\sum_{\alpha}[\dot{v}_{j}]_{\alpha_1}\Phi_{\alpha_1} = {\mathcal H} \psi_j+\frac{\mathrm{i}\kappa}{N}\sum_{k=1}^N(\psi_k-\langle{\psi_k|\psi_j}\rangle \psi_j) \\
& \hspace{1cm} = {\mathcal H} \psi_j+\frac{\mathrm{i}\kappa}{N}\sum_{k=1}^N\sum_{\alpha_1} ( [v_{k}]_{\alpha_1}-\langle{\psi_k|\psi_j}\rangle [v_{j}]_{\alpha_1})\Phi_{\alpha_1}.
\end{aligned}
\end{align*}
This yields
\[
\sum_{\alpha_1} [\dot{v}_{j}]_{\alpha_1}\Phi_{\alpha_1} =\frac{\kappa}{N}\sum_{k=1}^N\sum_{\alpha_1} ( [v_{k}]_{\alpha_1}-\langle{\psi_k|\psi_i}\rangle [v_{j}]_{\alpha_1})\Phi_{\alpha_1}.
\]
Since $\{\Phi_{\alpha_1} \}$ is an orthonormal basis, one has 
\begin{equation}\label{D-2}
\frac{d}{dt} [v_{j}]_{\alpha_1}=\frac{\kappa}{N}\sum_{k=1}^N([v_{k}]_{\alpha_1}-\langle{\psi_k|\psi_j}\rangle [v_{j}]_{\alpha_1}), \quad j = 1, \cdots, N,~~ \alpha_1 = 1, 2, \cdots.
\end{equation}
For each $j = 1, \cdots, N$, we define an infinite complex vector in $(\ell^\infty \cap \ell^2)(\bbz_+)$:
\[   v_j = ([v_{j}]_{1}, [v_{j}]_{2}, \cdots ). \]
We use the definition of $\langle \cdot | \cdot \rangle$ to get
\begin{align} \label{D-3}
\begin{aligned}
\langle{\psi_k | \psi_j}\rangle &=\sum_{\alpha_1, \beta_1} \Big \langle [v_{k}]_{\alpha_1}\Phi_{\alpha_1} \Big | [v_{j}]_{\beta_1}\Phi_{\beta_1} \Big \rangle=\sum_{\alpha_1, \beta_1}[\bar{v}_{k}]_{\alpha_1} [v_{j}]_{\beta_1} \Big \langle \Phi_{\alpha_1} \Big | \Phi_{\beta_1} \Big \rangle \\
&=\sum_{\alpha_1} [\bar{v}_{k}]_{\alpha_1} [v_{j}]_{\alpha_1} = \langle v_k | v_j \rangle.
\end{aligned}
\end{align}
Finally, we combine \eqref{D-2} and \eqref{D-3} to derive an infinite-dimensional counterpart for the complex Lohe sphere model on $(\ell^2 \cap \ell^\infty)(\bbz_+)$:
\begin{equation}\label{D-4}
{\dot v}_j =\frac{\kappa}{N}\sum_{k=1}^N(v_k-\langle{v_k|v_j}\rangle v_j), \quad j = 1, \cdots, N.
\end{equation}
%

\subsection{The SLM model} \label{sec:4.2}
In the previous subsection, we showed that the SL model can be reduced to the extended complex LS model on $(\ell^2 \cap \ell^{\infty})(\bbz_+)$. Below, we propose a Schrodinger-Lohe type model which can be reduced to the LM model in Section \ref{sec:2.2}, and study its emergent dynamics in a priori setting by assuming a global well-posedness of a smooth solution.  \newline

Recall the complex LM model for $d_1 \times d_2$ complex matrix $A_j$:
\begin{equation} \label{D-4-1}
\underbrace{{\dot A}_j - B A_j}_{\mbox{free flow}} = \underbrace{\kappa_0(A_c A_j^*A_j-A_j A_c^*A_j)+\kappa_1(A_jA_j^*A_c-A_jA_c^*A_j)}_{\mbox{cubic mean-field interactions}}.
\end{equation}
In the sequel, we present the Schr\"{o}dinger-Lohe type model which can be associated with the generalized Lohe matrix model \eqref{D-4-1}. 
\newline

Next, we introduce the Schrodinger-Lohe matrix model(SLM) for a homogeneous ensemble. First, we set 
\[ {\mathcal H}  :=-\frac{1}{2} \Delta_{x_1}-\frac{1}{2} \Delta_{x_2} +V, \quad V = V(x_1, x_2), \quad (x_1, x_2) \in \bbt^{d} \times \bbt^d,
\]
where the one-body potential is assumed to be continuous for our emergent dynamics. However, for a global well-posedness of classical solutions based on energy method, we might need to assume high regularity of the potential. \newline

For notational simplicity, we suppress $t$-dependence on $\Psi$ and use a handy notation for a partial inner product:
\[ \Psi_j(x_1, x_2) \equiv \Psi_j(t, x_1, x_2), \quad t \geq 0,~~(x_1, x_2) \in \bbt^d \times \bbt^d, \]
and for $\Psi(x_1, x_2)$ and ${\tilde \Psi}(x_1, x_2)$, we set 
\begin{align}
\begin{aligned} \label{NNN-2}
& \langle \Psi(x_2^*)| {\tilde \Psi}(x_2) \rangle := \int_{\bbt^d} \overline{\Psi(x_1^*, x_2^*)} {\tilde \Psi}(x^*_1,  x_2) dx_1^*, \\
& \langle \Psi(x_1^*)| {\tilde \Psi}(x_1) \rangle := \int_{\bbt^d} \overline{\Psi(x_1^*, x_2^*)} {\tilde \Psi}(x_1,  x^*_2) dx_2^*.
\end{aligned}
\end{align}
Now, we propose the Schrodinger-Lohe matrix(SLM) model as follows: for $t > 0$ and $x_i, x_i^* \in \bbt^d$, 
\begin{equation} \label{D-5}
\begin{cases}
\displaystyle \mathrm{i}\partial_t \Psi_j(x_1, x_2) - {\mathcal H} \Psi_j(x_1, x_2) \\
\displaystyle = \mathrm{i}\kappa_0 \int_{\bbt^{2d}} \Big(  \Psi_c(x_1, x_2^*)     \overline{ \Psi_j(x_1^*, x_2^*)}\Psi_j(x_1^*, x_2) 
 - \Psi_j(x_1, x_2^*)  \overline{\Psi_c(x_1^*, x_2^*)}\Psi_j(x_1^*, x_2)  \Big)dx_1^* dx_2^*\\
\displaystyle + \mathrm{i}\kappa_1 \int_{\bbt^{2d}} \Big( \Psi_j(x_1, x_2^*)     \overline{ \Psi_j(x_1^*, x_2^*)}\Psi_c(x_1^*, x_2)-\Psi_j(x_1, x_2^*)     \overline{ \Psi_c(x_1^*, x_2^*)}\Psi_j(x_1^*, x_2) \Big) dx_1^*dx_2^*, \\
\displaystyle  \Psi_j \Big|_{t = 0} = \Psi_j^{in}, \quad j = 1, \cdots, N,
\end{cases}
\end{equation}
where $\Psi_c := \frac{1}{N} \sum_{k=1}^{N} \Psi_k$. \newline

Under the handy notation \eqref{NNN-2}, system \eqref{D-5} becomes  
\begin{equation} \label{D-5-1}
\begin{cases}
\displaystyle \mathrm{i}\partial_t \Psi_j(x_1, x_2) - {\mathcal H} \Psi_j(x_1, x_2) \\
\displaystyle \hspace{0.2cm} = \mathrm{i}\kappa_0 \int_{\bbt^d} \Big( \Big \langle \Psi_j(x_2^*) \Big | \Psi_j(x_2)  \Big \rangle \Psi_c(x_1, x_2^*)   
 - \Big \langle \Psi_c(x_2^*)  \Big | \Psi_j(x_2) \Big \rangle \Psi_j(x_1, x_2^*)  \Big) dx_2^*\\
\displaystyle  \hspace{0.4cm} + \mathrm{i}\kappa_1 \int_{\bbt^d} \Big(  \Big \langle \Psi_j(x_2^*) \Big | \Psi_c(x_2) \Big \rangle  - \Big \langle \Psi_c(x_2^*) \Big| \Psi_j(x_2) \Big \rangle \Big) \Psi_j(x_1, x_2^*) dx_2^*, \\
\displaystyle  \Psi_j \Big|_{t = 0} = \Psi_j^{in}, \quad j = 1, \cdots, N.
\end{cases}
\end{equation}
Since a global well-posedness of \eqref{D-5} can be treated using a standard energy method as in \cite{H-H} for the SL model in a suitable Sobolev space setting, we will focus on the emergent dynamics in a priori setting.  Notice that the R.H.S. of \eqref{D-4-1} and \eqref{D-5} are structurally the same. \begin{proposition} \label{P4.1}
Let $\{\Psi_j \}$ be a global smooth solution to \eqref{D-5} with the initial data $\|\Psi^{in}_j \| = 1$.  Then $L^2$-norm of $\Psi_j$ is a conserved quantity:
\[
\frac{d}{dt}\|\Psi_j(t)\|_{2}=0, \quad t > 0,~~i = 1, \cdots, N.
\]
where 
\[  \|\Psi_j(t) \|_{2} := \int_{\bbt^{2d}} |\Psi_j(t, x_1, x_2) |^2 dx_2 dx_1, \quad t \geq 0. \]
\end{proposition}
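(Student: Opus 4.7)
The plan is to differentiate $\|\Psi_j\|_2^2 = \langle \Psi_j | \Psi_j\rangle$ directly from \eqref{D-5}, rewritten as $\partial_t \Psi_j = -\mathrm{i}\mathcal{H}\Psi_j + \kappa_0 I_0^{(j)} + \kappa_1 I_1^{(j)}$, where $I_0^{(j)}$ and $I_1^{(j)}$ stand for the two bracketed $\bbt^{2d}$-integrals on the right of \eqref{D-5}. This gives
$$\frac{d}{dt}\|\Psi_j\|_2^2 = 2\,\mathrm{Re}\langle \Psi_j | \partial_t \Psi_j\rangle = -2\,\mathrm{Re}\bigl(\mathrm{i}\langle \Psi_j | \mathcal{H}\Psi_j\rangle\bigr) + 2\kappa_0\,\mathrm{Re}\langle \Psi_j | I_0^{(j)}\rangle + 2\kappa_1\,\mathrm{Re}\langle \Psi_j | I_1^{(j)}\rangle.$$
Since $\mathcal{H} = -\tfrac{1}{2}(\Delta_{x_1} + \Delta_{x_2}) + V$ with real continuous $V$ is self-adjoint on $L^2(\bbt^{2d})$, the Hamiltonian term vanishes because $\langle \Psi_j | \mathcal{H}\Psi_j\rangle \in \mathbb{R}$. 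The task therefore reduces to showing that $\langle \Psi_j | I_0^{(j)}\rangle$ and $\langle \Psi_j | I_1^{(j)}\rangle$ are both purely imaginary.

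To manage the four integration variables $x_1, x_2, x_1^*, x_2^*$ cleanly, I would introduce the shorthand
$$M_{jk}(a, b) := \int_{\bbt^d} \overline{\Psi_j(y, a)}\,\Psi_k(y, b)\,dy, \qquad a, b \in \bbt^d,$$
which satisfies the conjugation symmetry $\overline{M_{jk}(a, b)} = M_{kj}(b, a)$. Applying Fubini and collapsing the $x_1$ and $x_1^*$ integrations, after pairing the expression for $I_\ell^{(j)}$ with $\overline{\Psi_j(x_1, x_2)}$ and integrating over $(x_1, x_2) \in \bbt^{2d}$, gives the compact identities
\begin{align*}
\langle \Psi_j | I_0^{(j)}\rangle &= \int_{\bbt^{2d}} \bigl[M_{jc}(x_2, x_2^*)\,M_{jj}(x_2^*, x_2) - M_{jj}(x_2, x_2^*)\,M_{cj}(x_2^*, x_2)\bigr]\,dx_2\,dx_2^*, \\
\langle \Psi_j | I_1^{(j)}\rangle &= \int_{\bbt^{2d}} M_{jj}(x_2, x_2^*)\,\bigl[M_{jc}(x_2^*, x_2) - M_{cj}(x_2^*, x_2)\bigr]\,dx_2\,dx_2^*.
\end{align*}

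The last step is to verify that each of these integrals equals the negative of its own complex conjugate. For the $\kappa_0$ piece, a termwise application of $\overline{M_{jk}(a,b)} = M_{kj}(b,a)$ already swaps the two summands with an overall minus sign, and no dummy relabeling is needed. For the $\kappa_1$ piece, the termwise conjugation alone is not enough; after using the $M$-symmetry one additionally interchanges the dummies $x_2 \leftrightarrow x_2^*$ in the outer integral, and this produces the negative of the original expression. Hence both real parts vanish, and $\|\Psi_j(t)\|_2 = \|\Psi_j^{in}\|_2 = 1$ for $t \geq 0$.

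The main obstacle is purely bookkeeping: the four variables $x_1, x_2, x_1^*, x_2^*$ together with the fact that the conjugation symmetry of $M$ reverses both the label order and the argument order make it easy to lose a minus sign. The substantive content is light once the $M$-notation is in place, but one must note that the $\kappa_1$ check requires the extra dummy swap while the $\kappa_0$ check does not; this asymmetry reflects the different tensorial placements of $\Psi_c$ in the two coupling terms of \eqref{D-5}, mirroring the corresponding distinction between $\kappa_0$ and $\kappa_1$ in the finite-dimensional Lohe matrix model \eqref{D-4-1}.
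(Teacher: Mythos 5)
Your proof is correct and follows essentially the same route as the paper: differentiate $\|\Psi_j\|_2^2$, kill the Hamiltonian contribution by self-adjointness of $\mathcal{H}$, and show the two coupling contributions have vanishing real part. In fact your $M_{jk}(a,b)$ bookkeeping makes explicit the cancellation of the nonlinear terms (termwise conjugation for the $\kappa_0$ piece, conjugation plus the dummy swap $x_2\leftrightarrow x_2^*$ for the $\kappa_1$ piece) that the paper's proof only asserts, and both identities check out.
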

\begin{proof} By definition of $\|\Psi_j\|_{2}^2$, one has
\begin{equation} \label{D-5-0-0}
\frac{d}{dt}\|\Psi_j\|_{2}^2 =\int_{\bbt^{2d}} (\partial_t \Psi_j(x_1, x_2) )\overline{\Psi_j(x_1, x_2)}dx_1 dx_2 + \mbox{(c.c.)},
\end{equation}
where (c.c.) denotes the complex conjugate of the first term. \newline

This yields
\begin{align}
\begin{aligned} \label{D-5-0-1}
&\int_{\bbt^{2d}} (\partial_t \Psi_j(x_1, x_2) )\overline{\Psi_j(x_1, x_2)}dx_1 dx_2 =-\int_{\bbt^{2d}} \mathrm{i} {\mathcal H} \Psi_j(x_1, x_2)\overline{\Psi_j(x_1, x_2)} dx_1dx_2 \\
& \hspace{0.2cm} +\frac{\kappa_0}{N}\sum_{k=1}^N \int_{\bbt^{4d}} \left(\Psi_k(x_1, x_2^*)\overline{\Psi_j(x_1^*, x_2^*)}\Psi_j(x_1^*, x_2)-\Psi_j(x_1, x_2^*)\overline{\Psi_k(x_1^*, x_2^*)}\Psi_j(x_1^*, x_2)\right) \\
& \hspace{2.5cm} \times \overline{\Psi_j(x_1, x_2)}dx_1^*dx_2^*dx_1dx_2 \\
& \hspace{0.2cm}  +\frac{\kappa_1}{N}\sum_{k=1}^N\int_{\bbt^{4d}} \left(\Psi_j(x_1, x_2^*)\overline{\Psi_j(x_1^*, x_2^*)}\Psi_k(x_1^*, x_2)-\Psi_j(x_1, x_2^*)\overline{\Psi_k(x_1^*, x_2^*)}\Psi_j(x_1^*, x_2)\right) \\
& \hspace{2.5cm} \times \overline{\Psi_j(x_1, x_2)}dx_1^*dx_2^*dx_1dx_2.
\end{aligned}
\end{align}
We use ${\mathcal H}^{\dagger} = {\mathcal H}$, \eqref{D-5-0-0} and \eqref{D-5-0-1} to get 
\[
\int_{\bbt^{2d}} (\partial_t \Psi_j(x_1, x_2) )\overline{\Psi_j(x_1, x_2)}dx_1dx_2 +(c.c.)=0.
\]
This yields the desired estimate.
\end{proof}

\vspace{0.5cm}

Consider the Cauchy problem to the nonlinear system associated with \eqref{D-5}:
\begin{equation}\label{D-5-0}
\begin{cases}
\displaystyle \mathrm{i}\partial_t \varphi_j(x_1, x_2) \\
\displaystyle  \hspace{0.5cm} = \mathrm{i}\kappa_0 \int_{\bbt^d} \Big( \Big \langle \varphi_j(x_2^*) \Big | \varphi_j(x_2)  \Big \rangle \varphi_c(x_1, x_2^*)   
 - \Big \langle \varphi_c(x_2^*)  \Big | \varphi_j(x_2) \Big \rangle \varphi_j(x_1, x_2^*)  \Big) dx_2^*\\
\displaystyle  \hspace{0.7cm} + \mathrm{i}\kappa_1 \int_{\bbt^d} \Big(  \Big \langle \varphi_i(x_2^*) \Big | \varphi_c(x_2) \Big \rangle - \Big \langle \varphi_c(x_2^*) \Big| \varphi_j(x_2) \Big \rangle \Big) \varphi_j(x_1, x_2^*) dx_2^*, \\
\displaystyle \varphi_j \Big|_{t = 0} = \Psi_j^{in}.
\end{cases}
\end{equation}
Now we will show the solution splitting property of the SLM model \eqref{D-5}.
\begin{proposition}\label{P4.2}
Suppose that the one-body potential $V$ is additive in the sense that
\begin{equation} \label{D-5-1}
 V(x_1, x_2) = V_1(x_1) + V_2(x_2),
 \end{equation}
and let $\{\Psi_j \}$ and $\{\varphi_j \}$ be global smooth solutions to \eqref{D-5} and \eqref{D-5-0}, respectively. Then, one has 
\[  \Psi_j(t, x_1, x_2) =  e^{-\mathrm{i} {\mathcal H} t} \varphi_j(t,x_1, x_2), \quad j = 1, \cdots, N.   \]
\end{proposition}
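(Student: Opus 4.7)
The plan is to adapt the argument of Theorem~\ref{T3.2}: set $\tilde\varphi_j(t,x_1,x_2) := e^{\mathrm{i}\mathcal{H}t}\Psi_j(t,x_1,x_2)$ and show that $\tilde\varphi_j$ solves the Cauchy problem \eqref{D-5-0} with initial datum $\Psi_j^{in}$. Since $\{\varphi_j\}$ is the unique smooth solution of \eqref{D-5-0} with the same initial data (granted by the global well-posedness discussion preceding the proposition), this forces $\tilde\varphi_j = \varphi_j$, i.e.\ $\Psi_j = e^{-\mathrm{i}\mathcal{H}t}\varphi_j$. Differentiating $\tilde\varphi_j$ in $t$ and using that $\mathcal{H}$ commutes with $e^{\mathrm{i}\mathcal{H}t}$, the linear Schr\"odinger contribution cancels exactly as in Theorem~\ref{T3.2}, so the whole argument reduces to the single commutation identity
\[
e^{\mathrm{i}\mathcal{H}t}\bigl[\mbox{cubic terms in }\Psi\bigr] \;=\; \bigl[\mbox{same cubic terms in }\tilde\varphi\bigr]
\]
for each of the four summands (two weighted by $\kappa_0$, two by $\kappa_1$) on the right-hand side of \eqref{D-5}.

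The mechanism that makes this commutation work is the separation of $\mathcal{H}$ afforded by the additivity hypothesis \eqref{D-5-1}: writing $\mathcal{H}_i := -\tfrac12\Delta_{x_i} + V_i(x_i)$, one has $\mathcal{H} = \mathcal{H}_1 + \mathcal{H}_2$, and hence the propagator factorizes as a tensor product $e^{-\mathrm{i}\mathcal{H}t} = e^{-\mathrm{i}\mathcal{H}_1 t}\otimes e^{-\mathrm{i}\mathcal{H}_2 t}$ acting on the two slots independently. Concretely, I would fix orthonormal eigenbases $\{\phi_\alpha\}$ of $\mathcal{H}_1$ and $\{\chi_\beta\}$ of $\mathcal{H}_2$ with eigenvalues $E_\alpha$, $F_\beta$, so that $\{\phi_\alpha(x_1)\chi_\beta(x_2)\}$ is an orthonormal eigenbasis of $\mathcal{H}$ with eigenvalues $E_\alpha+F_\beta$, and expand
\[
\Psi_j(t,x_1,x_2) = \sum_{\alpha,\beta}[A_j(t)]_{\alpha\beta}\,e^{-\mathrm{i}(E_\alpha+F_\beta)t}\,\phi_\alpha(x_1)\chi_\beta(x_2),
\]
whence $\tilde\varphi_j = \sum_{\alpha,\beta}[A_j(t)]_{\alpha\beta}\phi_\alpha(x_1)\chi_\beta(x_2)$. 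Substituting into a cubic integral such as $\int\Psi_c(x_1,x_2^*)\overline{\Psi_j(x_1^*,x_2^*)}\Psi_j(x_1^*,x_2)\,dx_1^*dx_2^*$, the $x_1^*$- and $x_2^*$-integrations produce Kronecker deltas that pair the internal indices, and the three time-dependent phases combine into exactly $e^{-\mathrm{i}(E_{\alpha'}+F_{\beta'})t}$, where $\phi_{\alpha'}(x_1)\chi_{\beta'}(x_2)$ is the surviving output basis function. Applying $e^{\mathrm{i}\mathcal{H}t}$ strips this phase and reproduces the cubic integral with $\tilde\varphi$'s in place of $\Psi$'s. The remaining three cubic summands are handled by the same phase bookkeeping.

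The hard part is this phase-matching step, and it is precisely where the additivity hypothesis is indispensable: if $V$ does not split as $V_1(x_1)+V_2(x_2)$, the eigenfunctions of $\mathcal{H}$ need not factor as $\phi_\alpha(x_1)\chi_\beta(x_2)$, and the $x_1^*$-, $x_2^*$-integrations can no longer decouple the temporal phases from the spatial structure, so the free flow and the cubic interactions would fail to commute. With the commutation identity verified for all four cubic terms, $\tilde\varphi_j$ satisfies \eqref{D-5-0} with the correct initial data, and the splitting $\Psi_j = e^{-\mathrm{i}\mathcal{H}t}\varphi_j$ follows by uniqueness.
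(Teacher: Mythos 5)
Your proposal is correct, and its skeleton coincides with the paper's: conjugate by the free propagator, observe that the linear part cancels, and reduce everything to showing that $e^{\mathrm{i}\mathcal{H}t}$ commutes with the four cubic interaction integrals. Where you differ is in how that commutation is justified. The paper first proves the operator factorization $e^{-\mathrm{i}\mathcal{H}t}=e^{-\mathrm{i}\mathcal{H}_1t}e^{-\mathrm{i}\mathcal{H}_2t}=e^{-\mathrm{i}\mathcal{H}_2t}e^{-\mathrm{i}\mathcal{H}_1t}$ from the additivity of $V$ and the commutativity ${\mathcal H}_1\circ{\mathcal H}_2={\mathcal H}_2\circ{\mathcal H}_1$, and then disposes of the nonlinear terms in one line by invoking ``the isometry of $e^{\mathrm{i}\mathcal{H}t}$'' --- the point being that each partial inner product in \eqref{D-5-0} integrates over a single slot, the slot-wise propagator $e^{-\mathrm{i}\mathcal{H}_2t}$ is unitary on $L^2(\bbt^d)$ in that slot, and the leftover slot carries a factor of $e^{-\mathrm{i}\mathcal{H}_1t}$ that recombines into $e^{-\mathrm{i}\mathcal{H}t}$ on the output. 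You instead verify the same identity by expanding in the product eigenbasis $\{\phi_\alpha\chi_\beta\}$ and checking that the three phases $e^{-\mathrm{i}(E_\alpha+F_\beta)t}$, $e^{+\mathrm{i}(E_\gamma+F_\beta)t}$, $e^{-\mathrm{i}(E_\gamma+F_\eta)t}$ collapse to the single output phase $e^{-\mathrm{i}(E_\alpha+F_\eta)t}$ after the Kronecker deltas are applied; this is exactly the computation the paper performs later, in the reduction of \eqref{D-5} to the Lohe matrix model (Proposition \ref{P4.3}), so you are in effect importing that spectral bookkeeping to fill in the step the paper leaves terse. Your version is more explicit at the one point where the paper's proof is least detailed, at the cost of assuming completeness of the eigenbases of ${\mathcal H}_1$ and ${\mathcal H}_2$ (which the paper also assumes, citing \cite{Har}); and your explicit appeal to uniqueness of the Cauchy problem \eqref{D-5-0} to identify $\tilde\varphi_j$ with $\varphi_j$ is a step the paper uses only implicitly. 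Both arguments isolate the additivity hypothesis as the reason the free flow and the cubic interactions commute, and both are sound.
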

\begin{proof}
First, we claim:
\begin{equation}\label{D-5-1-1}
e^{-\mathrm{i} {\mathcal H} (x_1, x_2)t}=e^{-\mathrm{i}{\mathcal H}_1(x_1)t}e^{-\mathrm{i} {\mathcal H}_2(x_2)t} =  e^{-\mathrm{i}{\mathcal H}_2(x_2)t} e^{-\mathrm{i}{\mathcal H}_1(x_1)t},
\end{equation}
where ${\mathcal H}_i = -\frac{1}{2} \Delta_{x_i} + V_i(x_i)$. \newline

\noindent {\it Proof of \eqref{D-5-1-1}}:  For any $C^2$-test function $f = f(x_1, x_2)$, we use \eqref{D-5-1} and definition of ${\mathcal H}$ to get 
\begin{align}
\begin{aligned} \label{D-5-2-1}
{\mathcal H} f(x_1, x_2) &= \Big(-\frac{1}{2} \Delta_{x_1} f(x_1, x_2) +V_1(x_1) f(x_1, x_2)  \Big) \\
&+ \Big (- \frac{1}{2} \Delta_{x_2} f(x_1, x_2) +V_2(x_2) f(x_1, x_2) \Big) \\
&={\mathcal H}_1 f(x_1, x_2)+ {\mathcal H}_2 f(x_1, x_2)  \\
&= {\mathcal H}_2 f(x_1, x_2)+ {\mathcal H}_1f(x_1, x_2)
\end{aligned}
\end{align}
and 
\begin{align}
\begin{aligned} \label{D-5-2-2}
&({\mathcal H}_1 \circ {\mathcal H}_2) f(x_1, x_2) = {\mathcal H}_1\Big (-\frac{1}{2} \Delta_{x_2} f(x_1, x_2) + V_2(x_2) f(x_1, x_2) \Big) \\
&=\frac{1}{4}\Delta_{x_1}\Delta_{x_2}f(x_1, x_2)-\frac{1}{2}V_1(x_1)\Delta_{x_2}f(x_1, x_2)-\frac{1}{2}V_2(x_2)\Delta_{x_1}f(x_1, x_2)+V_1(x_1)V_2(x_2)f(x_1, x_2)\\
&=\frac{1}{4}\Delta_{x_2}\Delta_{x_1}f(x_1, x_2)-\frac{1}{2}V_2(x_2)\Delta_{x_1}f(x_1, x_2)-\frac{1}{2}V_1(x_1)\Delta_{x_2}f(x_1, x_2)+V_2(x_2)V_1(x_1)f(x_1, x_2)\\
& = ({\mathcal H}_2 \circ {\mathcal H}_1) f(x_1, x_2).
\end{aligned}
\end{align}
Now, we use \eqref{D-5-2-1} and \eqref{D-5-2-2} to get the desired estimate \eqref{D-5-1-1}:
\begin{align*}\label{D-5-3}
e^{-\mathrm{i} {\mathcal H}t} = e^{-\mathrm{i}{\mathcal H}_1 t -{\mathrm i} {\mathcal H}_2 t} =e^{-\mathrm{i} {\mathcal H}_1t}e^{-\mathrm{i} {\mathcal H}_2t} = e^{-\mathrm{i} {\mathcal H}_2t} e^{-\mathrm{i} {\mathcal H}_1t}.
\end{align*}

\vspace{0.5cm}

\noindent $\bullet$~Step B: For $t \in \bbr_+,~~x_i \in \bbt^d, \quad j = 1, \cdots, N$, we set
\begin{equation*} \label{D-5-4}
\varphi_j(t,x_1, x_2) := e^{\mathrm{i} {\mathcal H} t}\Psi_j(t,x_1, x_2), \quad \mbox{or} \quad \Psi_j(t,x_1, x_2) = e^{-\mathrm{i} {\mathcal H} t} \varphi_j(t,x_1, x_2).
\end{equation*} 
Suppose that $\Psi_j$ satisfies system \eqref{D-5}. Then, it suffices to show that $\varphi_j$ satisfies \eqref{D-5-0}. For this, we multiply $e^{\mathrm{i} {\mathcal H} t}$ to \eqref{D-5} and compare the L.H.S. and R.H.S of the resulting relation to derive \eqref{D-5-0} for $\varphi_j$.  \newline

\noindent $\diamond$~(Estimate of L.H.S.): By direct calculation, one has 
\begin{equation} \label{D-5-4-1}
e^{\mathrm{i} {\mathcal H} t}\ \mathrm{i}\partial_t \Psi_j - e^{\mathrm{i} {\mathcal H} t} {\mathcal H} \Psi_j = {\mathrm i}  \partial_t (e^{\mathrm{i} {\mathcal H} t} \Phi_j ) = {\mathrm i}  \partial_t \varphi_j.
\end{equation}

\vspace{0.5cm}

\noindent $\diamond$~(Estimate of R.H.S.): Recall the R.H.S.:
\begin{align}
\begin{aligned}\label{D-5-5}
&{\mathrm{i}\kappa_0}e^{\mathrm{i}H t} \int_{\bbt^{d}} \Big(  \Big \langle \Psi_j(x_2^*) \Big | \Psi_j(x_2) \Big \rangle  \Psi_c(x_1, x_2^*) -  \Big \langle \Psi_c(x_2^*) \Big | \Psi_j(x_2) \Big \rangle \Psi_j(x_1, x_2^*) \Big) dx_2^* \\
&+{\mathrm{i}\kappa_1}e^{\mathrm{i}Ht} \int_{\bbt^{d}} \Big( \Big \langle \Psi_j(x_2^*) \Big| \Psi_c(x_2) \Big \rangle 
-  \Big \langle \Psi_c(x_2^*) \Big| \Psi_j(x_2) \Big \rangle \Big) \Psi_j(x_1, x_2^*) dx_2^*.
\end{aligned}
\end{align}
Then, we use the isometry of $e^{\mathrm{i} {\mathcal H} t}$ to rewrite \eqref{D-5-5} as 
\begin{align}
\begin{aligned}\label{D-5-6}
&{\mathrm{i}\kappa_0} \int_{\bbt^{d}} \Big(  \Big \langle \varphi_j(x_2^*) \Big | \varphi_j(x_2) \Big \rangle  \varphi_c(x_1, x_2^*) -  \Big \langle \varphi_c(x_2^*) \Big | \varphi_j(x_2) \Big \rangle \varphi_j(x_1, x_2^*) \Big) dx_2^* \\
&+{\mathrm{i}\kappa_1} \int_{\bbt^{d}} \Big( \Big \langle \varphi_j(x_2^*) \Big| \varphi_c(x_2) \Big \rangle \varphi_j(x_1, x_2^*) 
-  \Big \langle \varphi_c(x_2^*) \Big| \varphi_j(x_2) \Big \rangle \varphi_j(x_1, x_2^*) \Big )dx_2^*.
\end{aligned}
\end{align}
Finally, we combine \eqref{D-5-4-1} and \eqref{D-5-6} to get the desired system \eqref{D-5-0}. 
\end{proof}

\subsection{Reduction to the LM model} \label{sec:4.3} Next, we discuss how system \eqref{D-5} can be reduced to the generalized Lohe matrix model \eqref{B-4} using the same strategy in Section \ref{sec:4.1}  \newline 

Suppose that ${\mathcal H}$ is additive without any interaction Hamiltonians:
\[ {\mathcal H} = {\mathcal H}_1 + {\mathcal H}_2, \quad {\mathcal H}_i = -\frac{1}{2} \Delta_{x_i}  + V_i(x_i), \quad x_i \in \bbt^d, \quad  i = 1, 2. \]

From the chapter 11 of the book \cite{Har} we know that the eigenfunctions of the Hermitian operator $\mathcal{H}$ forms the orthonormal basis. Let $\{\phi^1_{\alpha}(x_1)\}_{\alpha_1 = 1}^{\infty}$ and $\{\phi^2_{\alpha_2} (x_2)\}_{\alpha_2 = 1}^{\infty}$ be two orthonormal basis consisting of eigenfunctions of ${\mathcal H}_i$:
\[ {\mathcal H}_1 \phi^1_{\alpha_1} =E^1_{\alpha_1} \phi^1_{\alpha_1} \quad \mbox{and} \quad  {\mathcal H}_2 \phi^2_{\alpha_2}=E^2_{\alpha_2} \phi^2_{\alpha_2}. \]
Now, we introduce standing wave solutions $\Phi^1_{\alpha_1}$ and $\Phi^2_{\alpha_2}$ as follows:
\[
\Phi^1_{\alpha_1} (t, x_1) := e^{-\mathrm{i}E^1_{\alpha}t} \phi^1_{\alpha_1} (x_1) \quad \mbox{and} \quad \Phi^2_{\alpha_2}(t, x_2) := e^{-\mathrm{i}E^2_{\alpha}t}  \phi^2_{\alpha_2}(x_2).
\]
Then, it is easy to see
\begin{align}
\begin{aligned}  \label{D-5-7}
& \mathrm{i}\partial_t \Phi^1_{\alpha_1}=E_\alpha^1 \Phi^1_{\alpha_1},\quad \mathrm{i}\partial_t \Phi^2_{\alpha_2} =E_{\alpha_2}^2 \Phi^2_{\alpha_2}, \\
& {\mathcal H}_1 \Phi^1_{\alpha_1} =E^1_{\alpha_1} \Phi^1_{\alpha_1}, \quad {\mathcal H}_2 \Phi^2_{\alpha_2}=E^2_{\alpha_2} \Phi^2_{\alpha_2}.
\end{aligned}
\end{align}
Due to \eqref{D-5-7}, the tensor product 
\[ (\Phi^1_{\alpha_1} \otimes \Phi^2_{\alpha_2})(t, x_1, x_2) := \Phi^1_{\alpha_1}(t, x_1)  \Phi^2_{\alpha_2}(t, x_2) \]
satisfies  two-dimensional linear Schr\"{o}dinger equation with the Hamiltonian ${\mathcal H}$:
\begin{align}\label{D-6}
\mathrm{i}\partial_t ( \Phi^1_{\alpha_1} \otimes \Phi^2_{\alpha_2})=(E_{\alpha_1}^1+E_{\alpha_2}^2)(\Phi^1_{\alpha_1} \otimes \Phi^2_{\alpha_2}) = {\mathcal H} ( \Phi^1_{\alpha_1} \otimes \Phi^2_{\alpha_2}).
\end{align}
Now, we expand $\Psi_j = \Psi_j(t, x_1, x_2)$ in terms of the basis $\{ \Phi^1_{\alpha_1} \otimes \Phi^2_{\alpha_2} \}_{\alpha_1, \alpha_2}$:
\begin{equation} \label{D-7}
\Psi_j =\sum_{\alpha_1,\alpha_2} [A_j(t)]_{\alpha_1 \alpha_2}  \left(\Phi^1_{\alpha_1} \otimes \Phi^2_{\alpha_2}\right).
\end{equation}
Here we set $A_j = ([A_j]_{\alpha \beta})$ to be an infinite matrix (see a review paper \cite{S-S} for theory of infinite matrices). 

\begin{proposition} \label{P4.3}
Let $\{\Psi_j \}$ be a global smooth solution to \eqref{D-5}, and let $A_j = ([A_j]_{\alpha_1 \alpha_2})$ be an infinite matrix whose elements is given as a coefficient in \eqref{D-7}. Then, the matrix ensemble $\{A_j \}$ satisfies the generalized Lohe matrix model on $(\ell^\infty \cap \ell^2)(\bbz^2_+)$:
\[
\dot{A}_j= \kappa_0(A_c A_j^* A_j- A_j A_c^* A_j)  + \kappa_1 (A_j A_j^* A_c -A_j A_c^* A_j), \quad j = 1, \cdots, N.
\]
\end{proposition}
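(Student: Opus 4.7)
The strategy follows the bridge from the SL model to the LS model in Section \ref{sec:4.1}, one layer up: the orthonormal basis $\{\Phi^1_{\alpha_1}\otimes\Phi^2_{\alpha_2}\}$ of $L^2(\bbt^{2d})$ absorbs the entire free flow of $\mathcal{H}=\mathcal{H}_1+\mathcal{H}_2$ by \eqref{D-6}, so inserting the ansatz \eqref{D-7} into \eqref{D-5-1} should kill the linear part and convert the two cubic mean-field interactions into triple matrix products in the coefficient array $A_j$. The plan is (i) substitute \eqref{D-7} on both sides of \eqref{D-5-1}; (ii) evaluate the partial inner products \eqref{NNN-2} using the orthonormality of $\{\Phi^1_{\alpha_1}(t,\cdot)\}$ in the $x_1^*$-slot; (iii) evaluate the remaining $x_2^*$-integral using the orthonormality of $\{\Phi^2_{\alpha_2}(t,\cdot)\}$; and (iv) match coefficients in the product basis.

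On the left-hand side, the calculation identical to \eqref{D-1}, combined with $\mathrm i\partial_t(\Phi^1_{\alpha_1}\otimes\Phi^2_{\alpha_2})=\mathcal H(\Phi^1_{\alpha_1}\otimes\Phi^2_{\alpha_2})$ from \eqref{D-6}, produces the clean cancellation
\[
\mathrm i\partial_t\Psi_j-\mathcal H\Psi_j=\mathrm i\sum_{\alpha_1,\alpha_2}[\dot A_j]_{\alpha_1\alpha_2}\,(\Phi^1_{\alpha_1}\otimes\Phi^2_{\alpha_2}).
\]
For step (ii), writing $\Phi^1_{\alpha_1}(t,x_1^*)=e^{-\mathrm iE^1_{\alpha_1}t}\phi^1_{\alpha_1}(x_1^*)$ and using the orthonormality of $\{\phi^1_{\alpha_1}\}$, the time phases cancel on the diagonal and one obtains
\[
\langle\Psi_k(x_2^*)\,|\,\Psi_j(x_2)\rangle=\sum_{\alpha_2,\beta_2}[A_k^* A_j]_{\alpha_2\beta_2}\,\overline{\Phi^2_{\alpha_2}(x_2^*)}\,\Phi^2_{\beta_2}(x_2),
\]
so that the $x_1^*$-contraction has produced exactly the Hermitian adjoint product of the two coefficient matrices.

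Plugging this expression into the $\kappa_0$-bracket of \eqref{D-5-1} and integrating in $x_2^*$ against the $\Phi^2$-component of $\Psi_c(x_1,x_2^*)$, respectively $\Psi_j(x_1,x_2^*)$, contracts the $\alpha_2$-index via orthonormality of $\{\Phi^2_{\alpha_2}\}$ and produces $A_c A_j^* A_j - A_j A_c^* A_j$ as the coefficient of $\Phi^1_{\gamma_1}(x_1)\Phi^2_{\beta_2}(x_2)$; the identical maneuver on the $\kappa_1$-bracket yields $A_j A_j^* A_c - A_j A_c^* A_j$. Equating coefficients in the orthonormal product basis and cancelling the common factor $\mathrm i$ reads off the claimed generalized Lohe matrix system. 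The main obstacle I anticipate is purely bookkeeping: the partial inner products contract different index pairs depending on whether $\Psi_j$ sits inside or outside the bracket and whether the unintegrated slot is $x_1^*$ or $x_2^*$, so one must confirm that it is precisely $A_k^* A_j$ (not $A_j A_k^*$) that emerges from step (ii), and that the subsequent $x_2^*$-integration multiplies $A_c$ or $A_j$ from the correct side to reproduce the exact factor order $A_c A_j^* A_j$ (and similarly for the remaining three terms) stated in the proposition. Once that index-matching is done cleanly, the rest is a local application of orthonormality.
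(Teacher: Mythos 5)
Your proposal is correct and follows essentially the same route as the paper: substitute the expansion \eqref{D-7} into \eqref{D-5}, use \eqref{D-6} to cancel the free flow on the left, and contract the repeated indices via orthonormality of the product basis to read off $A_cA_j^*A_j-A_jA_c^*A_j$ and $A_jA_j^*A_c-A_jA_c^*A_j$. The only cosmetic difference is that you perform the $x_1^*$- and $x_2^*$-integrations sequentially (first producing $A_k^*A_j$, then contracting with $A_c$ or $A_j$), whereas the paper carries out both integrations at once to obtain the double Kronecker delta $\delta_{\gamma\epsilon}\delta_{\beta\delta}$; your index bookkeeping, including the factor order, checks out.
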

\begin{proof}
We differentiate \eqref{D-7} with respect to $t$ and use \eqref{D-6} to derive
\begin{align}
\begin{aligned}\label{D-8}
\mathrm{i} \partial_t \Psi_j &=\sum_{\alpha_1,\alpha_2}\Big[ [{\dot A}_j]_{\alpha_1\alpha_2} \mathrm{i}(\Phi^1_{\alpha_1} \otimes \Phi^2_{\alpha_2}) + [A_j]_{\alpha_1\alpha_2} \mathrm{i}\partial_t(\Phi^1_{\alpha_1} \otimes \Phi^2_{\alpha_2}) \Big] \\
&=\sum_{\alpha_1,\alpha_2} \Big[ [\dot{A}_j]_{\alpha_1 \alpha_2}\mathrm{i}(\Phi^1_{\alpha_1} \otimes \Phi^2_{\alpha_2} )+  [A_j]_{\alpha_1\alpha_2}  {\mathcal H} (\Phi^1_{\alpha_1} \otimes \Phi^2_{\alpha_2}) \Big].
\end{aligned}
\end{align}
We substitute \eqref{D-7} into \eqref{D-5} to find
\begin{align}
\begin{aligned}\label{D-9}
& \mathrm{i} \sum_{\alpha_1,\alpha_2} [{\dot A}_j]_{\alpha_1 \alpha_2} ( \Phi^1_{\alpha_1} \otimes \Phi^2_{\alpha_2} )(x_1, x_2)\\
& \hspace{0.2cm} = \mathrm{i}\kappa_0 \underbrace{\int_{\bbt^d} \Big( \Big \langle \Psi_j(x_2^*) \Big| \Psi_j(x_2) \Big \rangle \Psi_c(x_1, x_2^*) 
-  \Big \langle \Psi_c(x_2^*) \Big| \Psi_j(x_2) \Big \rangle \Psi_j(x_1, x_2^*) \Big) dx_2^*}_{=:\mathcal{I}_{11}} \\
& \hspace{0.2cm} + \mathrm{i}\kappa_1 \underbrace{\int_{\bbt^d} \Big(  \Big \langle \Psi_j(x_2^*) \Big| \Psi_c(x_2) \Big \rangle \Psi_j(x_1, x_2^*) -  \Big \langle \Psi_c(x_2^*) \Big| \Psi_j(x_2) \Big \rangle \Psi_j(x_1, x_2^*) \Big ) dx_2^*}_{=:\mathcal{I}_{12}}.
\end{aligned}
\end{align}
Below, we estimate ${\mathcal I}_{1i}$ separately. \newline

\noindent $\bullet$~(Estimate of ${\mathcal I}_{11}$): By direct estimate, one has
\begin{align}
\begin{aligned} \label{D-10}
\mathcal{I}_{11} &= \int_{\bbt^d} \left( \Big \langle \Psi_j(x_2^*) \Big| \Psi_j(x_2) \Big \rangle \Psi_c(x_1, x_2^*) 
-  \Big \langle \Psi_c(x_2^*) \Big| \Psi_j(x_2) \Big \rangle \Psi_j(x_1, x_2^*) \right) dx_2^* \\
&=\sum_{\alpha, \beta, \gamma, \delta, \epsilon,\eta} \int_{\bbt^{2d}} \left([A_c]_{\alpha\beta}[\bar{A}_j]_{\gamma\delta} [A_j]_{\epsilon\eta}-[A_j]_{\alpha\beta}[\bar{A}_c]_{\gamma\delta} [A_j]_{\epsilon\eta}\right) \\
& \hspace{3cm} \times \Phi^1_\alpha(x_1)\Phi^2_\beta(x_2^*)\overline{\Phi^1_\gamma(x_1^*)\Phi^2_\delta(x_2^*)} \Phi^1_\epsilon(x_1^*)\Phi^2_\eta(x_2) dx_1^*dx_2^*\\
&=\sum_{\alpha, \beta, \gamma, \delta, \epsilon,\eta} \left( [A_{c}]_{\alpha\beta} [\bar{A}_{j}]_{\gamma\delta} [A_{j}]_{\epsilon\eta}- [A_j]_{\alpha\beta}[\bar{A}_{c}]_{\gamma\delta} [A_j]_{\epsilon\eta} \right) \Phi^1_\alpha(x_1) \Phi^2_\eta(x_2) \delta_{\gamma\epsilon}\delta_{\beta\delta}\\
&=\sum_{\alpha,\beta,\gamma,\eta}( [A_c]_{\alpha\beta}[\bar{A}_j]_{\gamma\beta} [A_j]_{\gamma\eta}- [A_j]_{\alpha\beta}
[\bar{A}_c]_{\gamma\beta} [A_j]_{\gamma\eta})  ( \Phi^1_{\alpha} \otimes \Phi^2_{\eta} )(x_1, x_2).
\end{aligned}
\end{align}

\vspace{0.2cm}

\noindent $\bullet$~(Estimate of ${\mathcal I}_{12}$): Similarly, one has
\begin{equation} \label{D-11}
\mathcal{I}_{12} =\sum_{\alpha,\eta}[A_j A_j^* A_c - A_j A_c^* A_j]_{\alpha\eta} (\Phi^1_\alpha \otimes \Phi^2_\eta)(x_1, x_2).
\end{equation}
We combine \eqref{D-9}, \eqref{D-10} and \eqref{D-11} to get 
\[  \sum_{\alpha_1,\alpha_2} [\dot{A}_{j}(t)]_{\alpha_1\alpha_2} \mathrm{i}  (\Phi^1_{\alpha_1} \otimes \Phi^2_{\alpha_2}) = \mathrm{i} \sum_{\alpha,\eta}\Big(\kappa_0[ A_c A_j^* A_j - A_j A_c^* A_j]_{\alpha\eta}+\kappa_1[A_j A_j^* A_c-A_j A_c^* A_j]_{\alpha\eta} \Big) (\Phi^1_{\alpha} \otimes \Phi^2_{\eta}).
\]
Then, we use the orthonormality of $\{ \Phi^1_{\alpha_1} \otimes \Phi^2_{\alpha_2} \}$ to see 
\[
[\dot{A}_{j}]_{\alpha\beta} =  \kappa_0 [ A_c A_j^* A_j -A_jA_c^* A_j]_{\alpha\beta}+\kappa_1[A_j A_j^* A_c-A_j A_c^* A_j]_{\alpha\beta}.
\]
This yields
\[
\dot{A}_j= \kappa_0( A_c A_j^* A_j - A_j A_c^* A_j)+\kappa_1(A_j A_j^* A_c-A_j A_c^* A_j).
\]
\end{proof}
\subsection{Emergent dynamics}  \label{sec:4.4}
In this subsection, we introduce an order parameter and study emergent dynamics of \eqref{D-5}.  For a given ensemble $\{\Psi_i  = \Psi_i(x_1, x_2) \}$, we set 
\[ \Psi_c :=\frac{1}{N}\sum_{k=1}^N\Psi_k \quad \mbox{and} \quad R := \|\Psi_c \|_{2}.   \]
\begin{lemma}\label{L4.1}
Let $\{\Psi_i\}$ be a global smooth solution to \eqref{D-5}. Then, the order parameter $R$ satisfies
\begin{align*}
\begin{aligned}
& (i)~\frac{dR^2}{dt} =\frac{\kappa_0}{N}\sum_{i=1}^N\int_{\bbt^{2d}} \left|\int_{\bbt^d}(\overline{\Psi_c(x_1, x_2)}\Psi_i(x_1^*, x_2)-\overline{\Psi_i(x_1, x_2)}\Psi_c(x_1^*, x_2))dx_2\right|^2dx_1dx_1^*\\
& \hspace{1cm} +\frac{\kappa_1}{N}\sum_{i=1}^N\int_{\bbt^{2d}} \left|\int_{\bbt^d}(\overline{\Psi_c(x_1, x_2)}\Psi_i(x_1, x_2^*)-\overline{\Psi_i(x_1, x_2)}\Psi_c(x_1, x_2^*))dx_1\right|^2dx_2dx_2^*\geq0. \\
& (ii)~\frac{d}{dt} \sum_{i, j=1}^N\|\Psi_i-\Psi_j\|^2_{2}\\
&\hspace{0.5cm}=-2\kappa_0N\sum_{i=1}^N\int_{\bbt^{2d}} \left|\int_{\bbt^d}(\overline{\Psi_c(x_1, x_2)}\Psi_i(x_1^*, x_2)-\overline{\Psi_i(x_1, x_2)}\Psi_c(x_1^*, x_2))dx_2\right|^2dx_1dx_1^*\\
&\hspace{1cm} -2\kappa_1N\sum_{i=1}^N\int_{\bbt^{2d}} \left|\int_{\bbt^d}(\overline{\Psi_c(x_1, x_2)}\Psi_i(x_1, x_2^*)-\overline{\Psi_i(x_1, x_2)}\Psi_c(x_1, x_2^*))dx_1\right|^2dx_2dx_2^*\leq0.
\end{aligned}
\end{align*}
\end{lemma}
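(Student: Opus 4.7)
The plan is to first establish part (i) by differentiating $R^2 = \langle \Psi_c | \Psi_c\rangle$ and plugging in the SLM equation \eqref{D-5}, and then to deduce part (ii) from part (i) via the conservation of $L^2$-norms (Proposition \ref{P4.1}).

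For part (i), I would write
\[
\frac{d}{dt} R^2 = \frac{1}{N^2}\sum_{i,j=1}^{N}\Big(\langle \partial_t \Psi_i | \Psi_j\rangle + \langle \Psi_i | \partial_t \Psi_j\rangle\Big),
\]
and substitute \eqref{D-5}. The Hamiltonian contributions contribute $\mathrm{i}\langle {\mathcal H}\Psi_i|\Psi_j\rangle - \mathrm{i}\langle \Psi_i|{\mathcal H}\Psi_j\rangle$ in each term and vanish by self-adjointness of ${\mathcal H}$. After summing the index $k$ in the couplings against the index in the mean field, the surviving $\kappa_0$ and $\kappa_1$ quartic integrals pair naturally with their complex conjugates produced by the second inner-product slot. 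The key algebraic step is to reshuffle indices so the $\kappa_0$-contribution collects into
\[
\int_{\mathbb{T}^{2d}}\Big(\overline{\Psi_c(x_1,x_2)}\Psi_i(x_1^*,x_2) - \overline{\Psi_i(x_1,x_2)}\Psi_c(x_1^*,x_2)\Big)\Big(\text{its complex conjugate}\Big) dx_1 dx_1^*,
\]
after the $x_2$-integration has been performed inside, which is exactly the modulus-squared appearing in the statement; the $\kappa_1$-term is handled symmetrically by swapping the roles of $x_1$ and $x_2$. I would use the notation in \eqref{NNN-2} to keep the bookkeeping clean, mirroring the derivation of Theorem \ref{T3.1} (i) where the same symmetrization trick turns the coupling into $|\langle\psi_c|\psi_i\rangle - \langle\psi_i|\psi_c\rangle|^2$.

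For part (ii), by Proposition \ref{P4.1} each $\|\Psi_i(t)\|_2 = 1$, so a direct expansion yields
\[
\sum_{i,j=1}^{N}\|\Psi_i - \Psi_j\|_2^2 = 2N\sum_{i=1}^{N}\|\Psi_i\|_2^2 - 2\,\mathrm{Re}\sum_{i,j=1}^{N}\langle \Psi_i|\Psi_j\rangle = 2N^2 - 2N^2 R^2.
\]
Differentiating this identity and invoking part (i) with the factor $-2N^2$ immediately produces the claimed expression, in particular yielding non-positivity. The non-negativity in (i) and non-positivity in (ii) are then manifest from the sum-of-squares structure.

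The main obstacle I anticipate is the bookkeeping in the $\kappa_0$ and $\kappa_1$ terms: there are four dummy integration variables $x_1,x_2,x_1^*,x_2^*$ and four summations over particle indices $i,j,k$, and one must carefully symmetrize in $(i,k)$ (or $(j,k)$) and identify which variable pair ends up as the free integration in the outer modulus squared. Apart from this index juggling, everything else is Hermiticity of ${\mathcal H}$, Fubini, and conservation of the $L^2$-norm, all of which have been established earlier in the paper.
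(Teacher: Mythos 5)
Your proposal is correct and follows essentially the same route as the paper: both compute $\frac{d}{dt}\sum_{i,j}\langle\Psi_i|\Psi_j\rangle$ (equivalently $N^2\frac{dR^2}{dt}$) by substituting \eqref{D-5}, cancelling the Hamiltonian contributions by Hermiticity, and symmetrizing the quartic coupling terms with their complex conjugates into the stated modulus-squared integrals, exactly as in Theorem \ref{T3.1}(i). Part (ii) is likewise obtained in the paper by the same norm-conservation identity $\sum_{i,j}\|\Psi_i-\Psi_j\|_2^2 = 2N^2 - 2\sum_{i,j}\langle\Psi_i|\Psi_j\rangle$ and the factor $-2$ relative to part (i).
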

\begin{proof}
\noindent (i)~Note that 
\[
\frac{d}{dt}\langle\Psi_i|\Psi_j\rangle=\langle\partial_t\Psi_i|\Psi_j\rangle+\langle\Psi_i|\partial_t\Psi_j\rangle.
\]
Below, we estimate the second term $\langle\Psi_i|\partial_t\Psi_j\rangle.$  By direct calculation, one has
\begin{align*}
\begin{aligned}
\langle\Psi_i|\partial_t\Psi_j\rangle &=\int_{\bbt^{2d}}  \overline{\Psi_i(x_1, x_2)}\partial_t\Psi_j(x_1, x_2)dx_1dx_2  =\underbrace{\int_{\bbt^{2d}} (-\mathrm{i})\overline{ \Psi_i(x_1, x_2)}(\mathcal{H}\Psi_j)(x_1, x_2)dx_1dx_2}_{:=\mathcal{I}_{21}^{ij}}\\
& + \kappa_0 \underbrace{\int_{\bbt^{4d}} \overline{\Psi_i(x_1, x_2)}\Psi_c(x_1, x_2^*)\overline{\Psi_j(x_1^*, x_2^*)}\Psi_j(x_1^*, x_2)dx_1dx_2dx_1^*dx_2^*}_{:=\mathcal{I}_{22}^{ij}}\\
&-\kappa_0 \underbrace{\int_{\bbt^{4d}} \overline{\Psi_i(x_1, x_2)}\Psi_j(x_1, x_2^*)\overline{\Psi_c(x_1^*, x_2^*)}\Psi_j(x_1^*, x_2)dx_1dx_2dx_1^*dx_2^*}_{:=\mathcal{I}_{23}^{ij}}\\
&+ \kappa_1 \underbrace{\int_{\bbt^{4d}} \overline{\Psi_i(x_1, x_2)}\Psi_j(x_1, x_2^*)\overline{\Psi_j(x_1^*, x_2^*)}\Psi_c(x_1^*, x_2)dx_1dx_2dx_1^*dx_2^*}_{:=\mathcal{I}_{24}^{ij}}\\
&-\kappa_1 \underbrace{\int_{\bbt^{4d}} \overline{\Psi_i(x_1, x_2)}\Psi_j(x_1, x_2^*)\overline{\Psi_c(x_1^*, x_2^*)}\Psi_j(x_1^*, x_2 )dx_1dx_2dx_1^*dx_2^*}_{:=\mathcal{I}_{25}^{ij}}.
\end{aligned}
\end{align*}
The term $\mathcal{I}_{21}^{ij}$ will be cancelled with a similar term in $\langle\partial_t\Psi_i|\Psi_j\rangle$ due to the Hermitian property of ${\mathcal H}$. Then we have
\begin{align*}
\frac{d}{dt} \sum_{i, j=1}^N\langle\Psi_i|\Psi_j\rangle =\kappa_0\sum_{i, j=1}^N\left(\mathcal{I}_{22}^{ij} +\bar{\mathcal{I}}_{22}^{ij}-\mathcal{I}_{23}^{ij}-\bar{\mathcal{I}}_{23}^{ij}\right)+\kappa_1\sum_{i, j=1}^N\left(\mathcal{I}_{24}^{ij} +\bar{\mathcal{I}}_{24}^{ij}-\mathcal{I}_{25}^{ij}-\bar{\mathcal{I}}_{25}^{ij}\right).
\end{align*}
The other terms ${\mathcal I}_{2k}^{ij},~k=2, \cdots, 5$ can be treated as follows.
\begin{align*}
\begin{aligned} 
\sum_{i, j=1}^N\mathcal{I}_{22}^{ij} &=N\sum_{i=1}^N\int_{\bbt^{4d}} \overline{\Psi_c(x_1, x_2)}\Psi_c(x_1, x_2^*)\overline{\Psi_i(x_1^*, x_2^*)}\Psi_i(x_1^*, x_2)dx_1dx_2dx_1^*dx_2^*, \\
\sum_{i, j=1}^N\mathcal{I}_{23}^{ij} &=N\sum_{i=1}^N\int_{\bbt^{4d}} \overline{\Psi_c(x_1, x_2)}\Psi_i(x_1, x_2^*)\overline{\Psi_c(x_1^*, x_2^*)}\Psi_i(x_1^*, x_2)dx_1dx_2dx_1^*dx_2^*, \\
\sum_{i, j=1}^N\mathcal{I}_{24}^{ij} &=N\sum_{i=1}^N\int_{\bbt^{4d}} \overline{\Psi_c(x_1, x_2)}\Psi_i(x_1, x_2^*)\overline{\Psi_i(x_1^*, x_2^*)}\Psi_c(x_1^*, x_2)dx_1dx_2 dx_1^*dx_2^*, \\
\sum_{i, j=1}^N\mathcal{I}_{25}^{ij} &=N\sum_{i=1}^N\int_{\bbt^{4d}}\overline{\Psi_c(x_1, x_2)}\Psi_i(x_1, x_2^*)\overline{\Psi_c(x_1^*, x_2^*)}\Psi_i(x_1^*, x_2)dx_1dx_2dx_1^*dx_2^*.
\end{aligned}
\end{align*}
Note that 
\begin{align*}
\begin{aligned}
& \sum_{i, j=1}^N\left(\mathcal{I}_{22}^{ij} +\bar{\mathcal{I}}_{22}^{ij}-\mathcal{I}_{23}^{ij}-\bar{\mathcal{I}}_{23}^{ij}\right) \\
& \hspace{1cm} =-N\sum_{i=1}^N\int_{\bbt^{4d}} (\overline{\Psi_c(x_1, x_2)}\Psi_i(x_1^*, x_2)-\overline{\Psi_i(x_1, x_2)}\Psi_c(x_1^*, x_2)) \\
& \hspace{1.2cm} \times (\overline{\Psi_c(x_1^*, x_2^*)}\Psi_i(x_1, x_2^*)-\overline{\Psi_i(x_1^*, x_2^*)}\Psi_c(x_1, x_2^*))dx_1 dx_2 dx_1^*dx_2^*\\
& \hspace{1cm} =-N\sum_{i=1}^N\int_{\bbt^{2d}}\left(\int_{\bbt^d}(\overline{\Psi_c(x_1, x_2)}\Psi_i(x_1^*, x_2)-\overline{\Psi_i(x_1, x_2)}\Psi_c(x_1^*, x_2))dx_2 \right)\\
&\hspace{1.2cm}\times\left(\int_{\bbt^2}(\overline{\Psi_c(x_1^*, x_2^*)}\Psi_i(x_1, x_2^*)-\overline{\Psi_i(x_1^*, x_2^*)}\Psi_c(x_1, x_2^*))dx_2^*\right)dx_1dx_1^*\\
& \hspace{1cm} =N\sum_{i=1}^N\int_{\bbt^{2d}} \left|\int_{\bbt^d}(\overline{\Psi_c(x_1, x_2)}\Psi_i(x_1^*, x_2)-\overline{\Psi_i(x_1, x_2)}\Psi_c(x_1^*, x_2))dx_2\right|^2dx_1dx_1^*\geq0.
\end{aligned}
\end{align*}
Similarly, we have
\begin{align*}
\begin{aligned}
& \sum_{i, j=1}^N\left(\mathcal{I}_{24}^{ij}+\bar{\mathcal{I}}_{24}^{ij}-\mathcal{I}_{25}^{ij}-\bar{\mathcal{I}}_{25}^{ij}\right) \\
& \hspace{0.5cm} =N\sum_{i=1}^N\int_{\bbt^{2d}} \left|\int_{\bbt^d}(\overline{\Psi_c(x_1, x_2)}\Psi_i(x_1, x_2^*)-\overline{\Psi_i(x_1, x_2)}\Psi_c(x_1, x_2^*))dx_1\right|^2dx_2dx_2^*\geq0.
\end{aligned}
\end{align*}
Finally we have
\begin{align*}
&\frac{d}{dt}\left(\sum_{i, j=1}^N\langle\Psi_i|\Psi_j\rangle\right) =\kappa_0N\sum_{i=1}^N\int_{\bbt^{2d}} \left|\int_{\bbt^d}(\overline{\Psi_c(x_1, x_2)}\Psi_i(x_1^*, x_2)-\overline{\Psi_i(x_1, x_2)}\Psi_c(x_1^*, x_2))dx_2\right|^2dx_1dx_1^*\\
& \hspace{0.5cm}+\kappa_1N\sum_{i=1}^N\int_{\bbt^{2d}} \left|\int_{\bbt^d}(\overline{\Psi_c(x_1, x_2)}\Psi_i(x_1, x_2^*)-\overline{\Psi_i(x_1, x_2)}\Psi_c(x_1, x_2^*))dx_1\right|^2dx_2dx_2^*
\geq0.
\end{align*}
(ii)~By direct calculation, one has
\begin{align*}
\begin{aligned}
&\frac{d}{dt}\left(\sum_{i, j=1}^N\|\Psi_i-\Psi_j\|^2_2\right) =\frac{d}{dt}\left(\sum_{i, j=1}^N(2-\langle{\Psi_i|\Psi_j}\rangle-\langle{\Psi_j|\Psi_i}\rangle )\right)=-2\frac{d}{dt}\left(\sum_{i, j=1}^N\langle{\Psi_i|\Psi_j}\rangle\right)\\
& \hspace{0.2cm} =-2\kappa_0N\sum_{i=1}^N\int_{\bbt^{2d}} \left|\int_{\bbt^d}(\overline{\Psi_c(x_1, x_2)}\Psi_i(x_1^*, x_2)-\overline{\Psi_i(x_1, x_2)}\Psi_c(x_1^*, x_2))dx_2\right|^2dx_1dx_1^*\\
& \hspace{0.2cm} -2\kappa_1N\sum_{i=1}^N\int_{\bbt^{2d}} \left|\int_{\bbt^d}(\overline{\Psi_c(x_1, x_2)}\Psi_i(x_1, x_2^*)-\overline{\Psi_i(x_1, x_2)}\Psi_c(x_1, x_2^*))dx_1\right|^2dx_2dx_2^*\leq0.
\end{aligned}
\end{align*}
\end{proof}
\begin{theorem} \label{T4.1}
Let $\{\Psi_i\}$ be a global smooth solution to \eqref{D-5} with the initial data satisfying $R^{in} >0$. Then we have
\begin{eqnarray*}
&& (i)~R(t) \geq R^{in}, \quad t > 0 \quad \mbox{and} \quad \lim_{t \to \infty} |{\dot R}(t)| = 0. \cr
&& (ii)~\lim_{t \to \infty} \int_{\bbt^{2d}}\Big|  \langle \Psi_c(x_1) | \Psi_j(x_1^*) \rangle -    \langle \Psi_j(x_1)   | \Psi_c(x^*_1)  \rangle \Big|^2 dx_1 dx_1^* = 0. \\
&& (iii)~\lim_{t \to \infty}  \int_{\bbt^{2d}}\Big|  \langle \Psi_c(x_2) | \Psi_j(x_2^*) \rangle -    \langle \Psi_j(x_2)   | \Psi_c(x^*_2)  \rangle \Big|^2 dx_2 dx_2^* = 0.
\end{eqnarray*}
\end{theorem}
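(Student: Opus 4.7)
The plan is to deduce all three assertions from Lemma \ref{L4.1} combined with Barbalat's lemma (Lemma \ref{L3.2}). The monotonicity and dissipation identities already derived in Lemma \ref{L4.1} do most of the work; the remaining task is to feed them into Barbalat's lemma after verifying uniform continuity of the appropriate quantities.

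For (i), monotonicity $R(t) \geq R^{in}$ is immediate from Lemma \ref{L4.1}(i): both integrals on the right-hand side are manifestly nonnegative, so $R^2$ is non-decreasing in $t$. By Proposition \ref{P4.1} and the triangle inequality, $R(t) = \|\Psi_c(t)\|_2 \leq \frac{1}{N}\sum_k \|\Psi_k(t)\|_2 = 1$, so $R$ is bounded above. A bounded non-decreasing function converges, so $R(t) \to R_\infty$ as $t \to \infty$. To conclude $\dot R(t) \to 0$ via Lemma \ref{L3.2}(ii), I need to check that $\dot R$ is uniformly continuous, which I would obtain by bounding $\ddot R$ using the governing equation \eqref{D-5}, together with the conservation $\|\Psi_j(t)\|_2 = 1$ from Proposition \ref{P4.1}.

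For (ii) and (iii), Lemma \ref{L4.1}(ii) shows that the Lyapunov functional $\sum_{i,j=1}^N \|\Psi_i - \Psi_j\|_2^2$ is non-increasing and nonnegative, hence converges to a finite limit. Integrating the identity in Lemma \ref{L4.1}(ii) over $[0,\infty)$ yields integrability of each term on the right-hand side, in particular
\begin{equation*}
\int_0^\infty \sum_{i=1}^N \int_{\bbt^{2d}}\Big| \langle \Psi_c(x_1) | \Psi_i(x_1^*) \rangle - \langle \Psi_i(x_1) | \Psi_c(x_1^*) \rangle \Big|^2 \, dx_1 dx_1^* \, dt < \infty,
\end{equation*}
and the analogous $x_2$-bound. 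I would then apply Barbalat's lemma \ref{L3.2}(i) to each of these two integrands, after verifying they are uniformly continuous in $t$. The desired limits (ii) and (iii) then follow directly. Note that the first term on the right of Lemma \ref{L4.1}(ii), written out in the partial-inner-product notation \eqref{NNN-2}, is exactly $\sum_i \int |\langle \Psi_c(x_1)|\Psi_i(x_1^*)\rangle - \langle \Psi_i(x_1)|\Psi_c(x_1^*)\rangle|^2 dx_1 dx_1^*$, so the matching with the statement of (ii) is automatic.

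The main obstacle in both parts is the uniform-continuity check required before each invocation of Barbalat's lemma. In principle this is routine: since $\|\Psi_j(t)\|_2$ is conserved (Proposition \ref{P4.1}) and the governing equation \eqref{D-5} together with the regularity of $\mathcal H$ on smooth solutions gives an $L^2$-bound on $\partial_t \Psi_j$, all bilinear and quartic expressions in $\Psi_j$ appearing in $\dot R$, $\ddot R$, and in the Lemma \ref{L4.1}(ii) integrands are Lipschitz in $t$. The computation, however, is bookkeeping-heavy because the integrands are quartic in $\Psi_j$, so differentiating in $t$ produces four terms each of which must be controlled by Cauchy--Schwarz and the $L^2$-bound on $\partial_t \Psi_j$. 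Once uniform continuity is in hand, the rest of the argument is a mechanical two-fold application of Lemma \ref{L3.2}.
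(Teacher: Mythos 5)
Your overall architecture matches the paper's: a monotone Lyapunov quantity from Lemma \ref{L4.1}, boundedness from Proposition \ref{P4.1}, and Barbalat's lemma to upgrade monotone convergence to decay of the dissipation terms. (Your route to (ii)--(iii) through the diameter functional of Lemma \ref{L4.1}(ii) is affinely equivalent to the paper's route through $R^2$, since $\sum_{i,j}\|\Psi_i-\Psi_j\|_2^2 = 2N^2(1-R^2)$ under the normalization $\|\Psi_j\|_2=1$; the paper works directly with $\frac{d}{dt}R^2 \to 0$ and reads off (ii) and (iii) as its two nonnegative summands.)

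The genuine gap is in your uniform-continuity verification. You assert that ``the regularity of $\mathcal H$ on smooth solutions gives an $L^2$-bound on $\partial_t \Psi_j$,'' and you propose to control the time derivatives of the quartic integrands by Cauchy--Schwarz against this bound. But $\partial_t\Psi_j = -\mathrm{i}\mathcal{H}\Psi_j + (\text{coupling})$, and a \emph{uniform-in-time} bound on $\|\mathcal{H}\Psi_j\|_2$ (equivalently on $\|\Delta\Psi_j\|_2$) does not follow from the hypotheses --- global smoothness plus conservation of the $L^2$-norm give no a priori $H^2$ control that is uniform on $[0,\infty)$. This is precisely the point the paper's follow-up lemma is designed to circumvent: it splits $\partial_t\Psi_j$ into the coupling part (bounded in $L^2$ by H\"{o}lder and $\|\Psi_j\|_2=1$) and the $\mathcal{H}\Psi_j$ part, and for the latter it never estimates $\|\mathcal{H}\Psi_j\|_2$ at all. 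Instead it observes that in the time derivative of the quartic functionals the potential terms are bounded because $V$ is continuous on the compact torus, and the Laplacian contributions appear in combinations that reduce to integrals of total divergences over $\bbt^d$, which vanish by the divergence theorem. Without some such cancellation (or an added hypothesis of a uniform $H^2$ bound), your Cauchy--Schwarz bookkeeping cannot close, so the invocation of Barbalat's lemma in both (i) and (ii)--(iii) is unsupported as written. A secondary, shared caveat: extracting (ii) and (iii) separately from the dissipation identity requires $\kappa_0>0$ and $\kappa_1>0$, which neither you nor the theorem statement makes explicit.
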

\begin{proof} (i) It follows from Lemma \ref{L4.1} (i) that 
\begin{align*}
\begin{aligned}
\frac{dR^2}{dt} &= \frac{\kappa_0}{N} \sum_{i=1}^N\int_{\bbt^{2d}} \Big | \langle \Psi_c(x_1) | \Psi_j(x_1^*) \rangle -    \langle \Psi_j(x_1)   | \Psi_c(x^*_1)  \rangle \Big|^2 dx_1dx_1^* \\
&+ \frac{\kappa_1}{N} \sum_{i=1}^N\int_{\bbt^{2d}} \Big |  \langle \Psi_c(x_2) | \Psi_j(x_2^*) \rangle -    \langle \Psi_j(x_2)   | \Psi_c(x^*_2)  \rangle \Big |^2 dx_2 dx_2^* \geq 0.
\end{aligned}
\end{align*}
Thus, one has 
\[ R^2(t) \geq |R^{in}|^2, \quad \mbox{i.e.,} \quad  R(t) \geq R^{in}, \quad t > 0. \]
Since $R$ is non-decreasing and bounded by $1$, there exists $R^{\infty} \in [R^{in}, 1]$ such that 
\begin{equation*} \label{NN-1}
\lim_{t \to \infty} R(t) = R^{\infty}.
\end{equation*}
\end{proof}
Next, we show that $\frac{d^2}{dt^2}R^2$ is uniformly bounded in $t$ so that $\frac{d}{dt}R^2$ is uniformly continuous.
\begin{lemma}
Let $\{\Psi_i\}_{i=1}^N$ be a global smooth solution of \eqref{D-5-1}. Then the second derivative of $R(t)^2=\|\Psi_c(t)\|_2^2$ is uniformly bounded in time.
\end{lemma}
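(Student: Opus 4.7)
The plan is to differentiate the formula from Lemma \ref{L4.1} (i) once more in $t$ and bound the resulting expression. Recall that
\[
\frac{dR^2}{dt} = \frac{\kappa_0}{N}\sum_{i=1}^N \int_{\mathbb{T}^{2d}} |F_i(t,x_1,x_1^*)|^2\,dx_1 dx_1^* + \frac{\kappa_1}{N}\sum_{i=1}^N \int_{\mathbb{T}^{2d}} |G_i(t,x_2,x_2^*)|^2 \, dx_2 dx_2^*,
\]
where $F_i$ and $G_i$ are bilinear in $\Psi_c$ and $\Psi_i$ (each obtained by integrating out one variable). Differentiating in $t$ yields a finite sum of terms of the form $2\,\mathrm{Re}\int\int (\partial_t F_i)\overline{F_i}\,dxdx^*$ (and similarly for $G_i$), each of which is a multilinear expression of degree five in $\{\Psi_j,\bar\Psi_j,\Psi_c,\bar\Psi_c,\partial_t\Psi_j,\partial_t\Psi_c\}$ with exactly one time-derivative factor.

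Next, I would use equation \eqref{D-5} to substitute
\[
\partial_t \Psi_j = -\mathrm{i}\mathcal{H}\Psi_j + \mathcal{N}_j(\Psi), \qquad \partial_t \Psi_c = -\mathrm{i}\mathcal{H}\Psi_c + \frac{1}{N}\sum_{j=1}^N \mathcal{N}_j(\Psi),
\]
where $\mathcal{N}_j(\Psi)$ denotes the cubic nonlinear coupling in \eqref{D-5}. This splits $\frac{d^2R^2}{dt^2}$ into a nonlinear part and a Hamiltonian part. For the nonlinear part, each resulting integral is at most quintic in $\Psi_j,\Psi_c$ with the $L^2$-integration structure inherited from the inner products appearing in \eqref{D-5-1}. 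Repeated application of Cauchy--Schwarz together with the conservation $\|\Psi_j(t)\|_2 = 1$ from Proposition \ref{P4.1} bounds every such term by a constant depending only on $\kappa_0,\kappa_1,N$.

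For the Hamiltonian part, the dangerous terms are of the form $\langle \mathcal{H}\Psi_j \mid \Phi\rangle$, where $\Phi$ is a product of three $\Psi$-factors times a partial inner product; these need to be bounded uniformly in $t$ without letting the Laplacian act directly. I would exploit the Hermiticity of $\mathcal{H}$ together with integration by parts on $\mathbb{T}^{2d}$ (where boundary terms vanish by periodicity) to transfer $\mathcal{H} = -\tfrac12\Delta_{x_1} - \tfrac12\Delta_{x_2} + V$ onto the $\Phi$-factor. After this redistribution, each term becomes a sum of pointwise products of $\Psi_j$, $\nabla\Psi_j$, and $V\Psi_j$. Since $V$ is continuous on $\mathbb{T}^{2d}$ and hence bounded, and since the solution is smooth with $\|\nabla\Psi_j(t)\|_2$ controlled uniformly in $t$ via the solution splitting property in Proposition \ref{P4.2} (the free flow $e^{-\mathrm{i}\mathcal{H}t}$ is an $H^1$-isometry, and the purely nonlinear system \eqref{D-5-0} preserves $H^1$-regularity by a standard Gr\"onwall argument using $\|\varphi_j\|_2=1$), each of these terms is bounded.

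The main obstacle is precisely this last step: the a priori propagation of $H^1$ (and, in the highest-order terms, $H^2$) regularity of $\Psi_j$ uniformly in $t$. Once we accept uniform Sobolev bounds from the smoothness hypothesis and the solution splitting property, assembling the two types of estimates yields a time-independent bound on $\frac{d^2 R^2}{dt^2}$, and thus $\frac{d R^2}{dt}$ is uniformly continuous in $t$, which is the form in which the lemma will be applied through Barbalat's lemma (Lemma \ref{L3.2}) in the subsequent emergent dynamics argument.
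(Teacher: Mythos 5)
Your overall scheme --- differentiate the identity of Lemma \ref{L4.1}\,(i), substitute $\partial_t\Psi_j=-\mathrm{i}\mathcal{H}\Psi_j+\mathcal{N}_j(\Psi)$, and bound the purely nonlinear quintic terms by Cauchy--Schwarz together with the conservation $\|\Psi_j\|_2=1$ --- coincides with the paper's proof up to and including the treatment of the coupling terms. The divergence is in the Hamiltonian part, and there your argument has a genuine gap. You propose to transfer $\mathcal{H}$ onto the remaining factors by integration by parts and then to invoke a uniform-in-time bound on $\|\nabla\Psi_j(t)\|_2$ (and, for the highest-order terms, on second derivatives), justified via the solution splitting property and ``a standard Gr\"onwall argument'' for the nonlinear system \eqref{D-5-0}. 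But Gr\"onwall applied to $\frac{d}{dt}\|\nabla\varphi_j\|_2\lesssim \sum_k\|\nabla\varphi_k\|_2$ only yields $\|\nabla\varphi_j(t)\|_2\le Ce^{Ct}$, i.e.\ propagation of $H^1$-regularity on finite time intervals, not a bound uniform on $[0,\infty)$. Since the sole purpose of this lemma is to obtain uniform continuity of $\frac{dR^2}{dt}$ on the whole half-line so that Barbalat's lemma (Lemma \ref{L3.2}) applies as $t\to\infty$, an exponentially growing bound is useless. You correctly flag this as ``the main obstacle,'' but then assume it away; as written, the step fails.

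The paper's proof avoids Sobolev norms altogether. It observes that in $\frac{d^2}{dt^2}R^2$ the Hamiltonian contributions occur in conjugate pairs of the form \eqref{D-13}, with $\mathcal{H}$ acting on $\Psi_i(x_1,x_2)$ in one term and on $\overline{\Psi_l(x_1,x_2^*)}$ in its partner. The potential parts are bounded directly because $V_1,V_2$ are continuous on the compact torus and the remaining factors are controlled by the conserved $L^2$-norms, while the Laplacian parts are assembled into an integral of a total derivative,
\[
\int_{\bbt^{4d}}\Delta_{x_1}\Big(\Psi_i(x_1,x_2)\,\overline{\Psi_j(x_1^*,x_2)}\,\Psi_k(x_1^*,x_2^*)\,\overline{\Psi_l(x_1,x_2^*)}\Big)\,dx_1\,dx_2\,dx_1^*\,dx_2^*=0,
\]
which vanishes by periodicity; no $H^1$ or $H^2$ control enters anywhere. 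To repair your version you should either reproduce this pairwise cancellation explicitly (taking care of the first-order cross terms $\nabla_{x_1}\Psi_i\cdot\nabla_{x_1}\overline{\Psi_l}$ that the product rule generates --- a point the paper itself treats rather briskly), or else first establish a genuine uniform-in-time $H^1$ estimate, which does not follow from Gr\"onwall and is not available elsewhere in the paper.
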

\begin{proof} Note that
\begin{align*}
\begin{aligned}
\frac{d}{dt}R^2 &= \frac{\kappa_0}{N} \sum_{j=1}^N\int_{\bbt^{2d}} \Big | \langle \Psi_c(x_1) | \Psi_j(x_1^*) \rangle -    \langle \Psi_j(x_1)   | \Psi_c(x^*_1)  \rangle \Big|^2 dx_1dx_1^* \\
&+ \frac{\kappa_1}{N} \sum_{j=1}^N\int_{\bbt^{2d}} \Big |  \langle \Psi_c(x_2) | \Psi_j(x_2^*) \rangle -    \langle \Psi_j(x_2)   | \Psi_c(x^*_2)  \rangle \Big |^2 dx_2 dx_2^*.
\end{aligned}
\end{align*}
Then one has
\begin{align*}
\begin{aligned}
\frac{d^2}{dt^2}R^2 &= \frac{\kappa_0}{N} \sum_{j=1}^N\int_{\bbt^{2d}}\partial_t \Big| \langle \Psi_c(x_1) | \Psi_j(x_1^*) \rangle -    \langle \Psi_j(x_1)   | \Psi_c(x^*_1)  \rangle \Big|^2 dx_1dx_1^* \\
&\hspace{0.2cm} + \frac{\kappa_1}{N} \sum_{j=1}^N\int_{\bbt^{2d}}\partial_t \Big |  \langle \Psi_c(x_2) | \Psi_j(x_2^*) \rangle -    \langle \Psi_j(x_2)   | \Psi_c(x^*_2)  \rangle \Big |^2 dx_2 dx_2^*.
\end{aligned}
\end{align*}
Now we show the R.H.S. of the above relation is uniformly bounded. For this, note that
\begin{align}
\begin{aligned}\label{D-12}
&\partial_t \Big| \langle \Psi_c(x_1) | \Psi_j(x_1^*) \rangle -    \langle \Psi_j(x_1)   | \Psi_c(x^*_1)  \rangle \Big|^2\\
& \hspace{0.2cm} =\partial_t\big( (\langle \Psi_c(x_1) | \Psi_j(x_1^*) \rangle -    \langle \Psi_j(x_1)   | \Psi_c(x^*_1)  \rangle)(\langle  \Psi_j(x_1^*)|\Psi_c(x_1) \rangle -    \langle  \Psi_c(x^*_1)|\Psi_j(x_1)    \rangle)\big)\\
&  \hspace{0.2cm} =\frac{1}{N^2}\sum_{k=1}^N\partial_t\Big( (\langle \Psi_k(x_1) | \Psi_j(x_1^*) \rangle -    \langle \Psi_j(x_1)   | \Psi_k(x^*_1)  \rangle)(\langle  \Psi_j(x_1^*)|\Psi_l(x_1) \rangle -    \langle  \Psi_l(x^*_1)|\Psi_j(x_1)    \rangle)\Big).
\end{aligned}
\end{align}
Below we will show that each term in \eqref{D-12} is bounded. Since $\partial_t \Psi_i=-\mathrm{i}\mathcal{H}\Psi_i+\mbox{(Coupling terms)}$, we will decompose $\partial_t\Psi_i$ with two part. By direct calculation, (Coupling terms) part can be bounded as follows:
\begin{align*}
&\|\partial_t \Psi_i+\mathrm{i}\mathcal{H}\Psi_i\|_2\\
&\hspace{0.5cm} \leq\kappa_0\left\| \int_{\bbt^{2d}} \Big(  \Psi_c(x_1, x_2^*)     \overline{\Psi_j(x_1^*, x_2^*)}\Psi_j(x_1^*, x_2)  - \Psi_j(x_1, x_2^*)      \overline{\Psi_c(x_1^*, x_2^*)}\Psi_j(x_1^*, x_2)  \Big)dx_1^* dx_2^*\right\|_2\\
&\hspace{0.7cm}+\kappa_1\left\| \int_{\bbt^{2d}} \Big( \Psi_j(x_1, x_2^*)      \overline{\Psi_j(x_1^*, x_2^*)}\Psi_c(x_1^*, x_2)-\Psi_j(x_1, x_2^*)      \overline{\Psi_c(x_1^*, x_2^*)}\Psi_j(x_1^*, x_2) \Big) dx_1^*dx_2^*\right\|_2.
\end{align*}
From the H\"{o}lder inequality, we have the boundedness of $\|\partial_t\Psi_i+\mathrm{i}\mathcal{H}\Psi_i\|_2$. Now we will prove the boundedness of the terms involved with $\mathcal{H}\Psi_i$. In general, we have following result:
\begin{align}
\begin{aligned}\label{D-13}
&\int_{\mathbb{T}^{4d}}(\mathcal{H}\Psi_i(x_1, x_2))\overline{\Psi_j(x_1^*, x_2)}\Psi_k(x_1^*, x_2^*)\overline{\Psi_l(x_1, x_2^*)}dx_1dx_2dx_1^*dx_2^*\\
& \hspace{0.5cm}+\int_{\mathbb{T}^{4d}}\Psi_i(x_1, x_2)\overline{\Psi_j(x_1^*, x_2)}\Psi_k(x_1^*, x_2^*)(\mathcal{H}\overline{\Psi_l(x_1, x_2^*)})dx_1dx_2dx_1^*dx_2^*
\end{aligned}
\end{align}
is bounded for all $i, j, k, l$. From
\begin{align*}
\mathcal{H}\Psi_i(x_1, x_2)=-\frac{1}{2}(\Delta_{x_1}+\Delta_{x_2})\Psi_i(x_1, x_2)+(V_1(x_1)+V_2(x_2))\Psi_i(x_1, x_2),\\
\mathcal{H}\Psi_l(x_1, x_2^*)=-\frac{1}{2}(\Delta_{x_1}+\Delta_{x_2^*})\Psi_l(x_1, x_2^*)+(V_1(x_1)+V_2(x_2^*))\Psi_l(x_1, x_2^*),
\end{align*}
we can easily check that the term involved with $V_1(x_1)$ and $V_2(x_2)$ are bounded, since continuous functions $V_1$ and $V_2$ are defined on the compact set $\mathbb{T}^d$. Now it is sufficient to prove that
\[
\int_{\mathbb{T}^{4d}}\Delta_{x_1}(\Psi_i(x_1, x_2)\overline{\Psi_j(x_1^*, x_2)}\Psi_k(x_1^*, x_2^*)\overline{\Psi_l(x_1, x_2^*)})dx_1dx_2dx_1^*dx_2^* 
\]
is bounded. However we have
\begin{align*}
&\int_{\mathbb{T}^{4d}}\Delta_{x_1}(\Psi_i(x_1, x_2)\overline{\Psi_j(x_1^*, x_2)}\Psi_k(x_1^*, x_2^*)\overline{\Psi_l(x_1, x_2^*)})dx_1dx_2dx_1^*dx_2^* \\
& \hspace{0.5cm} =\int_{\mathbb{T}^{4d}}\nabla_{x_1}\cdot\Big(\nabla_{x_1} \big(\Psi_i(x_1, x_2)\overline{\Psi_j(x_1^*, x_2)}\Psi_k(x_1^*, x_2^*)\overline{\Psi_l(x_1, x_2^*)}\big)\Big)dx_1dx_2dx_1^*dx_2^*=0 
\end{align*}
due to the divergence theorem. So we know that the value of \eqref{D-13} is uniformly bounded over time. If we use this property, then we can obtain the uniform boundedness of \eqref{D-12}. Now, we combine this results and \eqref{D-12} to get the uniform boundedness of 
\[
\int_{\mathbb{T}^{2d}}\partial_t \Big| \langle \Psi_c(x_1) | \Psi_j(x_1^*) \rangle -    \langle \Psi_j(x_1)   | \Psi_c(x^*_1)  \rangle \Big|^2dx_1dx_1^*.
\]
From this result, we have the uniform boundedness of $\frac{d^2}{dt^2}R^2$ which implies the uniform continuity of $\frac{dR^2}{dt}$.  Finally we apply Barbalat's lemma to derive
\[ \lim_{t \to \infty} \frac{dR^2}{dt} = 0. \]
This gives the desired estimate.
\end{proof}

\section{The Schr\"{o}dinger-Lohe tensor(SLT) model} \label{sec:5}
\setcounter{equation}{0}
In this section, we introduce the highest object located in the vertex of the infinite-dimensional {\it Schr\"{o}dinger-Lohe hierarchy}, namely ``{\it the Schrodinger-Lohe tensor model}". The construction is pretty much the same as the construction of the Schr\"{o}dinger-Lohe matrix model in spirit in previous section. First, we choose a standing wave solution for each linear Schr\"{o}dinger equation with Hamiltonian ${\mathcal H}_i$ and then, take a tensor product on $(\bbt^{d})^m = \bbt^d \times \cdots \times \bbt^d$, and then express our solution $\Psi(t, x_1, \cdots, x_m)$ as a linear combination of $\Phi_{\alpha_1}^1\otimes \Phi_{\alpha_2}^2 \otimes \cdots \otimes \Phi_{\alpha_m}^m$ with time-dependent coefficients. Finally, we design our Schr\"{o}dinger-Lohe tesnor model suitably so that it can be reduced into the Lohe tensor model as in the Schr\"{o}dinger-Lohe matrix model.  
\subsection{Construction of SLT model} \label{sec:5.1} 
Let $x = (x_1, \cdots, x_m) \in \bbt^{md}$ and ${\mathcal H}$ be an interaction free Hamiltonian:
\[
{\mathcal H} = - \frac{1}{2} \sum_{j=1}^{m} \Delta_{x_j} +V(x_1, x_2, \cdots, x_m).
\]
For each $j = 1, \cdots, m$, we set $\{ \Phi^j_{\alpha_j}(t, x_j)\}$ be an orthonormal family of standing wave solutions to the linear Schr\"{o}dinger equation associated with the Hamiltonian ${\mathcal H}_j$.  Then, the set $\{ \Phi^1_{\alpha_1} \otimes \Phi^2_{\alpha_2} \otimes \cdots \otimes \Phi^m_{\alpha_m} \}$ is an orthonormal family of basis, and for a wave function $\Psi_i = \Psi_i(t, x_1, \cdots, x_m)$, we set 
\begin{align}\label{E-1}
\Psi_i(t, x)=\sum_{\alpha_*}[T_i(t)]_{\alpha_*}  (\Phi^1_{\alpha_1} \otimes \Phi^2_{\alpha_2} \otimes \cdots \otimes \Phi^m_{\alpha_m})(t, x), \quad t \geq 0,~~x \in \bbt^{md},
\end{align}
where $\alpha = (\alpha_1, \cdots, \alpha_m)$ and 
\[ (\Phi^1_{\alpha_1} \otimes \Phi^2_{\alpha_2} \otimes \cdots \otimes \Phi^m_{\alpha_m})(t, x_1, \cdots, x_m) :=  \Phi^1_{\alpha_1}(t, x_1) \cdot \Phi_{\alpha_2}^2(t, x_2) \cdots \Phi^m_{\alpha_m}(t, x_m). \]
For notational simplicity, we also introduce several handy notation. 
\begin{align*}
\begin{aligned}
& \Psi(x_{10}, \cdots, x_{m0})=\Psi(x_{*0}), \quad \Psi(x_{11}, \cdots, x_{m1})=\Psi(x_{*1}), \\
& \Psi(x_{1i_1}, \cdots, x_{mi_m})=\Psi(x_{*i_*}), \quad \Psi(x_{1(1-i_1)}, \cdots, x_{m(1-i_m)})=\Psi(x_{*(1-i_*)}), \\
& dx_{10}\cdots dx_{m0}=dx_{*0}, \quad dx_{11}\cdots dx_{m1}=dx_{*1}, \\
& dx_{1i_1}\cdots dx_{mi_m}=dx_{*i_*}, \quad dx_{1(1-i_1)}\cdots dx_{m(1-i_m)}=dx_{*(1-i_*)}.
\end{aligned}
\end{align*}
Now, we consider the Cauchy problem to the SLT model:
\begin{equation}
\begin{cases} \label{E-2}
\displaystyle \mathrm{i}\partial_t\Psi_j(x_{*0}) - {\mathcal H} \Psi_j(x_{*0}) \\
\displaystyle  \hspace{0.5cm} = \mathrm{i}\sum_{i_*\in\{0, 1\}^m}\kappa_{i_*}\int_{\bbt^{md}} \Big(\Psi_c(x_{*i_*})\overline{\Psi_j(x_{*1})}\Psi_j(x_{*(1-i_*)})-\Psi_j(x_{*i_*})\overline{\Psi_c(x_{*1})}\Psi_j(x_{*(1-i_*)}) \Big)dx_{*1}, \\
\displaystyle \Psi_j \Big|_{t = 0} = \Psi_j^{in}.
\end{cases}
\end{equation}
Note that the Schr\"{o}dinger-Lohe model is a special case of Schr\"{o}dinger-Lohe tensor model with $m=1$, and Schr\"{o}dinger-Lohe matrix model is a special case of Schr\"{o}dinger Lohe tensor model with $m=2$. If we consider the case $m=0$ of the Schr\"{o}dinger-Lohe model, then $\Psi_i$ is no longer a function of space, i.e.,  $\Psi_i$ is a complex-valued function defined on time domain. Hence, we can easily derive the Kuramoto model from the Lohe tensor model.

\begin{proposition} \label{P5.1}
Let $\{\Psi_j \}$ be a global smooth solution to \eqref{E-2}.  Then $L^2$-norm of $\Psi_i$ is a conserved quantity.
\[
\frac{d}{dt}\|\Psi_i(t) \|^2_{2}=0, \quad t > 0, \quad  i = 1, \cdots, N.
\]
\end{proposition}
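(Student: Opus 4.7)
The plan is to mimic the direct $L^2$-energy computation used for Proposition \ref{P4.1} in the matrix case. Differentiating $\|\Psi_j(t)\|_2^2 = \langle \Psi_j | \Psi_j\rangle$ in time gives
\[ \frac{d}{dt}\|\Psi_j\|_2^2 = 2\,\mathrm{Re}\int_{\bbt^{md}} \overline{\Psi_j(x_{*0})}\,\partial_t\Psi_j(x_{*0})\,dx_{*0}, \]
and substituting \eqref{E-2} splits the right-hand side into a Hamiltonian piece and, for each $i_* \in \{0,1\}^m$, a nonlinear piece.

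The Hamiltonian piece equals $2\,\mathrm{Re}(-\mathrm{i}\langle\Psi_j|\mathcal{H}\Psi_j\rangle) = 2\,\mathrm{Im}\langle\Psi_j|\mathcal{H}\Psi_j\rangle$, which vanishes because $\mathcal{H} = -\tfrac{1}{2}\sum_{k}\Delta_{x_k} + V$ is Hermitian on $L^2(\bbt^{md})$, so $\langle\Psi_j|\mathcal{H}\Psi_j\rangle \in \bbr$. For each $i_*$, after integrating in $x_{*0}$, the nonlinear piece contributes $2\,\mathrm{Re}\bigl[\kappa_{i_*}(\mathcal{A}_{i_*} - \mathcal{B}_{i_*})\bigr]$, where
\begin{align*}
\mathcal{A}_{i_*} &:= \int_{\bbt^{2md}}\Psi_c(x_{*i_*})\overline{\Psi_j(x_{*1})}\Psi_j(x_{*(1-i_*)})\overline{\Psi_j(x_{*0})}\,dx_{*0}dx_{*1},\\
\mathcal{B}_{i_*} &:= \int_{\bbt^{2md}}\Psi_j(x_{*i_*})\overline{\Psi_c(x_{*1})}\Psi_j(x_{*(1-i_*)})\overline{\Psi_j(x_{*0})}\,dx_{*0}dx_{*1},
\end{align*}
and it remains to show $\mathrm{Re}(\mathcal{A}_{i_*}-\mathcal{B}_{i_*})=0$ for every $i_*$.

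The decisive maneuver is a partial coordinate swap. Fix $i_*$ and set $I_0 := \{k : (i_*)_k = 0\}$; perform the measure-preserving involution $x_{k0} \leftrightarrow x_{k1}$ only for $k \in I_0$, leaving the other index pairs untouched. A short multi-index check shows that this substitution simultaneously realizes $x_{*i_*} \leftrightarrow x_{*1}$ and $x_{*(1-i_*)} \leftrightarrow x_{*0}$. Applying this change of variables to $\overline{\mathcal{A}_{i_*}}$ and merely reordering the four factors produces $\mathcal{B}_{i_*}$; hence $\mathcal{B}_{i_*} = \overline{\mathcal{A}_{i_*}}$, so
\[ \mathcal{A}_{i_*} - \mathcal{B}_{i_*} \;=\; \mathcal{A}_{i_*} - \overline{\mathcal{A}_{i_*}} \;=\; 2\mathrm{i}\,\mathrm{Im}(\mathcal{A}_{i_*}) \]
is purely imaginary. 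Summing over $i_* \in \{0,1\}^m$ against the non-negative weights $\kappa_{i_*}$ and taking real parts, the entire nonlinear contribution vanishes, and $\frac{d}{dt}\|\Psi_j\|_2^2 = 0$ follows.

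The main obstacle is the multi-index bookkeeping: one must verify that restricting the involution to the coordinates indexed by $I_0$ genuinely implements the double transposition $(x_{*i_*},x_{*(1-i_*)}) \leftrightarrow (x_{*1},x_{*0})$ rather than the global swap $x_{*0} \leftrightarrow x_{*1}$, which is only an invariance of $\overline{\mathcal{A}_{i_*}}$ and does not pair it with $\mathcal{B}_{i_*}$. Once this combinatorial identification is carried out cleanly, the cancellation is immediate and parallels the argument used in Proposition \ref{P4.1}.
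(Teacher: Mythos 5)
Your proposal is correct and follows essentially the same route as the paper: differentiate $\|\Psi_j\|_2^2$, kill the Hamiltonian term by Hermiticity, and cancel the nonlinear contribution via the dummy-variable exchange $x_{*1}\leftrightarrow x_{*i_*}$, $x_{*0}\leftrightarrow x_{*(1-i_*)}$ (the paper phrases this as $\mathcal{I}_{31}=-\mathcal{I}_{32}$, which is equivalent to your $\mathcal{B}_{i_*}=\overline{\mathcal{A}_{i_*}}$). Your explicit verification that the swap is realized by the coordinate involution on the index set $I_0$ is a welcome clarification of a step the paper leaves implicit, but it is the same argument.
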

\begin{proof} By direct calculation, one has 
\begin{align*}
\begin{aligned}
& \frac{d}{dt}\langle{\Psi_i|\Psi_i}\rangle =\langle{\Psi_i|\partial_t\Psi_i}\rangle+(c.c.) =\left \langle \Psi_i\Big|-\mathrm{i}H \Psi_i(x_{*0})  \right. \\
& \left. +\sum_{i_*\in\{0, 1\}^m}\kappa_{i_*}\int_{\bbt^{md}} \Big(\Psi_c(x_{*i_*})\overline{\Psi_i(x_{*1})}\Psi_i(x_{*(1-i_*)})-\Psi_i(x_{*i_*})\overline{\Psi_c(x_{*1})}\Psi_i(x_{*(1-i_*)}) \Big) dx_{*1}\right\rangle +(c.c.)\\
&=\sum_{i_*\in\{0, 1\}^m}\kappa_{i_*}\left\langle{\Psi_i\Big|\int_{\bbt^{md}} (\Psi_c(x_{*i_*})\overline{\Psi_i(x_{*1})}\Psi_i(x_{*(1-i_*)})-\Psi_i(x_{*i_*})\overline{\Psi_c(x_{*1})}\Psi_i(x_{*(1-i_*)}))dx_{*1}}\right\rangle +(c.c.) \\
&=\sum_{i_*\in\{0, 1\}^m}\kappa_{i_*}\int_{\bbt^{2md}}  \underbrace{(\overline{\Psi_i(x_{*0})}\Psi_c(x_{*i_*})\overline{\Psi_i(x_{*1})}\Psi_i(x_{*(1-i_*)})-\overline{\Psi_i(x_{*0})}\Psi_i(x_{*i_*})\overline{\Psi_c(x_{*1})}\Psi_i(x_{*(1-i_*)}))dx_{*1}dx_{*0}}_{=:\mathcal{I}_{31}}\\
&+\sum_{i_*\in\{0, 1\}^m}\kappa_{i_*}\int_{\bbt^{2md}}\underbrace{({\Psi_i(x_{*0})}\overline{\Psi_c(x_{*i_*})}{\Psi_i(x_{*1})}\overline{\Psi_i(x_{*(1-i_*)})}-{\Psi_i(x_{*0})}\overline{\Psi_i(x_{*i_*})}{\Psi_c(x_{*1})}\overline{\Psi_i(x_{*(1-i_*)})})dx_{*1}dx_{*0}}_{=:\mathcal{I}_{32}}.
\end{aligned}
\end{align*}
Since $x_{*0}$ and $x_{*1}$ are dummy variables, we can exchange
\[
x_{*1}\leftrightarrow x_{*i_*},\quad x_{*0}\leftrightarrow x_{*(1-i_*)}.
\]
Hence, we can obtain 
\[ \mathcal{I}_{31}=-\mathcal{I}_{32} \]
to get the desired estimate $\frac{d}{dt}\langle \Psi_i|\Psi_i \rangle=0.$
\end{proof}

\vspace{0.5cm}

Consider the Cauchy problem to the nonlinear system associated with \eqref{E-2}:
\begin{equation}
\begin{cases} \label{E-3}
\displaystyle \mathrm{i}\partial_t\varphi_i(x_{*0})  \\
\displaystyle \hspace{0.2cm} = \mathrm{i}\sum_{i_*\in\{0, 1\}^m}\kappa_{i_*}\int_{\bbt^{md}} \Big (\varphi_c(x_{*i_*})\overline{\varphi_i(x_{*1})}\varphi_i(x_{*(1-i_*)})-\varphi_i(x_{*i_*})\overline{\varphi_c(x_{*1})}\varphi_i(x_{*(1-i_*)}) \Big)dx_{*1}, \\
\displaystyle \varphi_i \Big|_{t = 0} = \Psi_i^{in}.
\end{cases}
\end{equation}
Next, we derive the solution splitting property of \eqref{E-3}.
\begin{proposition}\label{P5.2}
Suppose that the one-body potential $V$ is additive: 
\[
V(x_1, \cdots, x_m) =  \sum_{j = 1}^{m} V_j(x_j),
\]
and let $\{\Psi_i\}$ and $\{\varphi_i \}$ be global smooth solutions to \eqref{E-2} and \eqref{E-3}, respectively. Then, one has 
\[  \Psi_i(t, x_1, x_2) =  e^{-\mathrm{i} {\mathcal H}t} \varphi_i(t,x_1, x_2).   \]
\end{proposition}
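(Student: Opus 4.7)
The plan is to follow the two-step strategy of Proposition \ref{P4.2}, generalized from rank $m=2$ to arbitrary rank $m$. First, I would exploit the additivity of $V$ to write $\mathcal{H} = \sum_{k=1}^{m}\mathcal{H}_k$ with $\mathcal{H}_k := -\tfrac{1}{2}\Delta_{x_k} + V_k(x_k)$ acting only on the $k$-th coordinate. The commutator computation in \eqref{D-5-2-1}-\eqref{D-5-2-2} extends inductively to yield $[\mathcal{H}_j,\mathcal{H}_k] = 0$ for all $j,k$, and hence
\[
e^{-\mathrm{i}\mathcal{H}t} \;=\; \prod_{k=1}^{m} e^{-\mathrm{i}\mathcal{H}_k t},
\]
with factors commuting in any order. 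This tensorial splitting of the free flow is the structural fact that drives everything else.

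Next, set $\varphi_i(t,x_{*0}) := e^{\mathrm{i}\mathcal{H}t}\Psi_i(t,x_{*0})$ and apply $e^{\mathrm{i}\mathcal{H}t}$ to \eqref{E-2}. The linear part becomes $\mathrm{i}\partial_t \varphi_i$ by direct differentiation, so the task reduces to proving, for each $i_*\in\{0,1\}^m$, the identity
\[
e^{\mathrm{i}\mathcal{H}(x_{*0})t}\!\int_{\bbt^{md}}\!\Psi_c(x_{*i_*})\overline{\Psi_j(x_{*1})}\Psi_j(x_{*(1-i_*)})\,dx_{*1} \;=\; \int_{\bbt^{md}}\!\varphi_c(x_{*i_*})\overline{\varphi_j(x_{*1})}\varphi_j(x_{*(1-i_*)})\,dx_{*1},
\]
together with the analogous identity for the second cubic term.

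The key step, which is also the principal obstacle, is this transfer of $e^{\mathrm{i}\mathcal{H}t}$ under the integral. Expanding the right-hand side via the factorization above and the identity $\overline{e^{\mathrm{i}\mathcal{H}_k t}\Psi} = e^{-\mathrm{i}\mathcal{H}_k t}\overline{\Psi}$ (valid because each $V_k$ is real), a slot-by-slot analysis shows that on each integrated variable $x_{k1}$ the two factors acting there form a pair of the form $\int_{\bbt^d}[e^{\mathrm{i}\mathcal{H}_k t}f](x_{k1})\,\overline{[e^{\mathrm{i}\mathcal{H}_k t}g](x_{k1})}\,dx_{k1}$, which collapses to $\int_{\bbt^d} f\bar g\,dx_{k1}$ by the $L^2$-isometry of $e^{\mathrm{i}\mathcal{H}_k t}$. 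On each external variable $x_{k0}$, exactly one factor $e^{\mathrm{i}\mathcal{H}_k(x_{k0})t}$ survives and can be pulled outside the integral; reassembling these surviving factors over $k$ produces $e^{\mathrm{i}\mathcal{H}(x_{*0})t}$ acting on the original integral. The bookkeeping is delicate because there are $2^m$ multi-indices $i_*$, and for each one the surviving $e^{\mathrm{i}\mathcal{H}_k t}$ lands on a different argument (either $\Psi_c$ or $\Psi_j(x_{*(1-i_*)})$) according to the bit $i_k$, while its companion on $x_{k1}$ must be routed either to $\overline{\Psi_j(x_{*1})}$ or to $\Psi_c$. Additivity of $V$ is indispensable here: it supplies both the commutativity of the $\mathcal{H}_k$ needed for the factorization and the single-slot action of each $e^{\mathrm{i}\mathcal{H}_k t}$, which is precisely what makes the slot-by-slot cancellation unambiguous.
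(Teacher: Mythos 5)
Your proposal is correct and follows essentially the same route as the paper: additivity of $V$ gives commuting one-slot Hamiltonians and the factorization $e^{-\mathrm{i}\mathcal{H}t}=\prod_k e^{-\mathrm{i}\mathcal{H}_k t}$, after which $e^{\mathrm{i}\mathcal{H}t}$ is transferred under the integral slot by slot — the paper packages your pairing-and-isometry argument as the single identity $e^{-\mathrm{i}\mathcal{H}(x_{*0})t}=e^{-\mathrm{i}\mathcal{H}(x_{*i_*})t}\,e^{\mathrm{i}\mathcal{H}(x_{*1})t}\,e^{-\mathrm{i}\mathcal{H}(x_{*(1-i_*)})t}$, which is exactly your observation that the two exponentials on each integrated variable $x_{k1}$ cancel while the one on each external variable $x_{k0}$ survives. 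No gap; the bookkeeping you flag as delicate is handled identically in the paper.
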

\begin{proof} As in the proof of Proposition \ref{P4.2}, we have
\begin{align*}
\begin{aligned}
& {\mathcal H} = - \frac{1}{2} \sum_{j=1}^{m} \Delta_{x_j} +V(x_1, x_2, \cdots, x_m) =  \sum_{j=1}^{m} \Big(  - \frac{1}{2} \Delta_{x_j} + V_j(x_j)   \Big) =  \sum_{j=1}^{m} {\mathcal H}_j, \\
& {\mathcal H}_i \circ {\mathcal H}_j = {\mathcal H}_j \circ {\mathcal H}_i, \qquad \forall i, j \in \{1, \cdots, N \}.
\end{aligned}
\end{align*}
These relations imply
\[
e^{-\mathrm{i} {\mathcal H}(x_1, x_2, \cdots, x_m)t}=e^{-\mathrm{i}{\mathcal H}_{a_1}(x_{a_1})t}\cdot e^{-\mathrm{i} {\mathcal H}_{a_2}(x_{a_2})t}\cdot\cdots\cdot e^{-\mathrm{i} {\mathcal H}_{a_m}(x_{a_m})t},
\]
where $\{a_1, a_2, \cdots, a_m\}=\{1, 2, \cdots, m\}$. Therefore, one has 
\[
e^{-\mathrm{i} {\mathcal H}(x_{*0})t}=e^{-\mathrm{i} {\mathcal H}(x_{*i_*})t}\cdot e^{\mathrm{i} {\mathcal H}(x_{*1})t}\cdot e^{-\mathrm{i} {\mathcal H}(x_{*(1-i_*)})t}.
\]
It follows from \eqref{E-2} that 
\begin{align*}
\begin{aligned}
&\mathrm{i}\partial_t\Psi_i(x_{*0}) ={\mathcal H} \Psi_i(x_{*0}) \\
& \hspace{0.2cm} +\mathrm{i}\sum_{i_*\in\{0, 1\}^m}\kappa_{i_*}\int_{\bbt^{md}} \Big(\Psi_c(x_{*i_*})\overline{\Psi_i(x_{*1})}\Psi_i(x_{*(1-i_*)})-\Psi_i(x_{*i_*})\overline{\Psi_c(x_{*1})}\Psi_i(x_{*(1-i_*)}) \Big)dx_{*1}.
\end{aligned}
\end{align*}
Since we have
\begin{align*}
&  \mathrm{i}\partial_t\varphi_i(x_{*0})  = \partial_t(e^{\mathrm{i}{\mathcal H}t}\Psi_i(x_{*0}))\\
& \hspace{0.5cm} =\sum_{i_*\in\{0, 1\}^m}\kappa_{i_*}e^{\mathrm{i}{\mathcal H}t}\int_{\bbt^{md}} (\Psi_c(x_{*i_*})\overline{\Psi_i(x_{*1})}\Psi_i(x_{*(1-i_*)})-\Psi_i(x_{*i_*})\overline{\Psi_c(x_{*1})}\Psi_i(x_{*(1-i_*)}))dx_{*1}\\
& \hspace{0.5cm} =\sum_{i_*\in\{0, 1\}^m}\kappa_{i_*}\int_{\bbt^{md}}(e^{\mathrm{i} {\mathcal H}(x_{*i_*})t}\Psi_c(x_{*i_*})\overline{e^{\mathrm{i} {\mathcal H}(x_{*1})t}\Psi_i(x_{*1})}e^{\mathrm{i} {\mathcal H}(x_{*(1-i_*)})t}\Psi_i(x_{*(1-i_*)})\\
&\hspace{2cm}-e^{\mathrm{i}{\mathcal H}(x_{*i_*})t}\Psi_i(x_{*i_*})\overline{e^{\mathrm{i} {\mathcal H}(x_{*1})t}\Psi_c(x_{*1})}e^{\mathrm{i} {\mathcal H}(x_{*(1-i_*)})t}\Psi_i(x_{*(1-i_*)}))dx_{*1}\\
& \hspace{0.5cm}=\sum_{i_*\in\{0, 1\}^m}\kappa_{i_*}\int_{\bbt^{md}}\big(\varphi_c(x_ {*i_*})\overline{\varphi_i(x_{*1})}\varphi_i(x_{*(1-i_*)})-\varphi_i(x_{*i_*})\overline{\varphi_c(x_{*1})}\varphi_i(x_{*(1-i_*)}))dx_{*1}.
\end{align*}
\end{proof}

\subsection{Reduction to the LT model} \label{sec:5.2}
In what follows, we present a reduction of the SLM to the Lohe tensor model in \cite{H-P2}. The basic idea is the same as in Section \ref{sec:4.1} and Section \ref{sec:4.3} for rank-1 and rank-2 tensors.   \newline

Let $\{\phi^i_{\alpha_i}(x_i)\}_{\alpha_i = 1}^{\infty}$ be an orthonormal system consisting of eigenfunctions of ${\mathcal H}_i$:
\[ {\mathcal H}_i \phi^i_{\alpha_i} =E^i_{\alpha_i} \phi^i_{\alpha_i}, \quad i = 1, \cdots, N. \]
Now, we introduce standing wave solution $\Phi^i_{\alpha_i}$ as follows:
\[
\Phi^i_{\alpha_i} (t, x_i) := e^{-\mathrm{i}E^i_{\alpha_i} t} \phi^i_{\alpha_i} (x_i), \quad i = 1, \cdots, N. \]
Then, it is easy to see check 
\begin{align*}\label{E-3-1}
\mathrm{i}\partial_t ( \Phi^1_{\alpha_1} \otimes \Phi^2_{\alpha_2} \otimes \cdots \otimes \Phi^m_{\alpha_m}) = {\mathcal H} ( \Phi^1_{\alpha_1} \otimes \Phi^2_{\alpha_2} \otimes \cdots \otimes \Phi^m_{\alpha_m}).
\end{align*}
Now, we expand $\Psi_j = \Psi_j(t, x_1, \cdots, x_m)$ in terms of the basis $\{ \Phi^1_{\alpha_1} \otimes \Phi^2_{\alpha_2}\otimes\cdots\otimes \Phi^m_{\alpha_m}\}_{\alpha_*}$:
\begin{equation} \label{E-4}
\Psi_j = \sum_{\alpha_*} [T_j(t)]_{\alpha_*}(\Phi_{\alpha_1}^1 \otimes \Phi_{\alpha_2}^2 \otimes \cdots \otimes \Phi_{\alpha_m}^m). 
\end{equation}

\begin{proposition} 
Let $\{\Psi_j \}$ be a global smooth solution to \eqref{E-2}.Then, the coefficient $[T_j]_{\alpha}$ satisfies the Lohe tensor model:
\[
\frac{d}{dt}[T_j]_{\alpha_{*0}} = \sum_{i_* \in \{0,1\}^m}{\kappa_{i_*}} \Big ([T_c]_{\alpha_{*i_*}}[\bar{T}_j]_{\alpha_{*1}}[T_j]_{\alpha_{*(1-i_*)}}-[T_j]_{\alpha_{*i_*}}[\bar{T}_c]_{\alpha_{*1}}[T_j]_{\alpha_{*(1-i_*)}} \Big).  
\]
\end{proposition}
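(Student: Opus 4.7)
The plan is to mirror the argument for the matrix case in Proposition \ref{P4.3}. First I substitute the ansatz \eqref{E-4} into \eqref{E-2}; then I use the fact that each tensor-product standing wave $\Phi^1_{\alpha_1}\otimes\cdots\otimes\Phi^m_{\alpha_m}$ satisfies the linear Schr\"odinger equation with Hamiltonian $\mathcal{H}$, to absorb the free-flow term; and finally I evaluate the $x_{*1}$-integral on the right-hand side by repeated use of orthonormality of $\{\phi^k_{\alpha_k}\}_{\alpha_k}$ and match coefficients in the orthonormal basis.

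Concretely, on the left-hand side the Leibniz rule together with $\mathrm{i}\partial_t(\Phi^1_{\alpha_1}\otimes\cdots\otimes\Phi^m_{\alpha_m})=\mathcal{H}(\Phi^1_{\alpha_1}\otimes\cdots\otimes\Phi^m_{\alpha_m})$ gives
\[
\mathrm{i}\partial_t\Psi_j(x_{*0})-\mathcal{H}\Psi_j(x_{*0})=\mathrm{i}\sum_{\alpha_*}[\dot T_j]_{\alpha_*}(\Phi^1_{\alpha_1}\otimes\cdots\otimes\Phi^m_{\alpha_m})(t,x_{*0}).
\]
On the right-hand side, for each fixed $i_*\in\{0,1\}^m$ I expand $\Psi_c(x_{*i_*})$, $\overline{\Psi_j(x_{*1})}$, and $\Psi_j(x_{*(1-i_*)})$ via \eqref{E-4} with three independent dummy multi-indices, then carry out the $x_{*1}$-integration one coordinate at a time. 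For each $k\in\{1,\ldots,m\}$, exactly two of the three factors depend on $x_{k1}$: the dependence pattern is $\overline{\Phi^k_{\beta_k}(x_{k1})}\Phi^k_{\gamma_k}(x_{k1})$ when $i_k=0$, and $\Phi^k_{\alpha_k}(x_{k1})\overline{\Phi^k_{\beta_k}(x_{k1})}$ when $i_k=1$. The spatial orthonormality of $\{\phi^k_{\alpha_k}\}$ yields a Kronecker delta in each case, and the associated time exponentials $e^{\pm\mathrm{i}E^k_{\cdot}t}$ automatically cancel on the support of that delta. The free $x_{*0}$-dependence that survives is then precisely a single tensor product $(\Phi^1_{\mu_1}\otimes\cdots\otimes\Phi^m_{\mu_m})(t,x_{*0})$ with $\mu_k$ determined by the index attached to $x_{k0}$ in the relevant factor (namely $\alpha_k$ if $i_k=0$ and $\gamma_k$ if $i_k=1$).

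After relabeling the surviving free indices as $\alpha_{*0}$ and the contracted indices as $\alpha_{*1}$, one sees that the coefficient of $\Phi^1_{\alpha_{10}}(x_{10})\cdots\Phi^m_{\alpha_{m0}}(x_{m0})$ coming from the ``plus'' piece is exactly $[T_c]_{\alpha_{*i_*}}[\bar T_j]_{\alpha_{*1}}[T_j]_{\alpha_{*(1-i_*)}}$ in Einstein-summation form, and similarly for the ``minus'' piece with $T_c$ and $T_j$ interchanged in the first two factors. Comparing coefficients across the orthonormal basis and summing over $i_*\in\{0,1\}^m$ with weights $\kappa_{i_*}$ yields the stated Lohe tensor ODE.

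The main obstacle is entirely one of indexing: there are three independent summation multi-indices in each triple product (one for each of $\Psi_c$, $\overline{\Psi_j}$, $\Psi_j$), and one must verify, case by case in $i_k\in\{0,1\}$, that the Kronecker deltas produced by the $x_{k1}$-integration identify the correct pair of these indices so that the resulting contraction pattern on $([T_c],[\bar T_j],[T_j])$ is precisely $(\alpha_{*i_*},\alpha_{*1},\alpha_{*(1-i_*)})$. Once the bookkeeping convention that the label $0$ tracks $x_{*0}$-dependence (free) and the label $1$ tracks $x_{*1}$-integration (summed) is fixed, the identification is forced and the rest of the argument is mechanical. No free-flow term appears in the reduced ODE because the $\mathcal{H}\Psi_j$-contribution has already been absorbed into the time dependence of the basis, paralleling the passage from \eqref{B-1} to \eqref{B-3}.
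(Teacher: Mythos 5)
Your proposal is correct and follows essentially the same route as the paper: substitute the expansion \eqref{E-4}, absorb the free-flow term via $\mathrm{i}\partial_t(\Phi^1_{\alpha_1}\otimes\cdots\otimes\Phi^m_{\alpha_m})=\mathcal{H}(\Phi^1_{\alpha_1}\otimes\cdots\otimes\Phi^m_{\alpha_m})$, and use orthonormality of the standing-wave basis to reduce the $x_{*1}$-integrals of the triple products to the contraction $[T_c]_{\alpha_{*i_*}}[\bar T_j]_{\alpha_{*1}}[T_j]_{\alpha_{*(1-i_*)}}$ before matching coefficients. The only cosmetic difference is that the paper first projects onto the basis (its \eqref{E-7}) and then evaluates the resulting integral (its \eqref{E-8}), whereas you integrate out $x_{*1}$ first and match coefficients afterwards; your explicit case-by-case check in $i_k\in\{0,1\}$ of which pair of dummy indices each Kronecker delta identifies is exactly the bookkeeping the paper leaves implicit.
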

\begin{proof}
We substitute \eqref{E-4} into the L.H.S. of \eqref{E-2} to get 
\begin{align}
\begin{aligned}\label{E-5}
\mathrm{i}\partial_t\Psi_j(t, x_{*0}) &=\mathrm{i} \sum_{\alpha_*}  \Big([\dot{T}_j]_{\alpha_*}\Phi_{\alpha_1}^1(t, x_{10})\Phi_{\alpha_2}^2(t, x_{20})\cdots\Phi_{\alpha_m}^m(t,x_{m0})  \\
& \hspace{1cm} +[{T}_j]_{\alpha_*}\partial_t(\Phi_{\alpha_1}^1(t, x_{10})\Phi_{\alpha_2}^2(t,x_{20})\cdots\Phi_{\alpha_m}^m(t, x_{m0}))\Big)\\
&=\mathrm{i}  \sum_{\alpha_*} [\dot{T}_j]_{\alpha_*}\Phi_{\alpha_1}^1(t, x_{10})\Phi_{\alpha_2}^2(t, x_{20})\cdots\Phi_{\alpha_m}^m(t,x_{m0})\\
&\hspace{0.5cm} +\sum_{\alpha_*}[{T}_j]_{\alpha_*} {\mathcal  H} \left(\Phi_{\alpha_1}^1(t, x_{10})\Phi_{\alpha_2}^2(t,x_{20})\cdots\Phi_{\alpha_m}^m(t,x_{m0})\right)\\
&=\mathrm{i}[\dot{T}_j]_{\alpha*}\Phi_{\alpha_1}^1(t,x_{10})\Phi_{\alpha_2}^2(t,x_{20})\cdots\Phi_{\alpha_m}^m(t, x_{m0})+ {\mathcal H} \Psi_j(x_{*0}).
\end{aligned}
\end{align}
Now, we equate \eqref{E-2} and \eqref{E-5} to get 
\begin{align*}
\begin{aligned}
&[\dot{T}_j]_{\alpha_1\cdots\alpha_m}\Phi_{\alpha_1}^1(t,x_{10})\Phi_{\alpha_2}^2(t, x_{20})\cdots\Phi_{\alpha_{m}}^m(t, x_{m0}) \\
&= \sum_{i_*\in\{0, 1\}^m}\kappa_{i_*}\Big(\int_{\bbt^{md}} \big(\Psi_c(x_{*i_*})\overline{\Psi_j(x_{*1})}\Psi_j(x_{*(1-i_*)})-\Psi_j(x_{*i_*})\overline{\Psi_c(x_{*1})}\Psi_j(x_{*(1-i_*)})\big)dx_{*1} \Big).
\end{aligned}
\end{align*}
This yields
\begin{align}
\begin{aligned}\label{E-7}
&[\dot{T}_j]_{\alpha_*}=\sum_{i_*\in\{0, 1\}^m}\kappa_{i_*}\int_{\bbt^{2md}} \Big(\Psi_c(x_{*i_*})\overline{\Psi_j(x_{*1})}\Psi_j(x_{*(1-i_*)})-\Psi_j(x_{*i_*})\overline{\Psi_c(x_{*1})}\Psi_j(x_{*(1-i_*)}) \Big)\\
&\hspace{3cm}\times\overline{\Phi_{\alpha_1}^1(t, x_{10})\Phi_{\alpha_2}^2(t, x_{20})\cdots\Phi_{\alpha_{m}}^m(t, x_{m0})}dx_{*1}dx_{*0}.\\
\end{aligned}
\end{align}
On the other hand, we use the relation \eqref{E-7} and  the orthogonality of $\{\Phi^1_{\alpha_{1}}(x_{10}) \otimes\cdots\otimes\Phi^m_{\alpha_{m}}(x_{m0}) \}_{\alpha_*}$ to get 
\begin{align}
\begin{aligned} \label{E-8}
&\int_{\bbt^{2md}} \Psi_c(x_{*i_*})\overline{\Psi_j(x_{*1})}\Psi_j(x_{*(1-i_*)})\overline{\Phi_{\alpha_1}^1(t,x_{10})\Phi_{\alpha_2}^2(t,x_{20} )\cdots\Phi_{\alpha_{m}}^m(t,x_{m0})}dx_{*1}dx_{*0}\\
& \hspace{5cm} =[T_c]_{\alpha_{*i_*}}[\bar{T}_j]_{\alpha_{*1}}[T_j]_{\alpha_{*(1-i_*)}}.
\end{aligned}
\end{align}
By \eqref{E-7} and \eqref{E-8}, one has the Lohe tensor model.
\end{proof}

\vspace{0.2cm}

\subsection{Emergent dynamics} \label{sec:5.3}
Suppose that $A$ and $B$ are two partitions of the set $\{1, 2, \cdots, N\}$ such that 
\[
{\mathcal I}_0 :=\{n:i_n=0,\quad 1\leq n\leq m\},\quad {\mathcal I}_1 :=\{n:i_n=1,\quad 1\leq n\leq m\}.
\]

\vspace{0.2cm}

Recall that
\begin{align*}
\begin{aligned}
&D_i := \bbt^d, \quad  D :=D_1 \times \cdots \times D_m, \quad D_A :=\prod_{n\in A}D_n,\quad D_B :=\prod_{n\in B}D_n, \\
& dx_{A0} :=\prod_{n\in A}dx_{n0}, \quad dx_{B0} :=\prod_{n\in B}dx_{n0},\quad dx_{A1}=\prod_{n\in A}dx_{n1},\quad dx_{B1} :=\prod_{n\in B}dx_{n1}.
\end{aligned}
\end{align*}
For a given configuration $\{\Psi_j \}$, we set 
\begin{equation} \label{E-9}
\Psi_c := \frac{1}{N} \sum_{j=1}^{N} \Psi_j, \quad R: = \| \Psi_c \|_2.
\end{equation}
\begin{lemma} \label{L5.1}
Let $\{\Psi_j\}$ be a global smooth solution to \eqref{E-2}. Then we have
\begin{align*}
\frac{dR^2}{dt}=\sum_{i_* \in \{0, 1\}^m}\kappa_{i_*}\sum_{i=1}^N\int_{D^2_A} \left|\int_{D_B}\big(\overline{\Psi_c(x_{*0})}\Psi_i(x_{*(1-i_*)})-\overline{\Psi_i(x_{*0})}\Psi_c(x_{*(1-i_*)})\big)dx_{B0}\right|^2dx_{A0}dx_{A1}.
\end{align*}
\end{lemma}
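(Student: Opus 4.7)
The plan is to imitate the computation for Lemma 4.1 in the rank-2 (SLM) case, but now tracked through the multi-index notation of the tensor setting. First I would expand
\[
R^2 = \|\Psi_c\|_2^2 = \frac{1}{N^2}\sum_{j,k=1}^N \langle \Psi_j | \Psi_k \rangle,
\]
differentiate in $t$ using the Leibniz rule, and then substitute the SLT equation \eqref{E-2} into $\partial_t\Psi_j$ and $\partial_t\Psi_k$. Since $\mathcal{H}$ is Hermitian, the two free-flow contributions $\langle -\mathrm{i}\mathcal{H}\Psi_j|\Psi_k\rangle$ and $\langle \Psi_j|-\mathrm{i}\mathcal{H}\Psi_k\rangle$ cancel, leaving only the mean-field coupling terms indexed by $i_* \in \{0,1\}^m$ and their complex conjugates.

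Next I would handle each fixed $i_*$-summand separately. Writing $\Psi_c = \frac{1}{N}\sum_k \Psi_k$, the sum over $j,k$ collapses to a sum over a single remaining index $i$ and produces four quartic integrals of the schematic form
\[
\int \overline{\Psi_c(x_{*0})}\,\Psi_c(x_{*i_*})\,\overline{\Psi_i(x_{*1})}\,\Psi_i(x_{*(1-i_*)})\, dx_{*0}\,dx_{*1},
\]
with one of the two central factors replaced by $\Psi_i$ (resp. $\overline{\Psi_i}$), plus the complex conjugates coming from $\langle\partial_t\Psi_i|\Psi_j\rangle$. The main obstacle is bookkeeping: I need to rename the dummy variables exactly as in the proof of Proposition \ref{P5.1}, namely the swap $x_{*0}\leftrightarrow x_{*(1-i_*)}$ and $x_{*1}\leftrightarrow x_{*i_*}$, so that the four pieces assemble into a single product of the form $(\overline{\Psi_c}\Psi_i - \overline{\Psi_i}\Psi_c)$ at position $(x_{*0}, x_{*(1-i_*)})$ times its conjugate at position $(x_{*1}, x_{*i_*})$.

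Once this is done, the next step is to recognise the appropriate Fubini split. Writing $A = \mathcal{I}_0$ and $B = \mathcal{I}_1$ (so that $(1-i_n) = 1$ for $n\in A$ and $(1-i_n) = 0$ for $n\in B$), the $B$-coordinates at the ``$0$'' slot are precisely the ones in which the two factors are contracted, while the $A$-coordinates at the ``$0$'' and ``$1$'' slots remain as outer integration variables. Performing the $dx_{B0}$ integration first identifies the factorised product as $|F_i(x_{A0},x_{A1})|^2$ with
\[
F_i(x_{A0},x_{A1}) = \int_{D_B}\bigl(\overline{\Psi_c(x_{*0})}\Psi_i(x_{*(1-i_*)}) - \overline{\Psi_i(x_{*0})}\Psi_c(x_{*(1-i_*)})\bigr)\,dx_{B0}.
\]

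Finally, summing over $i$ and $i_*$ and collecting the $\kappa_{i_*}$ coefficients yields the stated identity. The only subtlety is confirming that the combinatorial factor (the $1/N^2$ from $\Psi_c$, the $N$ appearing after summing $\Psi_k = N\Psi_c$, and the factor from combining original and conjugate terms) produces exactly the coefficient written, but this follows the same bookkeeping as in Lemma \ref{L4.1}(i), which is the $m=2$ special case of this assertion.
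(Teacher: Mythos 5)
Your proposal is correct and follows essentially the same route as the paper: the authors likewise cancel the Hermitian free flow, collapse the double sum via $\Psi_c$, perform the dummy-variable swap $x_{*0}\leftrightarrow x_{*(1-i_*)}$, $x_{*1}\leftrightarrow x_{*i_*}$, and then factor the quartic integral over $D_A\times D_B$ to recognize $|F_i(x_{A0},x_{A1})|^2$. One caution on the bookkeeping you deferred: the paper's own computation ends with the prefactor $\kappa_{i_*}/N$ rather than the $\kappa_{i_*}$ displayed in the lemma statement, so the constant is worth pinning down explicitly.
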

\begin{proof}
It follows from \eqref{E-2} and \eqref{E-9} that 
\begin{align*}
\begin{aligned}
\frac{d}{dt}\Psi_c(x_{*0}) &=-\mathrm{i} {\mathcal H} \Psi_c(x_{*0}) \\
&+\sum_{i_*}\frac{\kappa_{i_*}}{N}\int_D\sum_{i=1}^N(\Psi_c(x_{*i_*})\overline{\Psi_i(x_{*1})}\Psi_i(x_{*(1-i_*)})-\Psi_i(x_{*i_*})\overline{\Psi_c(x_{*1})}\Psi_i(x_{*(1-i_*)}))dx_{*1}.
\end{aligned}
\end{align*}
This yields
\begin{align*}
&\langle{\Psi_c| \partial_t\Psi_c}\rangle=\langle{\Psi_c| -\mathrm{i} {\mathcal H} \Psi_c}\rangle +\sum_{i_*}\frac{\kappa_{i_*}}{N}\sum_{i=1}^N\int_{D^2} \overline{\Psi_c(x_{*0})} \\
& \hspace{1.5cm} \times \Big(\Psi_c(x_{*i_*})\overline{\Psi_i(x_{*1})}\Psi_i(x_{*(1-i_*)})-\Psi_i(x_{*i_*})\overline{\Psi_c(x_{*1})}\Psi_i(x_{*(1-i_*)}) \Big)dx_{*1}dx_{*0}.
\end{align*}
Finally, one has
\begin{align*}
&\frac{d}{dt}\langle{\Psi_c|\Psi_c}\rangle=\langle{\Psi_c| \partial_t\Psi_c}\rangle+(c.c.) \\
& =\sum_{i_*}\frac{\kappa_{i_*}}{N}\sum_{i=1}^N\underbrace{\int_{D^2} \overline{\Psi_c(x_{*0})} \Big(\Psi_c(x_{*i_*})\overline{\Psi_i(x_{*1})}\Psi_i(x_{*(1-i_*)})-\Psi_i(x_{*i_*})\overline{\Psi_c(x_{*1})}\Psi_i(x_{*(1-i_*)}) \Big)dx_{*1}dx_{*0}}_{:=\mathcal{I}_{4}} \\
&+(c.c.).
\end{align*}
Now we simplify the term $\mathcal{I}_4+\overline{\mathcal{I}_4}$ as follows.
\begin{align*}
\begin{aligned}
&\mathcal{I}_4 +\overline{\mathcal{I}_4} =\int_{D^2} \Big(\overline{\Psi_c(x_{*0})}\Psi_c(x_{*i_*})\overline{\Psi_i(x_{*1})}\Psi_i(x_{*(1-i_*)})-\overline{\Psi_c(x_{*0})}\Psi_i(x_{*i_*})\overline{\Psi_c(x_{*1})}\Psi_i(x_{*(1-i_*)})  \\
& \hspace{1cm} +{\Psi_c(x_{*0})}\overline{\Psi_c(x_{*i_*})}\Psi_i(x_{*1})\overline{\Psi_i(x_{*(1-i_*)})}-{\Psi_c(x_{*0})}\overline{\Psi_i(x_{*i_*})}{\Psi_c(x_{*1})}\overline{\Psi_i(x_{*(1-i_*)})}\Big)dx_{*1}dx_{*0}.
\end{aligned}
\end{align*}
Since $x_{*0}$ and $x_{*1}$ are dummy variables, we can interchange the variables in third term and forth term in R.H.S. of above equality:
\[
x_{*0}\leftrightarrow x_{*(1-i_*)} \quad \mbox{and} \quad x_{*1}\leftrightarrow x_{*i_*}
\]
to get 
\begin{align*}
\begin{aligned}
&\mathcal{I}_4+\overline{\mathcal{I}_4} =\int_{D^2} \Big(\overline{\Psi_c(x_{*0})}\Psi_c(x_{*i_*})\overline{\Psi_i(x_{*1})}\Psi_i(x_{*(1-i_*)})-\overline{\Psi_c(x_{*0})}\Psi_i(x_{*i_*})\overline{\Psi_c(x_{*1})}\Psi_i(x_{*(1-i_*)})\\
&+{\Psi_c(x_{*(1-i_*)})}\overline{\Psi_c(x_{*1})}\Psi_i(x_{*i_*})\overline{\Psi_i(x_{*0})}-{\Psi_c(x_{*(1-i_*)})}\overline{\Psi_i(x_{*1})}{\Psi_c(x_{*i_*})}\overline{\Psi_i(x_{*0})}\big)dx_{*1}dx_{*0}\\
&=\int_{D^2} \big(\overline{\Psi_c(x_{*0})}\Psi_i(x_{*(1-i_*)})-\overline{\Psi_i(x_{*0})}\Psi_c(x_{*(1-i_*)})\big)\big(\overline{\Psi_i(x_{*1})}\Psi_c(x_{*i_*})-\overline{\Psi_c(x_{*1})}\Psi_i(x_{*i_*})\Big)dx_{*1}dx_{*0}.
\end{aligned}
\end{align*}
This yields
\begin{align*}
\begin{aligned}
&\int_{D^2} \big(\overline{\Psi_c(x_{*0})}\Psi_i(x_{*(1-i_*)})-\overline{\Psi_i(x_{*0})}\Psi_c(x_{*(1-i_*)})\big)\big(\overline{\Psi_i(x_{*1})}\Psi_c(x_{*i_*})-\overline{\Psi_c(x_{*1})}\Psi_i(x_{*i_*})\big)dx_{*1}dx_{*0}\\
& \hspace{0.2cm} =\int_{(D_A)^2} \left(\int_{D_B}\big(\overline{\Psi_c(x_{*0})}\Psi_i(x_{*(1-i_*)})-\overline{\Psi_i(x_{*0})}\Psi_c(x_{*(1-i_*)})\big)dx_{B0}\right)\\
&\hspace{1cm}\times\left(\int_{D_B}\big(\overline{\Psi_i(x_{*1})}\Psi_c(x_{*i_*})-\overline{\Psi_c(x_{*1})}\Psi_i(x_{*i_*})\big)dx_{B1}\right)dx_{A0}dx_{A1}\\
&\hspace{0.2cm} =\int_{(D_A)^2}\left|\int_{D_B}\big(\overline{\Psi_c(x_{*0})}\Psi_i(x_{*(1-i_*)})-\overline{\Psi_i(x_{*0})}\Psi_c(x_{*(1-i_*)})\big)dx_{B0}\right|^2dx_{A0}dx_{A1}.
\end{aligned}
\end{align*}
Finally, one has 
\[ \frac{dR^2}{dt} =\sum_{i_*}\frac{\kappa_{i_*}}{N}\sum_{i=1}^N\int_{(D_A)^2} \left|\int_{D_B}\big(\overline{\Psi_c(x_{*0})}\Psi_i(x_{*(1-i_*)})-\overline{\Psi_i(x_{*0})}\Psi_c(x_{*(1-i_*)})\big)dx_{B0}\right|^2dx_{A0}dx_{A1}. \]
\end{proof}
\begin{lemma}
Let $\{\Psi_j\}$ be a global smooth solution to \eqref{E-2}. Then for each $i_*$ with $\kappa_{i_*} > 0$ and $j = 1, \cdots, N$, one has 
\[ \lim_{t \to \infty} \int_{(D_A)^2} \left|\int_{D_B}\big(\overline{\Psi_c(x_{*0})}\Psi_j(x_{*(1-i_*)})-\overline{\Psi_j(x_{*0})}\Psi_c(x_{*(1-i_*)})\big)dx_{B0}\right|^2dx_{A0}dx_{A1} = 0. \]
\end{lemma}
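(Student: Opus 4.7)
The plan is to mimic the template established for the SLM model at the end of Section 4 and reduce the statement to Barbalat's lemma (Lemma 3.2) applied to $R^2(t)$. First I would observe that Lemma 5.1 exhibits $\frac{dR^2}{dt}$ as a sum of non-negative quantities, so $R^2$ is non-decreasing in $t$; combined with the uniform bound $R(t) \leq 1$, which follows from the conservation $\|\Psi_j(t)\|_2 = 1$ (Proposition 5.1) and the triangle inequality for $\Psi_c$, this forces $R^2(t) \to R_\infty^2 \in [R^{in,2},1]$ as $t \to \infty$. Integrating in time then yields $\int_0^\infty \frac{dR^2}{ds}\,ds = R_\infty^2 - R^{in,2} < \infty$, so the improper time integral of each non-negative summand appearing in Lemma 5.1 is finite.

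Second, I would invoke Barbalat's lemma on $f(t) := \frac{dR^2}{dt}$: provided $f$ is uniformly continuous on $[0,\infty)$, one concludes $f(t) \to 0$ as $t\to\infty$. Since $f$ is a sum, with strictly positive weights $\kappa_{i_*}$ (for those $i_*$ under consideration), of the non-negative quantities $f_{i_*,j}(t) := \int_{(D_A)^2}\bigl|\int_{D_B}(\overline{\Psi_c(x_{*0})}\Psi_j(x_{*(1-i_*)})-\overline{\Psi_j(x_{*0})}\Psi_c(x_{*(1-i_*)}))\,dx_{B0}\bigr|^2\,dx_{A0}dx_{A1}$, each admissible summand must independently tend to zero, which is precisely the assertion for every $j$.

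The main obstacle is therefore the uniform continuity of $\frac{dR^2}{dt}$, which I would obtain by showing $\frac{d^2R^2}{dt^2}$ is uniformly bounded on $[0,\infty)$, mirroring the final lemma of Section 4. Writing $\mathrm{i}\partial_t\Psi_i = {\mathcal H}\Psi_i + (\text{coupling terms})$ and differentiating the explicit formula for $\frac{dR^2}{dt}$ from Lemma 5.1 via the Leibniz rule, the resulting expression splits into: (i) mean-field products of sextic/octic type in the wave functions integrated over product tori, which are controlled by Hölder's inequality and the conserved $L^2$-norms $\|\Psi_j\|_2 = 1$; and (ii) terms in which the Hamiltonian ${\mathcal H} = -\tfrac{1}{2}\sum_k \Delta_{x_k} + \sum_k V_k(x_k)$ acts on a single wave-function factor. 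The potential contributions are bounded since each continuous $V_k$ is bounded on the compact torus $\bbt^d$, while the Laplacian contributions, after integration by parts, become integrals of total divergences over the closed manifold $\bbt^{md}$ and therefore vanish by the divergence theorem, exactly as in the $m = 2$ calculation at the end of Section 4.

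The delicate technical point I anticipate is the bookkeeping required when grouping the operators $\Delta_{x_{n0}}$ with $n \in B$: inside $|\cdot|^2$ the variable $x_{B0}$ is shared between $\Psi_c(x_{*0})$ and $\Psi_j(x_{*(1-i_*)})$ under the inner integral $\int_{D_B}dx_{B0}$, so the Leibniz rule produces cross terms that must be re-assembled to expose the divergence structure on each coordinate block $(x_{A0},x_{A1},x_{B0},x_{B1})$ separately; once this regrouping is carried out, the uniform bound on $\frac{d^2R^2}{dt^2}$ follows, and Barbalat's lemma closes the argument.
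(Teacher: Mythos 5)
Your proposal is correct and follows essentially the same route as the paper: monotonicity and boundedness of $R^2$ from Lemma \ref{L5.1}, uniform boundedness of $\frac{d^2R^2}{dt^2}$ to get uniform continuity of $\frac{dR^2}{dt}$, Barbalat's lemma, and then non-negativity of each summand to isolate the individual terms. In fact your treatment of the Hamiltonian contributions (integration by parts turning the Laplacian terms into vanishing total divergences on $\bbt^{md}$, with the potential terms bounded by continuity on the compact torus) supplies exactly the detail the paper defers to its analogous $m=2$ computation at the end of Section \ref{sec:4}.
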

\begin{proof}
It follows from the boundedness of $\frac{d}{dt}\Psi_i$ and the equality
\begin{align*}
\begin{aligned}
&\frac{d}{dt}\langle{\Psi_c|\Psi_c}\rangle \\
& =\sum_{i_*}\frac{\kappa_{i_*}}{N}\sum_{i=1}^N\int_{(D_A)^2} \left|\int_{D_B}\big(\overline{\Psi_c(x_{*0})}\Psi_i(x_{*(1-i_*)})-\overline{\Psi_i(x_{*0})}\Psi_c(x_{*(1-i_*)})\big)dx_{B0}\right|^2dx_{A0}dx_{A1},
\end{aligned}
\end{align*}
that we can easily obtain the boundedness of second derivative of $\Psi_i$ for all $i=1, 2, \cdots, N$. Again, by Barbalat's lemma, we have following theorem.
\end{proof}
\begin{remark}
(i)~Since each terms are nonnegative, for all $i_*\in\{0, 1\}^m$ and $i=1, 2, \cdots, N$, one has
\[
\lim_{t \to \infty} \kappa_{i_*}\int_{(D_A)^2} \left|\int_{D_B}\big(\overline{\Psi_c(x_{*0})}\Psi_i(x_{*(1-i_*)})-\overline{\Psi_i(x_{*0})}\Psi_c(x_{*(1-i_*)})\big)dx_{B0}\right|^2dx_{A0}dx_{A1} = 0.
\]
(ii)~Furthermore, if $\kappa_{i_*}>0$, for all $i_*$ and $i=1, 2, \cdots, N$, we have 
\begin{align}\label{E-14}
\lim_{t \to \infty} \int_{(D_A)^2} \left|\int_{D_B}\big(\overline{\Psi_c(x_{*0})}\Psi_i(x_{*(1-i_*)})-\overline{\Psi_i(x_{*0})}\Psi_c(x_{*(1-i_*)})\big)dx_{B0}\right|^2dx_{A0}dx_{A1} = 0.
\end{align}
\end{remark}

\vspace{0.5cm}

\begin{theorem}
Let $\{\Psi_i\}$ be a global smooth solution to \eqref{E-2} satisfying the following conditions:
\[
\kappa_{i_*}\geq0,\quad \mbox{for}~ i_* \neq (0, \cdots, 0), \qquad \kappa_{00\cdots0}>0.
\]
Then, either complete aggregation or bi-polar state occurs asymptotically.
\end{theorem}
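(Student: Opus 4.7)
The plan is to extract pointwise-in-$j$ information from \eqref{E-14} by specializing to the distinguished index $i_{*}=(0,0,\ldots,0)$, which is admissible because $\kappa_{00\cdots 0}>0$. For this choice $\mathcal{I}_{1}=\varnothing$, so $D_{A}=D$, the inner integration over $D_{B}$ is trivial, and $x_{*(1-i_{*})}=x_{*1}$, which collapses \eqref{E-14} to
\[
\lim_{t\to\infty}\int_{D\times D}\bigl|\overline{\Psi_{c}(x_{*0})}\Psi_{j}(x_{*1})-\overline{\Psi_{j}(x_{*0})}\Psi_{c}(x_{*1})\bigr|^{2}\,dx_{*0}\,dx_{*1}=0
\]
for every $j=1,\ldots,N$.

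I would then expand the squared modulus via Fubini, reusing the bookkeeping from the proof of Lemma 4.1; combined with the conservation $\|\Psi_{j}\|_{2}=1$ from Proposition 5.1 and $\|\Psi_{c}\|_{2}=R$, the double integral evaluates to $2R^{2}-2|\langle\Psi_{c}|\Psi_{j}\rangle|^{2}$. Hence the limit forces asymptotic saturation of Cauchy--Schwarz, $|\langle\Psi_{c}(t)|\Psi_{j}(t)\rangle|\to R(t)$ for every $j$. Since Lemma 5.1 gives $\frac{d}{dt}R^{2}\ge 0$ and $R\le 1$, the monotone limit $R^{\infty}:=\lim_{t\to\infty}R(t)\in[0,1]$ exists, and two cases naturally arise. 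If $R^{\infty}=0$ then $\Psi_{c}\to 0$ in $L^{2}$, i.e.\ the wave functions cancel on average, which is precisely the bi-polar (mean-cancelled) regime. If $R^{\infty}>0$ then for all sufficiently large $t$ one has $R(t)\ge R^{\infty}/2$, and saturation of Cauchy--Schwarz together with $\|\Psi_{j}\|_{2}=1$ forces
\[
\Bigl\|\Psi_{j}(t)-\alpha_{j}(t)\frac{\Psi_{c}(t)}{R(t)}\Bigr\|_{2}\to 0,\qquad \alpha_{j}(t):=\frac{\langle\Psi_{c}(t)|\Psi_{j}(t)\rangle}{R(t)},\ |\alpha_{j}(t)|\to 1,
\]
so every $\Psi_{j}$ asymptotically aligns with the normalized mean-field up to a unit phase $\alpha_{j}$, and the identity $\frac{1}{N}\sum_{j}\alpha_{j}(t)\to R^{\infty}$ at the extremal value $R^{\infty}=1$ forces $\alpha_{j}\equiv 1$ (complete aggregation), while any strictly smaller value singles out a non-trivial phase distribution in the bi-polar regime.

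The main obstacle is the last step: passing from ``unimodular asymptotic phases $\alpha_{j}$ averaging to $R^{\infty}$'' to the clean dichotomy of complete aggregation versus a genuine two-cluster bi-polar configuration, because a priori any unit-modulus sequence with mean $R^{\infty}$ is compatible with the soft conclusions above. To close this gap I would differentiate $\langle\Psi_{i}|\Psi_{j}\rangle$ along \eqref{E-2} in the spirit of Lemma 4.1(ii) and Proposition 3.3, exploit hermiticity of $\mathcal{H}$ to kill the free-flow contribution, and invoke Barbalat's lemma a second time on the non-increasing functional $\sum_{i,j}\|\Psi_{i}-\Psi_{j}\|_{2}^{2}$ to pin the asymptotic relative phases $\alpha_{i}\overline{\alpha}_{j}$ to the values $\{+1,-1\}$, which is what one means by the bi-polar (antipodal two-cluster) state.
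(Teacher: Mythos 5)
Your opening moves coincide with the paper's: both arguments specialize \eqref{E-14} to the index $i_*=(0,\dots,0)$ (admissible since $\kappa_{00\cdots0}>0$), for which $D_A=D$ and the inner $D_B$-integral collapses, and both then expand the resulting double integral over $D\times D$. But your expansion contains an algebra error, and the error sits exactly where the theorem lives. Writing $z=\langle\Psi_c|\Psi_j\rangle$, the cross terms in $|\overline{\Psi_c(x_{*0})}\Psi_j(x_{*1})-\overline{\Psi_j(x_{*0})}\Psi_c(x_{*1})|^2$ integrate to $z^2+\bar z^2=2\,\mathrm{Re}(z^2)$, not to $2|z|^2$: the two factors $\overline{\Psi_c(x_{*0})}\Psi_j(x_{*0})$ and $\Psi_j(x_{*1})\overline{\Psi_c(x_{*1})}$ \emph{each} integrate to $z$, so their product is $z^2$ rather than $z\bar z$. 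Hence the double integral equals
\[
2R^2\|\Psi_j\|_2^2-2\,\mathrm{Re}(z^2)=2\bigl(R^2\|\Psi_j\|_2^2-|z|^2\bigr)+4\bigl(\mathrm{Im}\,z\bigr)^2,
\]
a sum of two nonnegative terms, so its vanishing forces \emph{both} $|z|\to R$ (the Cauchy--Schwarz saturation you obtained) \emph{and} $\mathrm{Im}\,\langle\Psi_c|\Psi_j\rangle\to 0$ (which you lost). The second piece is the whole point: it makes the asymptotic alignment coefficients $\lambda_j$ real, and since $|\lambda_j|=R\ge R(0)>0$ for every $j$, the ratios $a_j=\lambda_j/\lambda_1$ are real and unimodular, i.e.\ $a_j=\pm1$. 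That is precisely the dichotomy ``complete aggregation or bi-polar state,'' and it is how the paper closes the argument.

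Because you dropped that term, you are left — as you candidly note — with unimodular phases $\alpha_j$ whose average tends to $R^\infty$, which is compatible with arbitrary phase distributions and does not yield the dichotomy except at the extremal value $R^\infty=1$. Your proposed repair (a second Barbalat argument on $\sum_{i,j}\|\Psi_i-\Psi_j\|_2^2$) is not carried out and would not obviously help: with $\|\Psi_j\|_2\equiv1$ that functional equals $2N^2(1-R^2)$, so it carries no information beyond $R$ itself, which you already control. Separately, your $R^\infty=0$ branch (``mean-cancelled regime'') is strictly weaker than a bi-polar state in the paper's sense, namely an antipodal two-cluster configuration as in Lemma \ref{L5.3}. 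The fix is simply to redo the expansion correctly; the extra $4(\mathrm{Im}\,\langle\Psi_c|\Psi_j\rangle)^2$ term supplies the missing rigidity and the rest of your outline then goes through along the paper's lines.
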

\begin{proof}
\noindent Since $\kappa_{00\cdots0}>0$, without loss of generality, we may set
 \[ i_*\neq(0, 0, \cdots, 0) \quad \mbox{in \eqref{E-14}.} \]
 Then, we have the term involving with $\kappa_{00\cdots0}$:
\begin{align*}
&\int_{D^2}\left|\big(\overline{\Psi_c(x_*)}\Psi_i(y_*)-\overline{\Psi_i(x_*)}\Psi_c(y_*)\big)\right|^2dx_*dy_*\\
& \hspace{1cm} =\int_{D^2}\big(\overline{\Psi_c(x_*)}\Psi_i(y_*)-\overline{\Psi_i(x_*)}\Psi_c(y_*)\big)\big({\Psi_c(x_*)}\overline{\Psi_i(y_*)}-{\Psi_i(x_*)}\overline{\Psi_c(y_*)}\big)dx_*dy_*\\
&\hspace{1cm} =2\|\Psi_c\|^2_2\cdot\|\Psi_i\|_2^2-\langle{\Psi_c, \Psi_i}\rangle_F^2-\langle{\Psi_i, \Psi_c}\rangle_F^2\\
&\hspace{1cm} =2\left(\|\Psi_c\|^2_2\cdot\|\Psi_i\|_2^2-|\langle{\Psi_c, \Psi_i}\rangle|^2\right)+4\mathrm{Im}(\langle{\Psi_c, \Psi_i}\rangle^2).
\end{align*}
This yields
\[ \lim_{t \to \infty} \Big( \|\Psi_c\|^2_2\cdot\|\Psi_i\|_2^2-|\langle{\Psi_c, \Psi_i}\rangle|^2 \Big) = 0,\quad \lim_{t \to \infty} \mathrm{Im}(\langle{\Psi_c, \Psi_i}\rangle^2) =0.
\]
So we have
\begin{align*}
\begin{aligned}
\|\Psi_c-\langle{\Psi_c, \Psi_i}\rangle \Psi_i\|^2_2&=\|\Psi_c\|_2^2+|\langle{\Psi_c, \Psi_i}\rangle|^2-\langle{\Psi_c, \Psi_i}\rangle^2-\langle{\Psi_i, \Psi_c}\rangle^2\\
&=\|\Psi_c\|^2_2\cdot\|\Psi_i\|_2^2-|\langle{\Psi_c, \Psi_i}\rangle|^2+2\mathrm{Im}(\langle{\Psi_c, \Psi_i}\rangle^2)\rightarrow 0,
\end{aligned}
\end{align*}
as time goes to infinity. Hence, we know that there exists complex scalar function $\lambda_i(t)$ such that 
\begin{align}\label{E-15}
\|\Psi_c(t)-\lambda_i(t)\Psi_i(t)\|_2^2\rightarrow0.
\end{align}
So we know
\begin{align*}
\|\Psi_c-\langle{\Psi_c, \Psi_i}\rangle \Psi_i\|_2&=\|(\Psi_c-\lambda_i\Psi_i)+\lambda_i\Psi_i-\langle{(\Psi_c-\lambda_i\Psi_i)+\lambda_i\Psi_i, \Psi_i}\rangle\Psi_i\|_2\\
&\geq\big| \|\Psi_c-\lambda_i\Psi_i-\langle \Psi_c-\lambda_i\Psi_i, \Psi_i\rangle \Psi_i\|_2-\|\lambda_i\Psi_i-\bar{\lambda}_i\Psi_i\|_2\big|.
\end{align*}
On the other hand, it follows from \eqref{E-15} that
\[ \lim_{t \to \infty} \|\lambda_i \Psi_i-\bar{\lambda_i}\Psi_i\|_2 = 0.
\]
That means we can set $\lambda_i$ be real number. Also from $\|\Psi_c\|_2=|\lambda_i|$ and $\|\Psi_c\|_2$ is nondecreasing, we can set
\[
|\lambda_i(t)|\geq \|\Psi_c(0)\|_2.
\]
From the triangle inequality
\[
\|\lambda_1\Psi_1-\lambda_i\Psi_i\|_2=\Psi_1\|(\Psi_c-\lambda_i\Psi_i)-(\Psi_c-\lambda_1\Psi_1)\|_2\leq \|\Psi_c-\lambda_i\Psi_i\|_2+\|\Psi_c-\lambda_1\Psi_1\|_2.
\]
If we set $a_i=\lambda_i/\lambda_1$, we have
\[
\Psi_i-a_i\Psi_1\rightarrow0.
\]
Since $|a_i|=1$ and $a_i$ are real numbers,
\[
a_i=\pm1.
\]
This completes the proof.
\end{proof}
\begin{lemma} \label{L5.3}
Let $\{ \Psi_i \}$ be an ensemble with
\[
\Psi_1=\cdots=\Psi_n=\Psi^\infty, \quad \Psi_{n+1}=\cdots=\Psi_N=-\Psi^{\infty},
\]
where $0\leq n\leq N/2$ and $\|\Psi^\infty\|_2=1$. Then we have
\[
\|\Psi_c\|_2=\left|1-\frac{2n}{N}\right|.
\]
\end{lemma}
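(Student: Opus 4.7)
The plan is a direct calculation from the definition of the average wave function $\Psi_c = \frac{1}{N}\sum_{i=1}^N \Psi_i$ together with the homogeneity of the $L^2$-norm, so there is essentially no substantive obstacle. First I would partition the sum according to the hypothesized structure of the ensemble: among the $N$ wave functions, exactly $n$ equal $\Psi^\infty$ and the remaining $N-n$ equal $-\Psi^\infty$. Consequently,
\[
\sum_{i=1}^{N} \Psi_i = n \Psi^\infty + (N-n)(-\Psi^\infty) = (2n-N)\Psi^\infty.
\]

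Dividing by $N$ gives the explicit formula $\Psi_c = \frac{2n-N}{N}\Psi^\infty$. Taking the $L^2(\mathbb{T}^{md})$-norm and pulling out the (real) scalar factor yields
\[
\|\Psi_c\|_2 = \frac{|2n-N|}{N}\,\|\Psi^\infty\|_2 = \left|1 - \frac{2n}{N}\right|,
\]
where in the last equality I use the normalization hypothesis $\|\Psi^\infty\|_2 = 1$. The assumption $0 \leq n \leq N/2$ only serves to fix the sign of $1 - 2n/N$ (it is nonnegative), but it is not needed to establish the stated identity because of the absolute value.

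The only mild care needed in the write-up is notational: the ambient $L^2$-norm is the one on $\mathbb{T}^{md}$ used throughout Section~\ref{sec:5}, and the homogeneity $\|c\Psi\|_2 = |c|\,\|\Psi\|_2$ for a complex scalar $c$ applies even though $c = (2n-N)/N$ is real. No tools beyond the definition of $\Psi_c$ and elementary properties of the norm are required, and in particular no appeal to the dynamics \eqref{E-2} is needed; the lemma is purely a statement about the static configuration that will be used later to identify the limiting value of $\|\Psi_c\|_2$ in the bi-polar equilibrium produced by the preceding theorem.
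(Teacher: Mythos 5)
Your proof is correct and follows essentially the same route as the paper: compute $\Psi_c$ explicitly as a scalar multiple of $\Psi^\infty$ and take the norm. (Incidentally, your scalar $(2n-N)/N$ has the correct sign, whereas the paper writes $(N-2n)/N$; this is immaterial for the norm since $0\leq n\leq N/2$ makes $|1-2n/N|=1-2n/N$ either way.)
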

\begin{proof} By direct calculation, one has
\[
\Psi_c=\left(\frac{N-2n}{N}\right)\Psi^{\infty}.
\]
Thus, we have
\[
\|\Psi_c\|_2=1-\frac{2n}{N}
\]
\end{proof}
\begin{remark}
If $\|\Psi_c(0)\|_F>1-\frac{2}{N}$, then a bi-polar state is impossible.
\end{remark}

From above remark, we have following theorem.

\begin{theorem}
Suppose that the Hamiltonian, coupling strengths and initial data satisfy
\[ {\mathcal H} = 0, \quad \kappa_{i_*}\geq0,\quad \forall i_*\neq(0, 0, \cdots, 0), \quad \kappa_{00\cdots0}>0,\qquad \|\Psi_c(0)\|_2 >1-\frac{2}{N}, \]
and let $\{\Psi_i\}$ be a global smooth solution to \eqref{E-2}. Then the complete aggregation occurs asymptotically.
\end{theorem}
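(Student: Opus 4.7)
The plan is to combine the dichotomy already established in the previous theorem (asymptotically, either complete aggregation or a bi-polar state arises) with the monotonicity of the order parameter $R(t) = \|\Psi_c(t)\|_2$ and the explicit value of $R$ at a bi-polar state computed in Lemma \ref{L5.3}. Since $\mathcal{H} = 0$ and $\kappa_{00\cdots 0} > 0$, the bulk of the work is already done; what remains is essentially a pigeon-hole argument on the admissible limiting configurations.

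First I would recall, directly from Lemma \ref{L5.1} applied under the assumption $\kappa_{00\cdots 0} > 0$, that $\frac{d}{dt}R^2 \geq 0$, so that $R(t)$ is nondecreasing in $t$. Combined with the hypothesis $R(0) = \|\Psi_c(0)\|_2 > 1-\frac{2}{N}$, this gives the uniform lower bound
\[
R(t) \;>\; 1 - \frac{2}{N}, \qquad t \geq 0,
\]
which in particular persists in the limit $t \to \infty$.

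Next I would invoke the previous theorem to conclude that the ensemble $\{\Psi_i(t)\}$ converges asymptotically either to complete aggregation (all $\Psi_i \to \Psi^\infty$) or to a bi-polar state of the form described at the beginning of Section \ref{sec:5} and quantified in Lemma \ref{L5.3}: that is, there exist integers $0 \leq n \leq N/2$ and a unit wave function $\Psi^\infty$ with $n$ particles converging to $-\Psi^\infty$ and $N-n$ to $\Psi^\infty$. By Lemma \ref{L5.3}, the limiting order parameter is then
\[
R(\infty) \;=\; \Big| 1 - \tfrac{2n}{N} \Big|.
\]
For any genuine bi-polar state we have $n \geq 1$, which forces $R(\infty) \leq 1 - \frac{2}{N}$. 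This directly contradicts the lower bound $R(\infty) \geq R(0) > 1 - \frac{2}{N}$ obtained in the first step. Hence the bi-polar alternative is ruled out, and complete aggregation must occur.

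The proof is therefore almost a one-liner built on top of existing machinery; the only genuinely delicate point is to ensure that the dichotomy in the previous theorem is being used correctly, namely that the asymptotic configuration is literally of the form $\Psi_i \to \pm \Psi^\infty$ (so that Lemma \ref{L5.3} applies to compute $R(\infty)$), rather than merely an approximate bi-polar pattern along subsequences. This is what I would expect to be the main obstacle: one has to check that the convergence established in the proof of the previous theorem (through $\Psi_i - a_i \Psi_1 \to 0$ with $a_i = \pm 1$) really yields a true limit $\Psi^\infty$, or at least that $R(t)$ converges so that the monotonicity bound transfers to the limit. Once this continuity-type issue is settled, the contrast between the upper bound $1 - 2/N$ on $R(\infty)$ at any bi-polar state and the strict lower bound coming from the monotone order parameter closes the argument.
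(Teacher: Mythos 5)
Your proposal is correct and follows essentially the same route as the paper: monotonicity of $R$ from Lemma \ref{L5.1} gives $R(t) \geq R(0) > 1 - \tfrac{2}{N}$, and the bi-polar alternative from the preceding theorem is ruled out by the value $|1 - \tfrac{2n}{N}| \leq 1 - \tfrac{2}{N}$ computed in Lemma \ref{L5.3}. The continuity-type caveat you raise (whether the dichotomy delivers a genuine limit so that $R(\infty)$ is well defined) is a fair observation, but the paper's own proof takes exactly the same step without further justification.
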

\begin{proof}
We use \eqref{L5.1} and assumption to see 
\begin{equation} \label{D-22}
 R(t) \geq R(0) >  1-\frac{2}{N}, \quad t \geq 0. 
\end{equation} 
Suppose that a bi-polar state emerges: for some $n \leq [N/2]$, one has
\begin{align*}
\begin{aligned} 
& \lim_{t \to \infty} \| \Psi_j(t) -\Psi \|_2 = 0, \quad  1 \leq j \leq n, \\
&  \lim_{t \to \infty} \| \Psi_j(t) -(-\Psi) \|_2 = 0, \quad  n+1 \leq j \leq N.
\end{aligned}
\end{align*}
Then, it follows from Lemma \ref{L5.3} that 
\[  \lim_{t \to \infty} R(t) =1-\frac{2n}{N},   \]
which is clearly contradictory to \eqref{D-22}.  Hence, we have the complete state aggregation.
\end{proof}

\section{Conclusion} \label{sec:6}
\setcounter{equation}{0}
In this paper, we have proposed an infinite-dimensional Schr\"{o}dinger-Lohe hierarchy consisting of the Schr\"{o}dinger-Lohe model, the Schr\"{o}dinger-Lohe matrix model and Schr\"{o}dinger-Lohe tensor model. In a series of recent papers, the authors established the Lohe hierarchy consisting of the Kuramtoo model, the Lohe sphere model, the Lohe matrix model and the Lohe tensor model. Prior to this work, the relation between the Schr\"{o}dinger-Lohe model and the Lohe matrix model was a kind of mystery that remained unsolved in last ten years.

In this work, we have shown that the infinite-dimensioal analog of the complex Lohe sphere model appears as a coefficient system of the  Schr\"{o}dinger-Lohe model. Thanks to this explicit connection between the complex Lohe sphere model and the  Schr\"{o}dinger-Lohe  model, we establish an infinite-dimensional Schr\"{o}dinger-Lohe hierarchy (see the diagram below): 
\begin{center}
\begin{tikzpicture}
  \matrix (m) [matrix of math nodes,row sep=3em,column sep=2em,minimum width=2em]
  {
     \mbox{Complex Lohe sphere} & \mbox{Generalized Lohe Matrix} &\mbox{Lohe Tensor}\\
     \mbox{Schr\"{o}dinger Lohe} & \mbox{Schr\"{o}dinger Lohe Matrix}&\mbox{Schr\"{o}dinger Lohe Tensor} \\};
  \path[-stealth]
    (m-1-1) edge node [right] {Quantum lifting} (m-2-1)
            edge [right] node [below] {} (m-1-2)
    (m-2-1) edge node [below] {}
            node [above] {} (m-2-2)
    (m-1-2) edge node [right] {Quantum lifting} (m-2-2)
    (m-1-2) edge node [below] {} (m-1-3)
    (m-2-2) edge node [below] {} (m-2-3)
    (m-1-3) edge node [right] {Quantum lifting} (m-2-3);
\end{tikzpicture}
\end{center}
There are many unresolved issues related to this work. For example, in this paper, we considered the homogeneous ensemble with the same free flow. Thus, analysis on the emergent dynamics of heterogeneous ensemble is still far from complete, for example, we do not have a good analysis on the complete aggregation of the Schr\"{o}dinger-Lohe model except a weak result on the practical aggregation. These issues will be left for a future work.

%

\end{document}